\documentclass[12pt,reqno]{amsart}

\newcommand\version{October 26, 2022}


\usepackage{amsmath,amsfonts,amsthm,amssymb,amsxtra}
\usepackage{bbm} 
\usepackage{mathabx} 



\setlength{\voffset}{-.7truein}
\setlength{\textheight}{8.8truein}
\setlength{\textwidth}{6.05truein}
\setlength{\hoffset}{-.7truein}


\newtheorem{theorem}{Theorem}
\newtheorem{proposition}[theorem]{Proposition}
\newtheorem{lemma}[theorem]{Lemma}
\newtheorem{corollary}[theorem]{Corollary}

\theoremstyle{definition}

\theoremstyle{remark}

\newtheorem{remark}[theorem]{Remark}




\newcommand{\1}{\mathbbm{1}}

\newcommand{\C}{\mathbb{C}}

\newcommand{\D}{\mathbb{D}}
\renewcommand{\epsilon}{\varepsilon}

\newcommand{\N}{\mathbb{N}}

\renewcommand{\phi}{\varphi}
\newcommand{\R}{\mathbb{R}}

\newcommand{\Sph}{\mathbb{S}}

\newcommand{\Z}{\mathbb{Z}}

\DeclareMathOperator{\im}{Im}

\DeclareMathOperator{\Tr}{Tr}
\DeclareMathOperator{\tr}{Tr}


\begin{document}

\title[Sharp inequalities for coherent states --- \version]{Sharp inequalities for coherent states\\ and their optimizers}

\author{Rupert L. Frank}
\address[Rupert L. Frank]{Mathe\-matisches Institut, Ludwig-Maximilans Universit\"at M\"unchen, The\-resienstr.~39, 80333 M\"unchen, Germany, and Munich Center for Quantum Science and Technology, Schel\-ling\-str.~4, 80799 M\"unchen, Germany, and Mathematics 253-37, Caltech, Pasa\-de\-na, CA 91125, USA}
\email{r.frank@lmu.de}

\renewcommand{\thefootnote}{${}$} \footnotetext{\copyright\, 2022 by the author. This paper may be reproduced, in its entirety, for non-commercial purposes.\\
	Partial support through US National Science Foundation grant DMS-1954995, as well as through the Deutsche Forschungsgemeinschaft (German Research Foundation) through Germany’s Excellence Strategy EXC-2111-390814868 is acknowledged.}

\dedicatory{To David Jerison, in admiration, on the occasion of his 70th birthday}

\begin{abstract}
	We are interested in sharp functional inequalities for the coherent state transform related to the Wehrl conjecture and its generalizations. This conjecture was settled by Lieb in the case of the Heisenberg group and then by Lieb and Solovej for SU(2) and by Kulikov for SU(1,1) and the affine group. In this paper, we give alternative proofs and characterize, for the first time, the optimizers in the general case. We also extend the recent Faber--Krahn-type inequality for Heisenberg coherent states, due to Nicola and Tilli, to the SU(2) and SU(1,1) cases. Finally, we prove a family of reverse H\"older inequalities for polynomials, conjectured by Bodmann. 
\end{abstract}

\maketitle


\section{Introduction and main results}\label{sec:intro}

Coherent states appear in various areas of pure and applied mathematics, including mathematical physics, signal and image processing, semiclassical and microlocal analysis. Some background can be found, for instance, in \cite{Pe,Si,Sc}. Here we are interested in sharp functional inequalities for coherent state transforms.

To motivate the questions we are interested in let us recall Wehrl's conjecture \cite{We} and its resolution by Lieb \cite{Li}. Following Schr\"odinger, Bargmann, Segal, Glauber and others we consider a certain family of normalized Gaussian functions $\psi_{p,q}\in L^2(\R)$, parametrized by $p,q\in\R$. Explicitly,
$$
\psi_{p,q}(x) := (\pi\hbar)^{-\frac14}\, e^{-\frac1{2\hbar}(x-q)^2+ \frac 1\hbar ipx}
\qquad\text{for all}\ x\in\R \,,
$$
where $\hbar>0$ is a fixed constant. For a nonnegative operator $\rho$ in $L^2(\R)$ with $\tr\rho=1$ one considers the function
$$
(p,q) \mapsto \langle \psi_{p,q}, \rho\psi_{p,q} \rangle \,,
$$
known as Husimi function, covariant symbol or lower symbol. Thus, to a quantum state $\rho$ in $L^2(\R)$ one associates a function defined on the classical phase space $\R^2$. Wehrl \cite{We} was interested in the entropy-like quantity
$$
- \iint_{\R\times\R} \langle \psi_{p,q}, \rho\psi_{p,q} \rangle \ln \langle \psi_{p,q}, \rho\psi_{p,q} \rangle \,dp\,dq \,,
$$
showed that it is positive and conjectured that its minimum value occurs when $\rho=|\psi_{p_0,q_0}\rangle\langle\psi_{p_0,q_0}|$ for some $p_0,q_0\in\R$. That this is indeed the case was shown in an celebrated paper by Lieb \cite{Li}. Lieb's proof was based on the sharp form of the Young and the Hausdorff--Young inequalities and showed more generally that, for power functions $\Psi(s)=s^r$ with $r\geq 1$, the quantity
\begin{equation}
	\label{eq:phiquantity}
	\iint_{\R\times\R} \Phi(\langle \psi_{p,q}, \rho\psi_{p,q} \rangle) \,dp\,dq
\end{equation}
is maximal for $\rho$ as above. The result for $\Phi(s)=s\ln s$ then follows by differentiating at $r=1$, noting that the value for $\Phi(s)=s$ is a constant independent of $\rho$.

In \cite{Ca} Carlen gave an alternative proof of Lieb's result, both for $\Phi(s)=s^r$, $r\geq 1$, and $\Phi(s)=s\ln s$, and characterized the cases of equality. He also extended the result to $\Phi(s)=-s^r$ with $0<r<1$, again including a characterization of cases of equality. Carlen's proof is based on the logarithmic Sobolev inequality and an identity for analytic functions. For yet another proof in the logarithmic case see \cite{Lu}. For an interesting recent generalization of Lieb's result see \cite{dP}.

In \cite{LiSo1}, Lieb and Solovej extended the earlier results and showed what they called the generalized Wehrl conjecture. Namely, for \emph{any} convex function $\Phi$ on $[0,1]$ the quantity \eqref{eq:phiquantity} is maximal if $\rho=|\psi_{p_0,q_0}\rangle\langle\psi_{p_0,q_0}|$ for some $p_0,q_0\in\R$. The Lieb--Solovej proof proceeds by a limiting argument, based on sharp inequalities for SU(2) coherent states discussed below. Because of the limiting process, it does not provide a characterization of the cases of equality. Carlen's analysis is based on differentiating the power function $\Phi(s)=s^r$ with respect to the exponent $r$ and using the logarithmic Sobolev inequality for the resulting quantity. We do not know how to adopt this method to deal with general convex functions $\Phi$.

Our first main result in this paper gives an alternative proof of the theorem of Lieb and Solovej and includes a new characterization of the cases of equality.

\begin{theorem}\label{heisen}
	Let $\Phi:[0,1]\to\R$ be convex. Then
	$$
	\sup\left\{ \iint_{\R\times\R} \Phi( |\langle \psi_{p,q}, \psi \rangle|^2 ) \,dp\,dq :\ \psi\in L^2(\R) \,,\ \|\psi\|_{L^2(\R)} = 1 \right\} = 2\pi\hbar \int_0^1 \Phi(s) \,\frac{ds}{s}
	$$
	and the supremum is attained for $\psi=e^{i\theta} \psi_{p_0,q_0}$ with some $p_0,q_0\in\R$, $\theta\in\R/2\pi\Z$. If $\Phi$ is not linear and if the supremum is finite, then it is attained \emph{only} for such $\psi$.
\end{theorem}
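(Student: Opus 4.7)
My plan is to reduce the theorem to a Faber--Krahn-type concentration inequality for the Husimi function $F(p,q) := |\langle \psi_{p,q},\psi\rangle|^2$ and then invoke Hardy--Littlewood--P\'olya majorization to handle arbitrary convex $\Phi$. The starting point is the Bargmann--Fock representation, under which $F(p,q) = |f(z)|^2 G(z)$, where $f$ is entire in a natural complex coordinate $z$ built from $(q,p)$ and $G(z)=e^{-|z|^2/\hbar}$ (up to conventions), and the $L^2$-normalization of $\psi$ becomes $\iint F\,dp\,dq = 2\pi\hbar$.

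First, I would evaluate the target supremum on the coherent state $\psi = \psi_{p_0,q_0}$. An elementary Gaussian computation gives
$$
F^*(p,q) = e^{-((p-p_0)^2+(q-q_0)^2)/(2\hbar)},
$$
and passing to polar coordinates with the substitution $s=e^{-r^2/(2\hbar)}$ yields $\iint \Phi(F^*)\,dp\,dq = 2\pi\hbar \int_0^1 \Phi(s)\,ds/s$, so the right-hand side of the theorem is attained by coherent states.

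Second, for the matching upper bound, I would invoke the recent Nicola--Tilli Faber--Krahn-type inequality: for every measurable $E\subset\R^2$ with $|E|<\infty$,
$$
\iint_E F\,dp\,dq \;\le\; \iint_{B_R} F^*\,dp\,dq, \qquad |B_R|=|E|,
$$
with $B_R$ the disk of area $|E|$ centered at the origin. Equivalently, the decreasing rearrangement $F^\#$ satisfies $\int_0^s F^\#(\tau)\,d\tau \le \int_0^s (F^*)^\#(\tau)\,d\tau$ for all $s>0$, while the total integrals agree. This is exactly the majorization relation $F\prec F^*$, and by the classical Hardy--Littlewood--P\'olya theorem (applied on the half-line) it implies $\iint \Phi(F)\,dp\,dq \le \iint \Phi(F^*)\,dp\,dq$ for every convex $\Phi\colon[0,1]\to\R$. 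The cases $\Phi(0)\ne 0$ are trivial since both sides are $\pm\infty$ (one uses here that $F>0$ a.e.\ on a set of infinite measure), so one may reduce to $\Phi(0)=0$.

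Third, for the equality characterization, I would exploit the sharp form of the Nicola--Tilli inequality: equality in $\iint_E F = \iint_{B_R} F^*$ for some $E$ of positive, finite measure forces the entire function $f$ to be of the form $c\,e^{\alpha z}$, i.e.\ $\psi$ is a phase-modulated coherent state. The hypothesis that $\Phi$ is nonlinear enters when tracing equality through the Hardy--Littlewood--P\'olya step: via the layer-cake representation of $\Phi$ one sees that any strict inequality in the rearrangement comparison on a super-level set $\{F>t\}$ of positive and finite measure produces a strict inequality in $\iint \Phi(F)\le \iint\Phi(F^*)$ unless $\Phi$ is linear on the range of $F^\#$, and finiteness of the supremum rules out the trivial case.

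The main obstacle is the equality case of the Nicola--Tilli Faber--Krahn inequality, which is not explicitly recorded in their paper and must be extracted by a rigidity argument. I expect this to go through the log-subharmonicity $\Delta \log|f|^2 \ge 0$ (with Dirac masses at the zeros of $f$) together with the structure of the isoperimetric-type inequality used to prove Faber--Krahn: saturation on a single super-level set of $F$ forces $f$ to have no zeros and $\log|f|^2 - |z|^2/\hbar$ to be a harmonic function of at most linear growth, hence the real part of an affine function, giving $f(z) = c\,e^{\alpha z}$ and translating back to $\psi = e^{i\theta}\psi_{p_0,q_0}$.
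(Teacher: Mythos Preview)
Your approach is correct and genuinely different from the paper's. You take the Nicola--Tilli Faber--Krahn inequality as a black box and lift to arbitrary convex $\Phi$ via Hardy--Littlewood--P\'olya majorization. The paper instead gives a self-contained proof: an isoperimetric monotonicity lemma for the distribution function $\mu(\kappa)=m(\{u_f>\kappa\})$ (essentially the analytic core of Nicola--Tilli, reproved) combined with Kulikov's Chebyshev-type inequality in place of your majorization step. In a later section the paper then deduces the Nicola--Tilli bound \emph{from} the theorem by exactly the majorization equivalence you invoke, so the two routes are dual. Yours is shorter once Nicola--Tilli is granted; the paper's is self-contained and, more to the point, transfers verbatim to the $\mathrm{SU}(2)$ and $\mathrm{SU}(1,1)$ settings, where no prior Faber--Krahn result was available to cite.

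Two corrections on the equality case. First, contrary to what you write, the equality characterization \emph{is} the main theorem of Nicola--Tilli, so no separate rigidity argument is needed. Second, your HLP equality step has a loose end: the clause ``unless $\Phi$ is linear on the range of $F^\#$'' leaves a case open, since $\Phi$ may well be linear on $[0,\|F\|_\infty]$ while nonlinear only above. The clean fix is to note that for non-coherent $\psi$ the Nicola--Tilli equality case gives strict majorization $\int_0^s F^\# < \int_0^s (F^*)^\#$ for \emph{every} $s>0$, hence $\iint(F-t)_+ < \iint(F^*-t)_+$ for every $t\in(0,1)$, and integrating against the measure $d\Phi''$ yields strict inequality whenever $\Phi''\not\equiv 0$ (some care is still needed when $\Phi'(0^+)=-\infty$). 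The paper's own equality argument is simpler and avoids this: it uses only the sup-norm characterization $\|u_f\|_\infty=1\iff f$ extremal, together with a strict monotonicity in the parameter $\|u_f\|_\infty$ that drops out of the Chebyshev step.
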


Note that the value of the double integral with $\psi=e^{i\theta} \psi_{p_0,q_0}$ does not depend on $p_0,q_0\in\R$, $\theta\in\R/2\pi\Z$. It may or may not be finite, depending on $\Phi$. For finiteness it is necessary that $\lim_{s\to 0^+}\Phi(s)=0$.

Under a slightly stronger assumption on $\Phi$, we can extend the characterization of cases of equality to density matrices.

\begin{corollary}\label{heisencor}
	Let $\Phi:[0,1]\to\R$ be convex. Then
	$$
	\sup\left\{ \iint_{\R\times\R} \Phi(\langle \psi_{p,q}, \rho\psi_{p,q} \rangle) \,dp\,dq :\ \rho\geq 0 \ \text{on}\ L^2(\R) \,,\ \Tr\rho= 1 \right\} = 2\pi\hbar \int_0^1 \Phi(s) \,\frac{ds}{s}
	$$
	and the supremum is attained for $\rho=|\psi_{p_0,q_0}\rangle\langle\psi_{p_0,q_0}|$ with some $p_0,q_0\in\R$. If $\Phi$ is strictly convex and if the supremum is finite, then it is attained \emph{only} for such $\rho$.
\end{corollary}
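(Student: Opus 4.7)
\textbf{Proof proposal for Corollary~\ref{heisencor}.} The plan is to reduce the density matrix problem to the pure state problem treated in Theorem~\ref{heisen}, using the spectral theorem and Jensen's inequality. Since $\rho \geq 0$ with $\Tr\rho = 1$, it is trace class and admits a spectral decomposition $\rho = \sum_n \lambda_n |\psi_n\rangle\langle\psi_n|$ with $\{\psi_n\}$ an orthonormal system and $\lambda_n \geq 0$, $\sum_n \lambda_n = 1$. Consequently
\[
\langle \psi_{p,q}, \rho\,\psi_{p,q}\rangle = \sum_n \lambda_n\, |\langle \psi_{p,q}, \psi_n\rangle|^2 \in [0,1]
\]
is a convex combination of numbers in $[0,1]$.

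Since $(\lambda_n)$ is a probability vector and $\Phi$ is convex on $[0,1]$, Jensen's inequality gives pointwise
\[
\Phi\bigl(\langle\psi_{p,q},\rho\,\psi_{p,q}\rangle\bigr) \,\leq\, \sum_n \lambda_n\, \Phi\bigl(|\langle\psi_{p,q},\psi_n\rangle|^2\bigr).
\]
Integrating over $\R^2$, interchanging sum and integral, and applying Theorem~\ref{heisen} to each $\psi_n$ individually yields
\[
\iint_{\R\times\R}\Phi\bigl(\langle\psi_{p,q},\rho\,\psi_{p,q}\rangle\bigr)\,dp\,dq \,\leq\, \sum_n \lambda_n \cdot 2\pi\hbar \int_0^1 \Phi(s)\,\frac{ds}{s} \,=\, 2\pi\hbar\int_0^1\Phi(s)\,\frac{ds}{s}.
\]
The bound is saturated by coherent-state projectors, again by Theorem~\ref{heisen}. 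To justify the sum-integral interchange when $\Phi$ changes sign, one decomposes $\Phi$ into its positive and negative parts, using that any convex function on $[0,1]$ is bounded and that the right-hand sum is known to be finite by assumption.

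For the equality characterization under strict convexity, suppose the finite supremum is attained by some $\rho$. Tracing through the chain, Jensen's inequality with strictly convex $\Phi$ forces the numbers $|\langle\psi_{p,q},\psi_n\rangle|^2$ (over those $n$ with $\lambda_n > 0$) to coincide for almost every $(p,q)$; simultaneously, Theorem~\ref{heisen} forces each such $\psi_n$ to have the form $e^{i\theta_n}\psi_{p_n,q_n}$. A short explicit computation shows $|\langle\psi_{p,q},\psi_{p',q'}\rangle|^2$ is a Gaussian in $(p,q)$ centered at $(p',q')$, so two such functions agree pointwise only if their centers coincide. Hence all $\psi_n$ with $\lambda_n>0$ must equal the same coherent state up to phase, contradicting orthonormality unless exactly one eigenvalue is nonzero. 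Therefore $\rho$ is a rank-one projector onto a coherent state.

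I expect the main technical subtlety to lie not in the inequality (which collapses cleanly via Jensen plus the spectral theorem) but in the equality analysis: one must carefully combine the pointwise Jensen rigidity with the pure-state rigidity from Theorem~\ref{heisen}, and then exploit the geometric fact that distinct coherent states are distinguishable by the moduli of their mutual overlaps in order to rule out higher-rank optimizers.
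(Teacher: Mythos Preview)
Your proposal is correct and follows essentially the same route as the paper's proof of Corollary~\ref{analyticcor} (which, via the Bargmann transform in Section~\ref{sec:proofs}, yields Corollary~\ref{heisencor}): spectral decomposition, pointwise Jensen, then the pure-state bound, with equality forcing each eigenvector to be a coherent state and all overlap functions to coincide. The only cosmetic difference is in the final step of the equality analysis---you use the explicit Gaussian form of $|\langle\psi_{p,q},\psi_{p',q'}\rangle|$ to identify the centers, whereas the paper works on the Fock-space side and invokes the maximum modulus principle to conclude $f_n=e^{i\theta_n}f_1$; both arguments are equally valid and equally short.
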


\begin{remark}
	The statement and proof of Theorem \ref{heisen} and Corollary \ref{heisencor} extend, with minor changes, to the case of higher dimensions. We omit the details.
\end{remark}

Coherent states are often closely related to representations of an underlying Lie group. The coherent states discussed so far are related to the Heisenberg group. In his paper containing the proof of Wehrl's conjecture, Lieb conjectured that the analogue of Wehrl's conjecture also holds for Bloch coherent states, that is, for a family of coherent states related to SU(2). After some partial results in \cite{Sc0,Bo}, this conjecture was finally solved by Lieb and Solovej in \cite{LiSo1}; see also \cite{LiSo2} for a partially alternate proof. Again, they prove a generalized version of Lieb's conjecture involving general convex functions $\Phi$. However, they employ a limiting argument and therefore their paper does not characterize the cases of equality. Our second main result settles this open question by showing that, indeed, equality is only attained by rank one projections onto a coherent state.

Let us be more specific. As is well known (see, e.g., \cite[Chapter II]{Kn} and \cite[Section VIII.4]{Si2}), the nontrivial irreducible representations of SU(2) are labeled by $J\in\frac12\N$, where $2J+1$ is the dimension of the representation. Let $\mathcal H$ be a $(2J+1)$-dimensional representation space. Then there are operators $S_1,S_2,S_3$ on $\mathcal H$ satisfying $[S_1,S_2]=iS_3$ and cyclically, representing the generators of SU(2). For any $\omega\in\Sph^2$, the operator $\omega\cdot S=\omega_1 S_1 + \omega_2 S_2 + \omega_3 S_3$ has minimal eigenvalue $-J$ and this eigenvalue is nondegenerate. We choose $\psi_\omega\in\mathcal H$ as a corresponding normalized eigenvector. It is uniqe up to a phase, but since we are only interested in the state $|\psi_\omega\rangle\langle\psi_\omega|$, this choice of the phase is irrelevant for us. This defines the Bloch coherent states. (We follow here the convention in \cite{Pe}; other definitions are based on the \emph{maximal} eigenvalue $+J$, but this leads to the same family of coherent states, just interchanging $\omega$ and $-\omega$.)

\begin{theorem}\label{su2}
	Let $J\in\frac12\N$ and consider an irreducible $(2J+1)$-dimensional representation of $\mathrm{SU(2)}$ on $\mathcal H$. Let $\Phi:[0,1]\to\R$ be convex. Then
	$$
	\sup\left\{ \int_{\Sph^2} \Phi( |\langle \psi_\omega, \psi \rangle|^2) \,d\omega :\ \psi \in \mathcal H \,,\ \| \psi \|_{\mathcal H} = 1 \right\} = \frac{4\pi}{2J} \int_0^1 \Phi(s) s^{\frac{1}{2J}-1}\,ds
	$$
	and the supremum is attained for $\psi=e^{i\theta} \psi_{\omega_0}$ with some $\omega_0\in\Sph^2$, $\theta\in\R/2\pi\Z$. If $\Phi$ is not affine linear, then it is attained \emph{only} for such $\psi$.
\end{theorem}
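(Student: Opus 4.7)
The plan is to parallel the strategy for Theorem~\ref{heisen} in the SU(2) setting. Realize the irreducible representation on the space of polynomials $P\in\C[z]$ of degree $\leq 2J$ with the SU(2)-invariant Bargmann inner product, so that, using the stereographic coordinate $z\in\C\cup\{\infty\}$, one has $|\langle \psi_\omega,\psi\rangle|^2 = |P(z)|^2/(1+|z|^2)^{2J}$. With $\omega_0$ the north pole, $\psi_{\omega_0}$ corresponds to the constant polynomial, giving $|\langle \psi_\omega,\psi_{\omega_0}\rangle|^2 = \cos^{4J}(\theta/2)$, and the change of variables $s=\cos^{4J}(\theta/2)$ identifies $\int_{\Sph^2} \Phi(|\langle \psi_\omega,\psi_{\omega_0}\rangle|^2)\,d\omega$ with the right-hand side of the theorem. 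It therefore suffices to prove
\begin{equation*}
\int_{\Sph^2} \Phi\bigl(|\langle \psi_\omega,\psi\rangle|^2\bigr)\,d\omega \leq \int_{\Sph^2} \Phi\bigl(|\langle \psi_\omega,\psi_{\omega_0}\rangle|^2\bigr)\,d\omega.
\end{equation*}

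I next use the Choquet representation $\Phi(s)=\alpha+\beta s+\int_{(0,1)}(s-t)_+\,d\mu(t)$, valid for any convex $\Phi:[0,1]\to\R$ with some Radon measure $\mu\geq 0$. The SU(2)-equivariant resolution of the identity makes $\int_{\Sph^2}|\langle \psi_\omega,\psi\rangle|^2\,d\omega$ the same constant for every normalized $\psi$, so the affine part of $\Phi$ drops out and the problem reduces to proving, for every $t\in(0,1)$, the layer-cake bound $\int_{\Sph^2}(|\langle \psi_\omega,\psi\rangle|^2-t)_+\,d\omega \leq \int_{\Sph^2}(|\langle \psi_\omega,\psi_{\omega_0}\rangle|^2-t)_+\,d\omega$. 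This in turn will follow from a Faber--Krahn-type inequality
\begin{equation*}
\int_E |\langle \psi_\omega,\psi\rangle|^2\,d\omega \leq \int_{E^*}|\langle \psi_\omega,\psi_{\omega_0}\rangle|^2\,d\omega
\end{equation*}
for every measurable $E\subset\Sph^2$, with $E^*$ the geodesic cap around $\omega_0$ of equal area, applied to the superlevel set $E=\{|\langle \psi_\omega,\psi\rangle|^2>t\}$.

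This Faber--Krahn step is the analytic heart of the argument and, as announced in the abstract, is proved in the paper by extending the Nicola--Tilli inequality from the Heisenberg to the SU(2) setting; I expect it to be the main obstacle. The natural route is to adapt Kulikov's subharmonic symmetrization (used for SU(1,1) and the affine group) to the round sphere: the subharmonicity of $\log|P(z)|$ on $\C$, combined with the spherical Green function, should yield the comparison of $|P(z)|^2/(1+|z|^2)^{2J}$ with its decreasing radial rearrangement on $\Sph^2$. For the characterization of equality, when $\Phi$ is not affine linear one has $\mu\not\equiv 0$, so equality in the theorem forces equality in the Faber--Krahn inequality for a set of $t$ of positive $\mu$-measure. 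The rigidity analysis of that inequality, combined with the analyticity of $P$, should force $P(z)=c(1+\bar{w}z)^{2J}$ for some $w\in\C$, which corresponds exactly to $\psi=e^{i\theta}\psi_{\omega'}$ for some $\omega'\in\Sph^2$, as required.
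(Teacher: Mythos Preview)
Your Choquet decomposition and the reduction from the $(s-t)_+$ inequalities to a Faber--Krahn bound are correct, and this direction of the Hardy--Littlewood--P\'olya majorization equivalence is indeed how the paper links Theorem~\ref{su2} with Theorem~\ref{fksu2}. The fatal gap is that you invoke the Faber--Krahn inequality for SU(2) as an input, saying it ``is proved in the paper by extending the Nicola--Tilli inequality,'' but in the paper the logic runs the \emph{other} way: Theorem~\ref{fksu2} is deduced \emph{from} Theorem~\ref{su2} in Section~\ref{sec:fk} via the same majorization equivalence you are using. So as written your argument is circular.

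What the paper actually does for Theorem~\ref{su2} is pass to the polynomial model (as you do) and then appeal to Theorem~\ref{analytic}, Case~2. The analytic core is not a symmetrization or a Green-function comparison, but the monotonicity Lemma~\ref{mono}: writing $\mu(\kappa)=m(\{u_f>\kappa\})$, one shows via the coarea formula, the divergence theorem, and the \emph{spherical} isoperimetric inequality that $\kappa\mapsto\kappa^{-1/J}(1-\mu(\kappa))$ is nondecreasing. This is then combined with the Chebyshev inequality (Lemma~\ref{kulikov}) and a short monotonicity computation for an auxiliary function $A(\tau)$ on $(0,1]$. In particular, one does \emph{not} obtain a pointwise comparison $u_f^*\le u_{F_0}$; such a bound is false in general, which is why a rearrangement argument of the kind you sketch would not directly work. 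The equality characterization in the paper comes from the sharp $L^\infty$ bound (Lemma~\ref{linfty}), which forces $\|u_f\|_\infty=1$ and hence $f\in\mathcal M$, together with the strict monotonicity of $A$ when $\Phi$ is not affine; no rigidity analysis of a Faber--Krahn inequality is needed.
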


Note that the value of the integral with $\psi=e^{i\theta} \psi_{\omega_0}$ does not depend on $\omega_0\in\Sph^2$, $\theta\in\R/2\pi\Z$. Since $\Phi$ is bounded, the supremum in the theorem is always finite, in contrast to Theorem \ref{heisen}.

\begin{corollary}\label{su2cor}
	Let $J\in\frac12\N$ and consider an irreducible $(2J+1)$-dimensional representation of $\mathrm{SU(2)}$ on $\mathcal H$. Let $\Phi:[0,1]\to\R$ be convex. Then
	$$
	\sup\left\{ \int_{\Sph^2} \Phi(\langle \psi_\omega, \rho\psi_\omega \rangle) \,d\omega :\ \rho\geq 0 \ \text{on}\ \mathcal H \,,\ \Tr\rho= 1 \right\} = \frac{4\pi}{2J} \int_0^1 \Phi(s) s^{\frac{1}{2J}-1}\,ds
	$$
	and the supremum is attained for $\rho=|\psi_{\omega_0}\rangle\langle\psi_{\omega_0}|$ with some $\omega_0\in\Sph^2$. If $\Phi$ is strictly convex, then it is attained \emph{only} for such $\rho$.
\end{corollary}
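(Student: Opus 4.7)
My approach is to reduce Corollary \ref{su2cor} to Theorem \ref{su2} via the spectral decomposition of $\rho$. Since $\mathcal H$ is finite-dimensional, I write any admissible $\rho$ as
\[
\rho=\sum_{j=1}^N\lambda_j\,|\psi_j\rangle\langle\psi_j|,\qquad \lambda_j>0,\quad \sum_j\lambda_j=1,
\]
with $\{\psi_j\}$ an orthonormal system. Then $\langle\psi_\omega,\rho\psi_\omega\rangle=\sum_j\lambda_j|\langle\psi_\omega,\psi_j\rangle|^2$, and convexity of $\Phi$ gives the pointwise Jensen inequality
\[
\Phi(\langle\psi_\omega,\rho\psi_\omega\rangle)\le\sum_j\lambda_j\,\Phi(|\langle\psi_\omega,\psi_j\rangle|^2).
\]
Integrating over $\Sph^2$ and applying Theorem \ref{su2} to each unit vector $\psi_j$ yields the claimed upper bound, which is trivially attained at $\rho=|\psi_{\omega_0}\rangle\langle\psi_{\omega_0}|$.

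For the equality case, suppose $\Phi$ is strictly convex and $\rho$ saturates the bound. Then both the integrated Jensen inequality and the bound from Theorem \ref{su2} must be tight. Since strict convexity precludes affine linearity, Theorem \ref{su2} forces every $\psi_j$ in the decomposition to be a coherent state $e^{i\theta_j}\psi_{\omega_j}$ for some $\omega_j\in\Sph^2$. Equality in the integrated Jensen step, together with strict convexity of $\Phi$, forces $|\langle\psi_\omega,\psi_{\omega_j}\rangle|^2$ to be independent of $j$ for almost every $\omega\in\Sph^2$; by continuity of these functions in $\omega$ the identity in fact holds for every $\omega$.

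Since $\omega\mapsto|\langle\psi_\omega,\psi_{\omega_j}\rangle|^2$ is continuous on $\Sph^2$ and attains the value $1$ uniquely at $\omega=\omega_j$, two distinct $\omega_j$'s would yield two distinct functions; hence all $\omega_j$ with $\lambda_j>0$ coincide at a single $\omega_0$. But then the $\psi_j$ are all phase multiples of $\psi_{\omega_0}$, which is incompatible with mutual orthonormality unless $N=1$, giving $\rho=|\psi_{\omega_0}\rangle\langle\psi_{\omega_0}|$. The main delicate point is the passage from almost-everywhere Jensen equality to everywhere equality, which is handled by the continuity of the matrix-coefficient functions on $\Sph^2$.
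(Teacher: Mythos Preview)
Your proof is correct and follows essentially the same route as the paper: the paper deduces Corollary~\ref{su2cor} from Corollary~\ref{analyticcor}, whose proof is precisely spectral decomposition plus pointwise Jensen plus the pure-state theorem, and then, for the equality case under strict convexity, the observation that all eigenvectors must be coherent states which agree up to a phase, contradicting orthonormality unless the rank is one. The only cosmetic difference is that the paper works in the analytic-function model and invokes the maximum modulus principle to pass from $|f_n|=|f_1|$ to $f_n=e^{i\theta_n}f_1$, whereas you stay in the $\Sph^2$ picture and use that the overlap $|\langle\psi_\omega,\psi_{\omega_j}\rangle|^2$ attains $1$ uniquely at $\omega_j$; these are equivalent uses of Lemma~\ref{linfty}.
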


Our third main result concerns coherent states for certain representations of SU(1,1). After initial results in \cite{Ba,LiSo3,Baetal}, the analogue of Wehrl's conjecture was settled recently by Kulikov in \cite{Ku}, again for general convex functions $\Phi$. (In fact, slightly less than convexity is required in \cite{Ku}.) Kulikov \cite[Remark 4.3]{Ku} also characterizes optimizers in the case where $\Phi$ is strictly convex. We extend this to the case where $\Phi$ is not linear.

All nontrivial representations of SU(1,1) are infinite-dimensional. Its nontrivial irreducible unitary representations consist of discrete, principal and complementary series, as well as limits of the discrete series; see, e.g., \cite[Chapters II and XVI; also (2.20)]{Kn}. Here we are only interested in one of the two discrete series. The results for the other one can be deduced from the results below by complex conjugation at the appropriate places. 

Following the notation in \cite{Bar}, the discrete series representation under consideration is labeled by $K\in\frac12\N\setminus\{\frac12\}=\{ 1,\frac32,2,\ldots\}$. Let $\mathcal H$ be a corresponding representation space. The generators of the Lie algebra of SU(1,1) give rise to operators $K_0, K_1, K_2$ in $\mathcal H$ satisfying
$$
[K_1,K_2]=-iK_0 \,,\quad
[K_2,K_0]=iK_1 \,,\quad
[K_0,K_1]=iK_2 \,.
$$
Moreover, one has 
$$
K_0^2-K_1^2-K_2^2 = K(K-1) \,,
$$
where $K$ is the number labeling the representations. (There are also representations of SU(1,1) corresponding to $K=\frac12$, called limits of the discrete series, but their coherent state transforms are in some sense degenerate; see Subsection \ref{sec:limit}. We also briefly discuss the case of arbitrary real $K>\frac12$ after Corollary \ref{su11cor}.)

For any $(n_0,n_1,n_2)\in\R^3$ with $n_0^2-n_1^2-n_2^2=1$ and $n_0>0$ the operator $n_0 K_0 - n_1 K_1-n_2 K_2$ has minimal eigenvalue $K$ and this eigenvalue is simple. Therefore we can choose a corresponding normalized eigenvector (which is unique up to a phase). It is convenient to label this vector not by $(n_0,n_1,n_2)$ by rather by $z\in\D$, the open unit disk in $\C$, using the parametrization
$$
(n_0,n_1+ i n_2) = ( \tfrac{1+|z|^2}{1-|z|^2}, \tfrac{2z}{1-|z|^2}) \,.
$$
In this way we obtain a family of vectors $\psi_z$, $z\in\D$, giving rise to coherent states for a discrete series representation of SU(1,1).

In what follows, we denote by $dA(z)=dx\,dy$ the two-dimensional Lebesgue measure on $\C$.

\begin{theorem}\label{su11}
	Let $K\in\frac12\N\setminus\{\tfrac12\}$ and consider the irreducible discrete series representation of $\mathrm{SU(1,1)}$ on $\mathcal H$ corresponding to $K$. Let $\Phi:[0,1]\to\R$ be convex. Then
	$$
	\sup\left\{ \int_{\D} \Phi(|\langle \psi_z, \psi \rangle|^2) \,\frac{dA(z)}{(1-|z|^2)^2} :\ \psi\in\mathcal H \,,\ \|\psi\|_\mathcal H= 1 \right\} = \frac{\pi}{2K} \int_0^1 \Phi(s) s^{-\frac{1}{2K}-1}\,ds
	$$
	and the supremum is attained for $\psi=e^{i\theta}\psi_{z_0}$ with some $z_0\in\D$, $\theta\in\R/2\pi\Z$. If $\Phi$ is not linear and if the supremum is finite, then it is attained \emph{only} for such $\psi$.
\end{theorem}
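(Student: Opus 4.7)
The plan is to reduce the problem to the weighted Bergman realization of the discrete series, combine Kulikov's Faber--Krahn-type inequality \cite{Ku} with a layer-cake decomposition adapted to the convex structure of $\Phi$, and then extract the equality characterization from a rigidity analysis at a single level.

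Via the standard Bargmann-type isometry, the discrete series with parameter $K$ is realized on the weighted Bergman space of holomorphic functions $F$ on $\D$ with norm $\|F\|^2 = \frac{2K-1}{\pi}\int_\D |F(z)|^2 (1-|z|^2)^{2K-2}\, dA(z)$. In this model one has $|\langle \psi_z,\psi\rangle|^2 = u(z) := (1-|z|^2)^{2K}|F(z)|^2$, and $d\nu := dA(z)/(1-|z|^2)^2$ is the $\mathrm{SU}(1,1)$-invariant hyperbolic area. The coherent state $\psi_{z_0}$ corresponds to the reproducing kernel based at $z_0$; for $z_0=0$ one has $u_0(z) = (1-|z|^2)^{2K}$, and the substitution $s = (1-|z|^2)^{2K}$ in $\int_\D \Phi(u_0)\, d\nu$ directly yields the right-hand side of the theorem, while M\"obius invariance of $d\nu$ handles general $z_0$.

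Next I would invoke Kulikov's Faber--Krahn inequality: for every normalized $F$, the distribution function satisfies $\mu(t) := \nu(\{u > t\}) \le \mu_0(t) = \pi(t^{-1/(2K)} - 1)$. For convex $\Phi:[0,1]\to\R$ with $\Phi(0)=0$ (necessary for finiteness on the infinite-measure space $(\D,\nu)$), I would use the representation $\Phi(s) = \Phi'(0^+)\, s + \int_{(0,1]} (s-\sigma)_+\, d\Phi''(\sigma)$, where $d\Phi''$ is a nonnegative Radon measure. Combined with the resolution of identity $\int u\, d\nu = \pi/(2K-1)$ (independent of $\psi$) and the layer-cake formula $\int (u-\sigma)_+\, d\nu = \int_\sigma^1 \mu(t)\, dt$, Kulikov's pointwise bound $\mu \le \mu_0$ reduces the required inequality to integrating $\int_\sigma^1 \mu_0(t)\, dt$ against $d\Phi''$, producing the stated upper bound on $\int_\D \Phi(u)\, d\nu$.

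Finally, the equality characterization is the most delicate step. Assuming the supremum is attained and that $\Phi$ is not linear, $d\Phi''$ is a nonzero measure on $(0,1]$. Equality forces $\int (u-\sigma)_+\, d\nu = \int (u_0-\sigma)_+\, d\nu$ for $d\Phi''$-a.e.\ $\sigma$, and by continuity of both sides in $\sigma$ this extends to every $\sigma$ in $\supp d\Phi''$. Picking such a $\sigma_0$, the identity $\int_{\sigma_0}^1 (\mu_0 - \mu)\, dt = 0$ together with $\mu \le \mu_0$ gives $\mu(t) = \mu_0(t)$ for a.e.\ $t \in (\sigma_0,1)$. The hardest part of the proof will be the ensuing rigidity step: I must show that equality $\mu(t) = \mu_0(t)$ on a set of $t$ of positive Lebesgue measure forces $F$ to be proportional to a reproducing kernel $(1-\bar z_0 z)^{-2K}$. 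My plan is to exploit the isoperimetric content of Kulikov's inequality, namely that a superlevel set $\{u > t\}$ whose hyperbolic measure saturates $\mu_0(t)$ must be a hyperbolic disk; the maximum principle applied to the logarithmically subharmonic function $\log u = \log|F|^2 + 2K\log(1-|z|^2)$ on this disk, followed by analytic continuation of $F$ to all of $\D$, then identifies $F$ with a reproducing kernel up to a unimodular scalar, giving $\psi = e^{i\theta}\psi_{z_0}$.
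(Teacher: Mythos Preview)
Your reduction to the weighted Bergman model is correct and matches the paper. The gap is the claimed pointwise distribution bound $\mu(t)\le\mu_0(t)$: this is false. You yourself observe that $\int_0^1\mu(t)\,dt=\int_\D u\,d\nu=\pi/(2K-1)$ is independent of $\psi$; since the same holds for $\mu_0$, a pointwise inequality $\mu\le\mu_0$ would force $\mu=\mu_0$ almost everywhere, which is clearly not true for generic $\psi$. Kulikov does prove a pointwise bound of this form, but only in the Hardy space case $K=\tfrac12$ (\cite[Theorem~3.1]{Ku}, used here in Proposition~\ref{analyticlimit}), where the normalization is a boundary condition and no such mass conservation obstructs it. For $K>\tfrac12$ the relevant Kulikov result (\cite[Theorem~2.1]{Ku}, recorded here as Lemma~\ref{mono}) is only a monotonicity statement involving $\mu$, not a pointwise comparison with $\mu_0$.

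What your decomposition via $(s-\sigma)_+$ actually requires is the majorization $\int_\sigma^1\mu(t)\,dt\le\int_\sigma^1\mu_0(t)\,dt$ for every $\sigma$, but by Hardy--Littlewood--P\'olya this is equivalent to the theorem itself and cannot be taken as an input. The paper's route is to combine the monotonicity of Lemma~\ref{mono} with Chebyshev's integral inequality (Lemma~\ref{kulikov}), after a change of variable designed so that the normalization constraint becomes the $L^1$ average of the monotone quantity; this produces an upper bound $A(t_0)$ depending on $t_0$ determined by $\|u\|_\infty$, and one then checks that $A$ is nondecreasing on $(0,1]$. The equality analysis is also simpler than your rigidity sketch: when $\Phi$ is not affine, $A$ is strictly increasing near~$1$, so an optimizer must have $\|u\|_\infty=1$, and the equality case of the elementary bound $u\le1$ (Lemma~\ref{linfty}) then identifies $F$ as a reproducing kernel directly, with no need to argue that superlevel sets are hyperbolic disks or to invoke analytic continuation.
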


Note that the value of the integral with $\psi=e^{i\theta}\psi_{z_0}$ does not depend on $z_0\in\D$, $\theta\in\R/2\pi\Z$. It may or may not be finite, depending on $\Phi$. For finiteness it is necessary that $\lim_{s\to 0^+}\Phi(s) = 0$.

Theorem \ref{su11} proves the uniqueness part of a conjecture of Lieb and Solovej \cite[Conjecture 5.2]{LiSo3}. As we mentioned before, the inequality part is due to Kulikov \cite{Ku}.

\begin{corollary}\label{su11cor}
	Let $K\in\frac12\N\setminus\{\tfrac12\}$ and consider the irreducible discrete series representation of $\mathrm{SU(1,1)}$ on $\mathcal H$ corresponding to $K$. Let $\Phi:[0,1]\to\R$ be convex. Then
	$$
	\sup\left\{ \int_{\D} \Phi(\langle \psi_z, \rho\psi_z \rangle) \,\frac{dA(z)}{(1-|z|^2)^2} :\ \rho\geq 0 \ \text{on}\ \mathcal H \,,\ \Tr\rho= 1 \right\} = \frac{\pi}{2K} \int_0^1 \Phi(s) s^{-\frac{1}{2K}-1}\,ds
	$$
	and the supremum is attained for $\rho=|\psi_{z_0}\rangle\langle\psi_{z_0}|$ with some $z_0\in\D$. If $\Phi$ is strictly convex and if the supremum is finite, then it is attained \emph{only} for such $\rho$.
\end{corollary}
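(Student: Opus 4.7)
The plan is to reduce the density-matrix statement to the pure-state result Theorem \ref{su11} by means of the spectral theorem and pointwise Jensen's inequality. Since $\rho\geq 0$ is trace class with $\Tr\rho=1$, the spectral theorem gives a decomposition $\rho=\sum_j \lambda_j |\phi_j\rangle\langle\phi_j|$ with $\{\phi_j\}$ an orthonormal system in $\mathcal H$, $\lambda_j\geq 0$, and $\sum_j \lambda_j=1$. The Husimi function then linearises as $\langle\psi_z,\rho\psi_z\rangle = \sum_j \lambda_j\, |\langle\psi_z,\phi_j\rangle|^2$, and convexity of $\Phi$ applied pointwise in $z$ yields
$$
\Phi\bigl(\langle\psi_z,\rho\psi_z\rangle\bigr) \leq \sum_j \lambda_j\, \Phi\bigl(|\langle\psi_z,\phi_j\rangle|^2\bigr).
$$
Integrating this inequality against $(1-|z|^2)^{-2}\,dA(z)$, interchanging sum and integral, and applying Theorem \ref{su11} to each unit vector $\phi_j$ with $\lambda_j>0$ yields the upper bound $\frac{\pi}{2K}\int_0^1 \Phi(s)\, s^{-\frac{1}{2K}-1}\,ds$, which is attained at $\rho=|\psi_{z_0}\rangle\langle\psi_{z_0}|$ by the attainment clause of Theorem \ref{su11}.

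For the uniqueness assertion, suppose $\Phi$ is strictly convex, the supremum is finite, and $\rho$ attains the supremum. Equality must then hold in every step of the chain above. In particular, for each $j$ with $\lambda_j>0$ the pure-state integral $\int_\D \Phi(|\langle\psi_z,\phi_j\rangle|^2)\,\frac{dA(z)}{(1-|z|^2)^2}$ itself equals the supremum of Theorem \ref{su11}; since strict convexity of $\Phi$ implies $\Phi$ is not linear, the uniqueness part of Theorem \ref{su11} forces $\phi_j = e^{i\theta_j}\psi_{z_j}$ for some $z_j\in\D$ and $\theta_j\in\R/2\pi\Z$. It then remains to show that at most one $\lambda_j$ can be positive. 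For this I would invoke the explicit coherent-state overlap for the SU(1,1) discrete series, which satisfies $|\langle\psi_{z},\psi_{w}\rangle|^2=(1-|z|^2)^{2K}(1-|w|^2)^{2K}|1-\bar z w|^{-4K}>0$ for all $z,w\in\D$. Since distinct eigenvectors of $\rho$ must be orthogonal, two coherent states $\psi_{z_j}$ and $\psi_{z_{j'}}$ cannot both appear; hence $\rho=|\psi_{z_0}\rangle\langle\psi_{z_0}|$ for a single $z_0\in\D$.

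I expect the main technical nuisance to be verifying the sum/integral interchange when $\rho$ has infinitely many nonzero eigenvalues and $\Phi$ is not necessarily nonnegative. However, any real-valued convex function on the compact interval $[0,1]$ is bounded, and the finiteness of the pure-state supremum controls each bounding integral separately, so this is a routine application of Tonelli after shifting $\Phi$ by a constant. The only conceptually new geometric input beyond Theorem \ref{su11} is thus the non-orthogonality of coherent states for the SU(1,1) discrete series.
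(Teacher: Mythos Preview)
Your argument follows the same strategy as the paper's (which deduces Corollary~\ref{su11cor} from Corollary~\ref{analyticcor}, proved via spectral decomposition, pointwise Jensen, and the pure-state result). For uniqueness you take a slight variant: the paper invokes strict convexity a second time in the pointwise Jensen step to force $|\langle\psi_z,\phi_j\rangle|=|\langle\psi_z,\phi_1\rangle|$ a.e.\ and then uses analyticity to conclude, whereas you bypass this by observing directly that distinct SU(1,1) coherent states are never orthogonal; both routes are valid, and yours is arguably cleaner here.

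One small slip: shifting $\Phi$ by a constant does \emph{not} justify the sum/integral interchange, because the measure $(1-|z|^2)^{-2}\,dA(z)$ on $\D$ is infinite, so the constant contributes $\pm\infty$. The fix is easy: when the supremum is finite one has $\Phi(0^+)=0$, and then subtracting the supporting line $s\mapsto cs$ with $c=\Phi'(0^+)$ yields $\tilde\Phi(s)=\Phi(s)-cs\geq 0$, to which Tonelli applies; the linear piece integrates finitely by the completeness relation $\int_\D|\langle\psi_z,\phi_j\rangle|^2\,(1-|z|^2)^{-2}\,dA(z)=\pi/(2K-1)$.
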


For every real $K>\frac12$, there is an irreducible representations of the Lie \emph{algebra} generated by $K_0, K_1,K_2$ satisfying the above relations. For $K\not\in\frac12\N$ such a representation does not come from the Lie \emph{group} SU(1,1) -- recall that this group is not simply connected. It does come, however, from a representation of the covering group of SU(1,1) \cite{Pe0} and we could prove sharp inequalities for the corresponding coherent states. From an analytic point of view this would lead to the same problem as coherent states for the affine group, which we discuss next.

The affine group (in one space dimension), also known as the $(aX+b)$-group, has two nontrivial, irreducible unitary representations \cite{AsKl,DaKlPa}. Again, we focus on a single one since the results for the other one can be obtained by appropriate complex conjugations. What distinguishes the affine group from the above cases of the Heisenberg group, SU(2) and SU(1,1) is that different choices of an extremal weight vector lead to inequivalent coherent state transforms.

We fix a parameter $\beta>\frac12$, emphasizing that this parameter does not label the representation, but rather the choice of the extremal weight vector. We consider the following family of normalized functions $\psi_{a,b}\in L^2(\R_+)$, $\R_+=(0,\infty)$, parametrized by $a\in\R_+$, $b\in\R$,
$$
\psi_{a,b}(x) :=  2^\beta \Gamma(2\beta)^{-\frac12}\ a^\beta x^{\beta-\frac12} e^{-ax+ibx}
\qquad\text{for all}\ x\in\R_+ \,.
$$
(Here we follow the convention in \cite{DaKlPa}. What we call $\beta$ is called $\alpha-\frac12$ in \cite{LiSo3} and they do not normalize the $\psi_{a,b}$ in $L^2(\R_+)$, but choose a different, natural normalization.)

\begin{theorem}\label{axb}
	Let $\beta>\frac12$ and let $\Phi:[0,1]\to\R$ be convex. Then
	$$
	\sup\left\{ \iint_{\R_+\times\R} \! \Phi( |\langle \psi_{a,b}, \psi \rangle|^2 ) \,\frac{da\,db}{a^2} :\ \psi\in L^2(\R_+) \,,\ \|\psi\|_{L^2(\R_+)} = 1 \right\} \! = \frac{2\pi}{\beta} \int_0^1 \Phi(s) s^{-\frac{1}{2\beta}-1}\,ds
	$$
	and the supremum is attained for $\psi=e^{i\theta} \psi_{a_0,b_0}$ with some $a_0\in\R_+,b_0\in\R$, $\theta\in\R/2\pi\Z$. If $\Phi$ is not linear and if the supremum is finite, then it is attained \emph{only} for such~$\psi$.
\end{theorem}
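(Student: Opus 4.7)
The plan is to reduce Theorem \ref{axb} to the analytic problem on the unit disk that is settled in the proof of Theorem \ref{su11}; this is the avenue suggested by the author's remark following Corollary \ref{su11cor}.

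First I would introduce the holomorphic transform
$$F(w) \;:=\; \int_0^\infty x^{\beta-\frac12}\, e^{iwx}\, \psi(x)\,dx, \qquad w\in\mathbb{H}:=\{w\in\C:\im w>0\},$$
for $\psi\in L^2(\R_+)$. Parametrizing $w=-b+ia=u+iv$, a direct computation gives
$$|\langle\psi_{a,b},\psi\rangle|^2 \;=\; \frac{4^\beta}{\Gamma(2\beta)}\,v^{2\beta}\,|F(w)|^2, \qquad \frac{da\,db}{a^2} \;=\; \frac{du\,dv}{v^2},$$
while Plancherel in $u$ followed by a gamma integral in $v$ (using $\beta>\frac12$) yields
$$\int_{\mathbb{H}} |F(w)|^2\, v^{2\beta-2}\,du\,dv \;=\; \frac{\pi\,\Gamma(2\beta-1)}{2^{2\beta-2}}\, \|\psi\|_{L^2(\R_+)}^2.$$
Thus $\psi\mapsto F$ is, up to a constant, a unitary isomorphism of $L^2(\R_+)$ onto the weighted Bergman space $A_\beta(\mathbb{H}):=\{F \text{ holomorphic on } \mathbb{H} : \int_\mathbb{H}|F|^2 v^{2\beta-2}\,du\,dv<\infty\}$.

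Second, I would apply the Cayley transform $\varphi:\D\to\mathbb{H}$ to conjugate $A_\beta(\mathbb{H})$ with the standard weighted Bergman space $A_\beta(\D)$ appearing in the proof of Theorem \ref{su11}. Under the associated unitary equivalence $F\mapsto G$ (the Jacobian weight being $(1+z)^{-2\beta}$), the quantity $v^{2\beta}|F(w)|^2$ becomes proportional to $(1-|z|^2)^{2\beta}|G(z)|^2$, and the hyperbolic measures satisfy $\frac{du\,dv}{v^2}=\frac{4\,dA(z)}{(1-|z|^2)^2}$. Carrying the constants through, the extremal problem becomes exactly the one handled in Theorem \ref{su11} with $K$ replaced by the real parameter $\beta$; the ratio $\frac{2\pi/\beta}{\pi/(2K)}=4$ between the right-hand sides at $K=\beta$ is accounted for precisely by this factor of $4$ between the invariant measures. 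Although Theorem \ref{su11} is stated for $K\in\frac12\N\setminus\{\frac12\}$ because only these values come from honest $\mathrm{SU(1,1)}$ representations, its proof is purely analytic --- proceeding via Kulikov's concentration inequality for holomorphic functions on $\D$ combined with a layer-cake and convexity analysis of the equality cases --- and requires only $2K>1$, so it applies verbatim for real $\beta>\frac12$. The disk-side optimizers, proportional to $(1-\bar\alpha z)^{-2\beta}$, pull back through the two unitaries to $e^{i\theta}\psi_{a_0,b_0}$, yielding both the sharp value and the uniqueness statement under the hypothesis that $\Phi$ is not linear.

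The main obstacle is the careful bookkeeping of normalization constants along the chain $L^2(\R_+)\to A_\beta(\mathbb{H})\to A_\beta(\D)$, in particular checking that the argument of $\Phi$ lies in $[0,1]$ with equality exactly on the coherent state orbit, and that the dichotomy between \emph{$\Phi$ linear} (degenerate saturation) and \emph{$\Phi$ not linear} (rigidity) is preserved under the equivalences. The conceptual content is that Theorems \ref{su11} and \ref{axb} are, at the analytic level, two manifestations of a single sharp inequality on weighted Bergman spaces with continuous parameter $\beta>\frac12$; once that inequality and its cases of equality are established on the disk, Theorem \ref{axb} follows by the translation above.
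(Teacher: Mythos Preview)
Your proposal is correct and follows essentially the same route as the paper: the author also passes from $\psi\in L^2(\R_+)$ to an analytic function on the upper half-plane via $f(z)=2^\beta\Gamma(2\beta)^{-1/2}\int_0^\infty x^{\beta-1/2}e^{izx}\psi(x)\,dx$, then conformally maps $\C_+\to\D$ to land in the weighted Bergman space $A^2_{2\beta}(\D)$, and invokes the purely analytic result (Theorem~\ref{analytic}, Case~3 with $\alpha=2\beta$, valid for all real $\alpha>1$) rather than Theorem~\ref{su11} itself. The only cosmetic difference is that the paper appeals to the coherent-state completeness relation for the norm identity where you use Plancherel plus a gamma integral, and it tracks the correspondence $\psi_{a,b}\leftrightarrow F_w$ via the group action rather than by direct computation.
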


Note that the value of the double integral with $\psi=e^{i\theta} \psi_{a_0,b_0}$ does not depend on $a_0\in\R_+,b_0\in\R$, $\theta\in\R/2\pi\Z$. It may or may not be finite, depending on $\Phi$. For finiteness it is necessary that $\lim_{s\to 0^+}\Phi(s)=0$.

Theorem \ref{axb} settles the equality part of a conjecture of Lieb and Solovej \cite[Conjecture 3.1]{LiSo3}. For strictly convex $\Phi$ it had been settled earlier in \cite[Remark 4.3]{Ku}. Clearly, the assumption of $\Phi$ not being linear is optimal, since otherwise the supremum is attained for any $\psi\in L^2(\R_+)$.

We note that Theorem \ref{axb} has a version for $\beta=\frac12$; see Remark \ref{axblimit}.

\begin{corollary}\label{axbcor}
	Let $\beta>\frac12$ and let $\Phi:[0,1]\to\R$ be convex. Then
	$$
	\sup\left\{ \iint_{\R_+\times\R} \! \Phi(\langle \psi_{a,b}, \rho\psi_{a,b} \rangle) \,\frac{da\,db}{a^2} :\ \rho\geq 0 \ \text{on}\ L^2(\R) \,,\ \Tr\rho= 1 \right\} \!= \frac{2\pi}{\beta} \int_0^1 \Phi(s) s^{-\frac{1}{2\beta}-1}\,ds
	$$
	and the supremum is attained for $\rho=|\psi_{a_0,b_0}\rangle\langle\psi_{a_0,b_0}|$ with some $a_0\in\R_+,b_0\in\R$. If $\Phi$ is strictly convex and if the supremum is finite, then it is attained \emph{only} for such $\rho$.
\end{corollary}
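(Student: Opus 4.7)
The plan is to reduce Corollary \ref{axbcor} to the pure-state case of Theorem \ref{axb} via the spectral decomposition of $\rho$ and a pointwise Jensen step. Diagonalize $\rho = \sum_j \lambda_j |\psi_j\rangle\langle\psi_j|$ with $\{\psi_j\}$ orthonormal, $\lambda_j \geq 0$, and $\sum_j \lambda_j = 1$. For each $(a,b) \in \R_+ \times \R$ one has
$$
\langle \psi_{a,b}, \rho \psi_{a,b}\rangle = \sum_j \lambda_j |\langle \psi_{a,b}, \psi_j\rangle|^2,
$$
and since each $|\langle \psi_{a,b}, \psi_j\rangle|^2$ lies in $[0,1]$, convexity of $\Phi$ yields the pointwise estimate
$$
\Phi(\langle \psi_{a,b}, \rho\psi_{a,b}\rangle) \leq \sum_j \lambda_j\, \Phi(|\langle \psi_{a,b}, \psi_j\rangle|^2).
$$
Integrating against $da\,db/a^2$, exchanging sum and integral (which is legitimate whenever the supremum in the corollary is finite, using an affine minorant to $\Phi$ on $[0,1]$ and Theorem \ref{axb} for each $\psi_j$), and then invoking Theorem \ref{axb} termwise together with $\sum_j \lambda_j = 1$, gives the asserted upper bound. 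Attainment at $\rho = |\psi_{a_0, b_0}\rangle\langle\psi_{a_0, b_0}|$ is immediate from the pure-state statement.

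For the uniqueness part, assume that $\Phi$ is strictly convex, the supremum is finite, and $\rho$ realizes it. Strict convexity implies $\Phi$ is not linear, so the uniqueness clause of Theorem \ref{axb} is available. Equality in the integrated chain above forces, for every index $j$ with $\lambda_j > 0$, that $\iint_{\R_+\times\R} \Phi(|\langle \psi_{a,b}, \psi_j\rangle|^2)\,da\,db/a^2$ equals the common right-hand side; hence each such $\psi_j$ must be of the form $e^{i\theta_j}\psi_{a_j, b_j}$ for some $a_j \in \R_+$, $b_j \in \R$ and phase $\theta_j$.

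It remains to rule out a nontrivial mixture of several coherent states. The key observation is that two normalized affine coherent states are never orthogonal: using the Gamma-function identity $\int_0^\infty x^{2\beta-1}e^{-\alpha x}\,dx = \Gamma(2\beta)\,\alpha^{-2\beta}$ for $\re\alpha > 0$, a direct computation gives
$$
\langle \psi_{a_1, b_1}, \psi_{a_2, b_2}\rangle = \frac{(4a_1 a_2)^\beta}{\bigl((a_1+a_2) - i(b_2-b_1)\bigr)^{2\beta}},
$$
whose modulus $(4a_1 a_2)^\beta \bigl((a_1+a_2)^2 + (b_2-b_1)^2\bigr)^{-\beta}$ is strictly positive for every choice of parameters. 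Since the eigenvectors $\psi_j$ with $\lambda_j > 0$ are mutually orthogonal yet each is a phase times a coherent state, only a single such index can occur, and $\sum_j\lambda_j = 1$ forces $\rho = |\psi_{a_0, b_0}\rangle\langle\psi_{a_0, b_0}|$. The only substantive input beyond Theorem \ref{axb} is this short non-orthogonality computation; the integrability book-keeping for the termwise integration of the Jensen inequality is the only mildly delicate point.
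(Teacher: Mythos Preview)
Your argument is correct. The reduction to the pure-state theorem via spectral decomposition and pointwise Jensen is exactly what the paper does (the paper routes through Corollary~\ref{analyticcor} in the Bergman-space picture, but the structure is identical). The one genuine difference is in the uniqueness step. The paper, in the proof of Corollary~\ref{analyticcor}, uses strict convexity to force equality in the \emph{pointwise} Jensen inequality, giving $u_{f_n}=u_{f_1}$ a.e.; then continuity and the maximum modulus principle yield $f_n=e^{i\theta_n}f_1$, and orthogonality collapses the sum to a single term. You instead bypass the pointwise equality analysis entirely: once each $\psi_j$ with $\lambda_j>0$ is known to be a coherent state, your explicit computation of $\langle\psi_{a_1,b_1},\psi_{a_2,b_2}\rangle$ shows two such states are never orthogonal, so the orthonormality of the eigenvectors already forces a single term. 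This is a neat shortcut specific to the affine case; the paper's analyticity argument has the advantage of applying uniformly to all the groups treated (Heisenberg, $\mathrm{SU}(2)$, $\mathrm{SU}(1,1)$), where the overlap formulas differ but the maximum modulus principle does not.
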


This concludes the description of our main results, but we would like to draw the reader's attention also to Sections \ref{sec:reverse} and \ref{sec:fk} where we prove, respectively, sharp reverse H\"older inequalities for analytic functions, thereby settling a conjecture of Bodmann \cite{Bo}, and optimal Faber--Krahn-type inequalities for coherent state transforms.

The method that we will be using is that from a recent beautiful paper by Kulikov \cite{Ku}. He developed this method to solve the Lieb--Solovej conjectures for SU(1,1) and the affine group. Here we show that it can be adapted to deal with the Heisenberg and the SU(2) case. We also push the characterization of optimizers a bit further than in \cite{Ku}, thus leading to the optimal results in Theorems \ref{heisen}, \ref{su2}, \ref{su11} and \ref{axb}.

Kulikov's paper in turn seems to be inspired by an equally beautiful recent paper by Nicola and Tilli \cite{NiTi}. They were the first, as far as we know, to use the isoperimetric inequality in connection with the coherent state transform to obtain optimal functional inequalities. (Talenti \cite{Ta} used a closely related method for comparison theorems for solutions of PDEs.) Kulikov proved his results by using instead the isoperimetric inequality in hyperbolic space and we will prove Theorem \ref{su2} by using that on the sphere. While it is tempting to try to use the same method for more general groups, an obstacle would have to be overcome; see Subsection \ref{sec:limitation}.

Nicola and Tilli proved Faber--Krahn-type inequalities for the Heisenberg coherent states. We will show that their main result (at least without the characterization of the cases of equality) follows from Theorem \ref{heisen} and we will use this idea to prove analogues of their results for coherent states based on SU(2), SU(1,1) and the affine group; see Section \ref{sec:fk}. For further developments started by \cite{NiTi} see, for instance, \cite{RaTi,NiTi2,Ka}.

After this paper was submitted for publication, we learned that Aleksei Kulikov, Fabio Nicola, Joaquim Ortega-Cerd\'a and Paolo Tilli have independently obtained similar results with similar techniques.

\medskip

Thanks are due to Eric Carlen, Elliott Lieb and Jan Philip Solovej for many discussions on the topics of this paper.

\medskip

It is my pleasure to dedicate this paper to David Jerison on the occasion of his 70th birthday. His papers have been an inspiration for me, those on sharp inequalities \cite{JeLe} and others. I am particularly indebted to him for his remarks in the fall of 2008, which indirectly were a great motivation for work that eventually led to~\cite{FrLi}.
 

\section{Inequalities for analytic functions}\label{sec:analytic}

The main ingredient behind the results in the previous section are sharp inequalities for analytic functions and the characterization of their optimizers, which we discuss in the present section.


\subsection{Definitions and main result}

There are three different types of inequalities, corresponding to the cases of the Heisenberg group, SU(2) and SU(1,1). We refer to these different scenarios as Cases 1, 2 and 3. In Cases 2 and 3, there is a parameter $J\in\frac12\N$ and $\alpha>1$, respectively, that is fixed in what follows. 

In Case 1, we consider functions from the Fock space $\mathcal F^2(\C)$, that is, entire functions $f$ satisfying
$$
\| f \|_{\mathcal F^2} := \left( \int_\C |f(z)|^2 e^{-\pi |z|^2}\,dA(z) \right)^{1/2}<\infty \,.
$$
We recall that we write $dA(z)=dx\,dy$ for the two-dimensional Lebesgue measure. In Case~2, we consider functions in $\mathcal P_{2J}$, that is, polynomials $f$ of degree $\leq 2J$ endowed with the norm
$$
\| f \|_{\mathcal P_{2J}} := \left( \frac{2J+1}{\pi} \int_\C |f(z)|^2\, (1+|z|^2)^{-2J-2}\,dA(z) \right)^{1/2}.
$$
This norm is finite for any $f\in\mathcal P_{2J}$. In Case 3, we consider functions from the weighted Bergman space $A^2_\alpha(\D)$, that is, analytic functions $f$ on the disk $\D$ satisfying
$$
\| f \|_{A^2_\alpha} := \left( \frac{\alpha-1}\pi \int_\D |f(z)|^2 (1-|z|^2)^{\alpha-2}\,dA(z) \right)^{1/2}<\infty \,.
$$
To treat all cases simultaneously, we set
$$
\mathcal H :=
\begin{cases}
	\mathcal F^2(\C) & \text{in Case 1} \,,\\
	\mathcal P_{2J} & \text{in Case 2}\,, \\
	\mathcal A^2_\alpha(\D) & \text{in Case 3} \,,
\end{cases}
$$
and denote the norm in $\mathcal H$ by $\vvvert \cdot \vvvert$.

Thus, the set on which the relevant functions are defined is
$$
\Omega:=
\begin{cases}
	\C & \text{in Cases 1 and 2} \,, \\
	\D & \text{in Case 3} \,,
\end{cases}
$$
and the relevant measure is
$$
dm(z) :=
\begin{cases}
	dA(z) & \text{in Case 1} \,, \\
	\pi^{-1} (1+|z|^2)^{-2}\,dA(z) & \text{in Case 2}\,,\\
	\pi^{-1} (1-|z|^2)^{-2}\,dA(z) & \text{in Case 3} \,.
\end{cases}
$$
To a function $f\in\mathcal H$, we associate a function $u_f$ on $\Omega$, defined by
\begin{align}\label{eq:uf}
	u_f(z) & := 
	\begin{cases}
		|f(z)| e^{-\frac\pi2|z|^2}  & \text{in Case 1} \,,\\
		|f(z)| (1+|z|^2)^{-J} & \text{in Case 2} \,,\\
		|f(z)| (1-|z|^2)^{\alpha/2} & \text{in Case 3} \,.
	\end{cases}
\end{align}

The problem that we are interested in is to maximize, given a convex function $\Phi$, the quantity
$$
\int_\Omega \Phi( u_f(z)^2) \,dm(z)
$$
over all $f\in\mathcal H$ with $\vvvert f \vvvert =1$. Our main result will characterize the set of $f$'s for which this supremum is attained. This set $\mathcal M\subset\{ f\in\mathcal H:\ \vvvert f \vvvert =1 \}$ is defined as follows. In Case 1, we consider the functions $F_w$, parametrized by $w\in\C$, given by
$$
F_w(z) := e^{-\frac\pi2|w|^2+ \pi \overline w z} 
\qquad\text{for all}\ z\in\C \,,
$$
and set
$$
\mathcal M:=\left\{ e^{i\theta}\, F_w :\ w\in\C \,,\ \theta\in\R/2\pi\Z \right\}.
$$
In Case 2, we consider the functions $F_w$, parametrized by $w\in\C\cup\{\infty\}$, given by
$$
F_w(z) := \frac{(1+\overline w z)^{2J}}{(1+|w|^2)^J}\,,\ w\neq 0 \,,\qquad
F_\infty(z) := z^{2J}
\qquad\text{for all}\ z\in\C \,,
$$
and set
$$
\mathcal M:=\left\{ e^{i\theta}\, F_w :\ w\in\C\cup\{\infty\} \,,\ \theta\in\R/2\pi\Z \right\}.
$$
In Case 3, we consider the function $F_w$, parametrized by $w\in\D$, given by
$$
F_w(z):= \frac{(1-|w|^2)^\frac\alpha2}{(1-\overline wz)^\alpha}
\qquad\text{for all}\ z\in\D \,,
$$
and set
$$
\mathcal M:=\left\{ e^{i\theta}\, F_w :\ w\in\D \,,\ \theta\in\R/2\pi\Z \right\}.
$$
In each case it can be verified that $F_w\in\mathcal H$ and that $\vvvert F_w \vvvert=1$ for all $w$ in the respective index set. Indeed, this can be seen by a direct computation for $w=0$. For a general $w$ we use the fact that the functions $u_{F_w}$ are equimeasurable, that is, for every $\kappa>0$, the measure $m(\{ u_{F_w}>\kappa\})$ is independent of $w$. As a consequence of the equimeasurability the norm in $\mathcal H$ is independent of $w$. The equimeasurability in turn is a consequence of the fact that $F_w$ is obtained from, say, $F_0$ by the action of the Heisenberg group, SU(2) or SU(1,1) in the respective cases and of the invariance of the measure $m$ under this action.

The following is the main result of this section.

\begin{theorem}\label{analytic}
	Let $\Phi:[0,1]\to\R$ be convex. Then
	\begin{align*}
		& \sup\left\{ \int_\Omega \Phi( u_f(z)^2) \,dm(z) :\ f\in\mathcal H \,,\ \vvvert f\vvvert =1 \right\} \\
		& =
		\begin{cases}
			\int_0^1 \Phi(s) s^{-1}\,ds & \qquad\text{in Case 1}\,,\\
			(2J)^{-1} \int_0^1 \Phi(s) s^{\frac{1}{2J}-1}\,ds & \qquad\text{in Case 2}\,,\\
			\alpha^{-1} \int_0^1 \Phi(s) s^{-\frac{1}{\alpha}-1}\,ds & \qquad\text{in Case 3}\,,
		\end{cases}
	\end{align*}
	and the supremum is attained in $\mathcal M$. If $\Phi$ is not affine linear and if the supremum is finite, then it is attained \emph{only} in $\mathcal M$.
\end{theorem}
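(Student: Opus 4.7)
The plan is to follow Kulikov's rearrangement-plus-isoperimetric strategy and treat the three cases uniformly. I would first set $v_f := u_f^2$ and let $v_f^*: [0, m(\Omega)) \to [0,\infty)$ denote the nonincreasing rearrangement of $v_f$ with respect to $m$; by equimeasurability,
$$
\int_\Omega \Phi(v_f)\,dm = \int_0^{m(\Omega)} \Phi(v_f^*(s))\,ds.
$$
A direct computation yields $v_{F_0}^*(s) = e^{-s}$ in Case~1, $v_{F_0}^*(s) = (1-s)^{2J}$ on $(0,1)$ in Case~2, and $v_{F_0}^*(s) = (1+s)^{-\alpha}$ in Case~3, which upon a change of variables match the three explicit right-hand sides of the theorem. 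Moreover, $\vvvert f\vvvert = 1$ forces $\int v_f\,dm = \int v_{F_0}\,dm$, so $v_f^*$ and $v_{F_0}^*$ share the same $L^1$-integral. The Hardy--Littlewood--P\'olya majorization principle therefore reduces the whole theorem, for every convex $\Phi$, to the one-variable majorization
$$
\int_0^s v_f^*(\sigma)\,d\sigma \leq \int_0^s v_{F_0}^*(\sigma)\,d\sigma \qquad \text{for all } s \in [0, m(\Omega)).
$$

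The proof of this majorization is isoperimetric. In each case the measure $dm$ is, up to a normalizing factor, the Riemannian area of a complete simply connected surface of constant Gaussian curvature (flat in Case~1, positive in Case~2, negative in Case~3), and the weight relating $u_f$ to $|f|$ is precisely the conformal factor. Since $\log|f|$ is subharmonic, with positive Dirac masses at its zeros, writing $\log v_f = 2\log|f| - W$ and passing to the Laplace--Beltrami operator $\Delta_g$ of the constant-curvature metric gives
$$
\Delta_g \log v_f \geq -c,
$$
with $c$ an explicit positive constant depending only on the case, and equality a.e.~if and only if $f$ is zero-free. Integrating this inequality on $\{v_f > t\}$, applying Green's identity, Cauchy--Schwarz along $\partial\{v_f > t\}$ with $|\nabla_g v_f|$, and the sharp isoperimetric inequality of the surface, one obtains a differential inequality for $\mu_f(t) := m(\{v_f > t\})$ that is saturated by $\mu_{F_0}$. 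Combined with the Berezin bound $\sup v_f \leq 1 = \sup v_{F_0}$, valid in all three cases via the reproducing kernel property and attained only on coherent states, this differential inequality shows that $\mu_f - \mu_{F_0}$ changes sign at most once, from positive to negative. Together with the identity $\int (\mu_f - \mu_{F_0})\,dt = 0$, this single-sign-change is equivalent to the required majorization.

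For the uniqueness statement, assume that the finite supremum is attained by some $f$ and that $\Phi$ is not affine linear. An integration by parts converts the functional equality into
$$
0 = \int (\Phi(v_f) - \Phi(v_{F_0}))\,dm = \int_0^1 H(p)\,d\Phi'(p),
$$
where $H(p) := \int_p^\infty (\mu_f - \mu_{F_0})\,dt \leq 0$ by the majorization and $d\Phi'$ is a nonnegative measure on $[0,1]$ that is not identically zero (since $\Phi$ is not affine). This forces $H$ to vanish on a set of positive $d\Phi'$-measure and, combined with the single-sign-change structure of $\mu_f - \mu_{F_0}$ and the continuity of $H$, on an entire subinterval of levels. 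At such levels one has equality throughout the isoperimetric chain: equality in the isoperimetric inequality forces $\{v_f > t\}$ to be a geodesic disk, and equality in the log-subharmonicity bound forces $f$ to be zero-free in the corresponding region. A continuity/analyticity argument then propagates this to the conclusion that $v_f$ is radially decreasing from some point, which identifies it with $v_{F_{w_0}}$ for the corresponding $w_0$. Since $|f|$ and $|F_{w_0}|$ agree as zero-free analytic functions, $f/F_{w_0}$ is a unimodular constant, giving $f = e^{i\theta} F_{w_0} \in \mathcal M$.

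The main technical obstacle is the rigorous derivation of the sharp differential inequality for $\mu_f$ at every level, requiring Sard-type arguments to handle critical values of $v_f$ and careful bookkeeping of the (favorable) Dirac contributions from the zeros of $f$. A secondary subtlety, compared to the strictly convex case already handled in \cite{Ku}, is that non-affinity of $\Phi$ only yields the equality $H \equiv 0$ on the support of $d\Phi'$, not a priori on all of $[0,1]$; extracting full rigidity then relies on the single-sign-change structure above together with the analyticity of $f$, which collapses equality on a subinterval to equality throughout.
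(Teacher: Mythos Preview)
Your approach to the \emph{inequality} is correct and close to the paper's, though packaged differently. Both arguments rest on the same isoperimetric differential inequality for $\mu_f(t)=m(\{u_f^2>t\})$ (Lemma~\ref{mono} in the paper), which in your language says that $\mu_f-\mu_{F_0}$ can change sign at most once, from positive to negative. You then invoke Hardy--Littlewood--P\'olya majorization, while the paper instead substitutes the monotone quantity into $\int\mu_f\,\Phi'$ and applies Chebyshev's sum inequality (Lemma~\ref{kulikov}) together with a monotonicity computation for an auxiliary one-variable function $A(\tau)$. These are essentially two ways of saying the same thing; your single-crossing-plus-equal-integrals route to majorization is perhaps more conceptual, the paper's Chebyshev route more hands-on. (One minor inaccuracy: single crossing plus equal integrals \emph{implies} majorization but is not equivalent to it.)

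The genuine gap is in your \emph{uniqueness} argument. From equality for a non-affine $\Phi$ you correctly extract that $H(p)=\int_p^1(\mu_f-\mu_{F_0})\,dt$ vanishes on a nontrivial subinterval, hence $\mu_f=\mu_{F_0}$ on some interval of levels. But you then try to push this through the rigidity of the isoperimetric chain: level sets are geodesic disks, $f$ is zero-free there, and ``a continuity/analyticity argument propagates this to the conclusion that $v_f$ is radially decreasing from some point.'' That propagation step is not substantiated---knowing that $\{v_f>t\}$ is a geodesic disk for each $t$ in an interval does not by itself force the disks to be concentric, and turning the Cauchy--Schwarz equality (constant $|\nabla_g v_f|$ on each level curve) into concentricity requires a genuine additional argument that you do not supply.

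The paper avoids this entirely, and so can you. Once $\mu_f=\mu_{F_0}$ on \emph{any} subinterval, the monotone quantity in Lemma~\ref{mono} is constant there and hence, by monotonicity, bounded on the correct side everywhere; this forces $\mu_f\le\mu_{F_0}$ on all of $(0,1)$, and equal integrals then give $\mu_f\equiv\mu_{F_0}$. In particular $\sup u_f^2=1$, and now the equality case of the reproducing-kernel bound (your ``Berezin bound'', Lemma~\ref{linfty} in the paper) immediately yields $f\in\mathcal M$. The paper reaches the same conclusion by showing that its auxiliary function $A(\tau)$ is \emph{strictly} increasing near $\tau=1$ whenever $\Phi$ is not affine, so equality forces $t_0=1$; either way, the whole isoperimetric-rigidity detour is unnecessary.
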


We will prove this theorem in the next subsection, after establishing some lemmas.


\subsection{Proof of Theorem \ref{analytic}}

We begin the proof of Theorem \ref{analytic} by recalling a simple and well-known bound on the supremum of $u_f$. This bound shows that $u_f\leq 1$ for $\vvvert f\vvvert =1$, so $\Phi(u_f^2)$ appearing in the theorem is well defined for $\Phi$ defined on $[0,1]$. The characterization of the cases of equality in the inequality $u_f\leq 1$ will eventually lead to the corresponding characterization in Theorem \ref{analytic}.

\begin{lemma}\label{linfty}
	Let $f\in\mathcal H$. Then
	$$
	\| u_f \|_{L^\infty(\Omega)} \leq \vvvert f \vvvert
	$$
	with equality if and only if either $f=0$ or $\vvvert f\vvvert^{-1} f\in\mathcal M$.
\end{lemma}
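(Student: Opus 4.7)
The approach rests on the fact that $\mathcal{H}$ is a reproducing kernel Hilbert space in each of the three cases. I would first identify the reproducing kernel explicitly: $K_z(\zeta) = e^{\pi\overline{z}\zeta}$ in Case 1, $K_z(\zeta) = (1+\overline{z}\zeta)^{2J}$ in Case 2, and $K_z(\zeta) = (1-\overline{z}\zeta)^{-\alpha}$ in Case 3. These can be verified directly by integrating against the monomials $\zeta^n$, which form an orthogonal basis of $\mathcal{H}$ in each case. A short computation gives $\vvvert K_z\vvvert^2 = K_z(z)$, which equals $e^{\pi|z|^2}$, $(1+|z|^2)^{2J}$, and $(1-|z|^2)^{-\alpha}$ respectively---exactly the reciprocal of the squared weight appearing in \eqref{eq:uf}.

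Thus by the reproducing property and Cauchy--Schwarz,
$$|f(z)| = |\langle f, K_z\rangle_\mathcal{H}| \leq \vvvert f\vvvert\cdot\vvvert K_z\vvvert,$$
and multiplying both sides by the weight from \eqref{eq:uf} yields $u_f(z)\leq\vvvert f\vvvert$ for every $z\in\Omega$, which gives the inequality in the lemma (the case $f=0$ being trivial). The rigidity in Cauchy--Schwarz at a particular $z_0\in\Omega$ forces $f = c\,K_{z_0}$ for some $c\in\mathbb{C}$; combined with the constraint $|c|\vvvert K_{z_0}\vvvert = \vvvert f\vvvert$, a direct computation identifies $\vvvert f\vvvert^{-1}f$ with $e^{i\theta}F_{z_0}$ in each case, hence with an element of $\mathcal{M}$.

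What I expect to be the main obstacle is ensuring that the pointwise bound actually saturates at some $z_0\in\Omega$ whenever $\|u_f\|_{L^\infty(\Omega)}=\vvvert f\vvvert$. In Cases 1 and 3, I would show $u_f\in C_0(\Omega)$, i.e., $u_f$ vanishes at the boundary of $\Omega$ (at infinity in Case 1, as $|z|\to 1^-$ in Case 3). The key input is the Lipschitz-type estimate $|u_f(z)-u_g(z)|\leq u_{f-g}(z)\leq\vvvert f-g\vvvert$ furnished by the already-proved pointwise inequality together with the reverse triangle inequality for $|\cdot|$. Since polynomials are dense in $\mathcal{H}$ (in both Fock and weighted Bergman), and for polynomial $f$ the decay of $u_f$ at the boundary is immediate from the Gaussian or power-type weight, uniform approximation transfers this decay to all $f\in\mathcal{H}$. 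Consequently, $\|u_f\|_{L^\infty(\Omega)}$ is attained in $\Omega$, and the equality characterization above applies. In Case 2, $\mathcal{P}_{2J}$ is finite-dimensional and $u_f$ is continuous on $\mathbb{C}$ with the finite limit $|a_{2J}|$ as $|z|\to\infty$, where $a_{2J}$ is the leading coefficient of $f = \sum_{n=0}^{2J}a_nz^n$. If the supremum is attained inside $\mathbb{C}$ the previous argument applies; if it is only reached at infinity, then $\vvvert f\vvvert = |a_{2J}|$, and the orthogonality of the monomials $\{z^n\}_{n=0}^{2J}$ in $\mathcal{P}_{2J}$ forces $a_n=0$ for $n<2J$, so $f$ is a scalar multiple of $z^{2J}=F_\infty$, which is again an element of $\mathcal{M}$.
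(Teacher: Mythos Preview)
Your proof is correct and follows essentially the same approach as the paper's: the inequality comes from Cauchy--Schwarz against the reproducing kernel, the equality characterization from the equality case of Cauchy--Schwarz, and the existence of a maximizing point from continuity of $u_f$ together with decay at the boundary (obtained via density of polynomials in Cases 1 and 3, and via continuous extension to the Riemann sphere in Case 2). The paper outsources most of these details to \cite{NiTi}, \cite{Bo}, \cite{Ku}, and \cite{Baetal}, whereas you spell them out directly; your separate treatment of the point at infinity in Case 2 via the orthogonal expansion (noting $\|z^{2J}\|_{\mathcal P_{2J}}=1$) is a clean alternative to viewing $z^{2J}$ as the reproducing kernel at $\infty$, but the content is the same.
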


\begin{proof}
	In Case 1, this is essentially \cite[Proposition 2.1]{NiTi}. Indeed, there the inequality in the lemma is proved and it is shown that $u_f$ tends to zero at infinity. The latter fact, together with continuity, implies that there is an $z\in\C$ such that $\|u_f\|_{L^\infty(\C)}=u_f(z)$, and then \cite[Proposition 2.1]{NiTi} implies $\vvvert f\vvvert^{-1} f\in\mathcal M$, provided $f\neq 0$.
	
	In Case 2, the inequality is mentioned in \cite[Paragraph after Remark 3.2]{Bo}. Since the function $u_f$ extends continuously to a function on the Riemann sphere $\C\cup\{\infty\}$, there is a $z\in\C\cup\{\infty\}$ such that $\|u_f\|_{L^\infty(\C)}=u_f(z)$, and then one obtains the equality condition from that in the Cauchy--Schwarz inequality.
	
	In Case 3, the inequality and the fact that $u_f$ tends to zero as $|z|\to 1$ is mentioned in \cite[(1.1) and the paragraph thereafter]{Ku}. As in the other cases, from the latter fact one can deduce the equality condition. Let us add some details concerning the facts mentioned in \cite{Ku}. To carry out the proof of the inequality, one can, for instance, use \cite[(5)]{Baetal} and argue as in \cite[Proposition 2.1]{NiTi}. To deduce the vanishing of $u_f$ on $\partial\D$, one can observe that this is true when $f$ is a polynomial and that those are dense in $A^2_\alpha(\D)$ by \cite[(5)]{Baetal}.
\end{proof}

We now come to the core of the proof of Theorem \ref{analytic}, which concerns a monotonicity of the measure of superlevel sets of $u_f$. In Case 3, the following lemma and its proof are a special case of \cite[Theorem 2.1]{Ku}. Our contribution is to extend the reasoning to Cases 1 and 2. (Indeed, similar arguments in the setting of Case 1 have already appeared in \cite[Theorem 3.1]{NiTi}; in particular, inequality \eqref{eq:muprime} is the same as \cite[(3.17) combined with (3.10)]{NiTi}. Note, however, that the authors of \cite{NiTi} use Lieb's inequality as an ingredient and do not reprove it using their method, in contrast to what we do here; see, in particular, \cite[Theorem 5.2]{NiTi}.) We also point out that arguments of this type are reminiscent of those of Talenti in \cite{Ta}.

\begin{lemma}\label{mono}
	Let $f\in\mathcal H$ and $\mu(\kappa):= m(\{u_f>\kappa\})$ for $\kappa>0$. Then the function
	\begin{align*}
		\kappa \mapsto 
		\begin{cases}
			-2\ln \kappa - \mu(\kappa) & \text{in Case 1} \,,\\
			\kappa^{-\frac1J}(1-\mu(\kappa)) & \text{in Case 2} \,,\\
			\kappa^\frac2\alpha(-1-\mu(\kappa)) & \text{in Case 3} \,,
		\end{cases}
	\end{align*}
	is nondecreasing on $(0,\|u_f\|_\infty)$. Moreover, if $f\in\mathcal M$, then this function is constant.
\end{lemma}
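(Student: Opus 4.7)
The plan is to run, uniformly in the three cases, the isoperimetric argument of Kulikov \cite{Ku}, in which subharmonicity of $\ln|f|^2$ is combined with the classical isoperimetric inequality in the relevant constant-curvature geometry. Set $\phi := \ln u_f^2$. Using $\Delta\ln|f|^2 = 4\pi\sum_j n_j \delta_{z_j}$ distributionally (the sum being over the zeros of $f$ with their multiplicities) and computing the Laplacian of the weight in each of the three cases, one finds
\[
-\Delta\phi = C_*(z) - 4\pi\sum_j n_j\delta_{z_j},
\]
with $C_*(z)$ equal to $4\pi$, $8J(1+|z|^2)^{-2}$, and $4\alpha(1-|z|^2)^{-2}$ respectively; the key structural observation is that $C_*\, dA = c_*\, dm$ for constants $c_*\in\{4\pi, 8\pi J, 4\pi\alpha\}$.

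For almost every $\kappa\in(0,\|u_f\|_\infty)$---by Sard's theorem together with the decay of $u_f$ at the natural boundary of $\Omega$ recorded in the proof of Lemma~\ref{linfty}---the level set $\Gamma_\kappa := \{u_f = \kappa\}$ is a smooth compact curve, and the superlevel set $U_\kappa := \{u_f > \kappa\}$ is open, bounded (regarding $\C$ as the Riemann sphere in Case~2 if needed), and disjoint from the zero set of $f$. The divergence theorem applied to $\phi$ on $U_\kappa$ makes the delta contributions vanish, and using $|\nabla\phi| = 2|\nabla u_f|/\kappa$ along $\Gamma_\kappa$ yields
\[
\int_{\Gamma_\kappa} |\nabla u_f|\, d\sigma_{\mathrm{eucl}} = \tfrac{c_*}{2}\,\kappa\,\mu(\kappa).
\]
I would then equip $\Omega$ with its natural conformal metric $g$ (Euclidean in Case~1, round on $\Sph^2$ in Case~2 via stereographic projection, hyperbolic of curvature $-1$ in Case~3), so that $dA_g = \beta\, dm$ with $\beta\in\{1,4\pi,4\pi\}$ and $|\nabla_g u|\, d\sigma_g = |\nabla u|\, d\sigma_{\mathrm{eucl}}$. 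Combined with the co-area formula, this produces
\[
\int_{\Gamma_\kappa}|\nabla_g u_f|\, d\sigma_g = \tfrac{c_*}{2}\kappa\mu(\kappa), \qquad -\mu'(\kappa) = \beta^{-1}\int_{\Gamma_\kappa}\frac{d\sigma_g}{|\nabla_g u_f|},
\]
and Cauchy--Schwarz then gives $L_g(\kappa)^2 \le \tfrac{c_*\beta}{2}\,\kappa\,\mu(\kappa)\,(-\mu'(\kappa))$ for the $g$-length $L_g(\kappa)$ of $\Gamma_\kappa$.

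The classical isoperimetric inequality in $(\Omega,g)$ bounds $L_g(\kappa)^2$ from below by $4\pi A_g$, $A_g(4\pi-A_g)$, and $A_g(A_g+4\pi)$ in the three cases, with $A_g = \beta\mu(\kappa)$. Substituting, dividing by $\mu$ and simplifying reduces the two-sided estimate to
\[
-\mu'(\kappa)\ge \tfrac{2}{\kappa},\qquad \tfrac{-\mu'(\kappa)}{1-\mu(\kappa)}\ge \tfrac{1}{J\kappa},\qquad \tfrac{-\mu'(\kappa)}{1+\mu(\kappa)}\ge \tfrac{2}{\alpha\kappa},
\]
respectively, and integrating each (using only that a nonincreasing function satisfies $\mu(a)-\mu(b)\ge\int_a^b(-\mu'(\kappa))\,d\kappa$) recovers the claimed monotonicities of $-2\ln\kappa - \mu$, $(1-\mu)\kappa^{-1/J}$, and $(-1-\mu)\kappa^{2/\alpha}$. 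For $f\in\mathcal M$, the equimeasurability of $u_{F_w}$ noted in Subsection~2.1 reduces the constancy statement to $f = F_0$, for which $u_{F_0}$ is radial and an elementary computation gives $\mu(\kappa) = -2\ln\kappa$, $1-\kappa^{1/J}$, and $\kappa^{-2/\alpha}-1$ in the three cases, making the stated function identically $0$, $1$, and $-1$.

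The main obstacle I foresee is simply careful bookkeeping: the interplay between the reference measure $dm$, the conformal background metric $g$, and the various geometric constants, together with justifying Sard's theorem, the co-area formula, and the monotone-function integration at almost every $\kappa$. Once the three cases are put on the same footing, the argument is a mechanical combination of classical ingredients---subharmonicity of $\ln|f|^2$, Cauchy--Schwarz, and the isoperimetric inequality in the appropriate constant-curvature geometry.
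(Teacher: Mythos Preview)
Your proposal is correct and follows essentially the same route as the paper: Sard/co-area, Green's theorem applied to $\ln u_f$ on the superlevel set, Cauchy--Schwarz on the level curve, the isoperimetric inequality in the relevant constant-curvature geometry, and the explicit computation for $F_0$. The only cosmetic difference is that you introduce the conformal metric $g$ explicitly, whereas the paper works with the density $\omega=dm/dA$ and the weighted length $\ell(\gamma)=\int_\gamma\sqrt\omega\,|dz|$; these are the same objects up to your constants $\beta$. One small point: the paper actually shows via the co-area formula that $\mu$ is locally absolutely continuous on $(0,\|u_f\|_\infty]$, which makes the integration step immediate in all three cases; your appeal to the monotone-function inequality $\mu(a)-\mu(b)\ge\int_a^b(-\mu')$ handles Case~1 directly but in Cases~2 and~3 should be applied to $\ln(1\mp\mu)$ rather than to $\mu$ itself.
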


\begin{proof}
	Writing $M:=\|u_f\|_{L^\infty(\Omega)}$ for brevity, we observe that, by Sard's theorem (noting the $u_f$ is real analytic on $\Omega$ viewed as a subset of $\R^2$), for almost every $\kappa\in(0,M)$, $\{ u_f =\kappa\}$ is a smooth curve (or, possibly, union thereof). Denote one-dimensional Hausdorff measure on this curve by $|dz|$, we have, by the co-area formula,
	$$
	\int_\Omega g(x,y) |\nabla u_f|\,dA(z) = \int_0^{t_0} \int_{\{u_f=\kappa\}} g(x,y)\,|dz| \,d\tau \,.
	$$
	Let
	$$
	\omega(z) := \frac{dm(z)}{dA(z)} =
	\begin{cases}
		1 & \qquad\text{in Case 1} \,,\\
		\pi^{-1} (1+|z|^2)^{-2} & \qquad\text{in Case 2} \,,\\
		\pi^{-1} (1-|z|^2)^{-2} & \qquad\text{in Case 3} \,.
	\end{cases}
	$$
	Taking $g = |\nabla u_f|^{-1}\omega \1_{\{u_f>\kappa\}}\1_{\{|\nabla u_f|\neq 0\}}$ and noting that $m(\{ |\nabla u_f|= 0\})=0$ by real analyticity, we find, for any $\kappa\in(0,M)$,
	$$
	\mu(\kappa) = \int_\kappa^{M}  \int_{\{u_f=\kappa\}} |\nabla u_f|^{-1} \omega \,|dz|\,d\tau \,.
	$$
	Thus, $\mu$ is absolutely continuous on compact subintervals of $(0,M]$ and for almost every $\kappa\in(0,M)$,
	$$
	\mu'(\kappa) = -  \int_{\{u_f =\kappa \}} |\nabla u_f|^{-1} \omega\,|dz| \,.
	$$
	For a curve $\gamma$ in $\Omega$ let us set
	$$
	\ell(\gamma): = \int_\gamma \sqrt\omega\,|dz| \,.
	$$
	In particular, for the level set $\{u_f=\kappa\}$ we obtain, by the Schwarz inequality,
	$$
	\ell(\{u_f=\kappa\})^2 \leq  \int_{\{u_f=\kappa\}} |\nabla u_f|^{-1} \omega\,|dz| \ \int_{\{u_f=\kappa\}} |\nabla u_f| |dz| \,.
	$$
	As we argued before, the first term on the right side is $-\mu'(\kappa)$. Let us consider the second term. Since the outer unit normal vector field $\nu$ to $\{u_f>\kappa\}$ on the boundary $\{u_f=\kappa\}$ is given by $-\nabla u_f/|\nabla u_f|$, we have $|\nabla u_f| = - \kappa \nu\cdot \nabla (\ln u_f)$ and therefore, by Green's theorem,
	$$
	\int_{\{u_f=\kappa\}} |\nabla u_f| |dz| = - \kappa \int_{\{u_f>\kappa \}} \Delta \ln u_f \,dA(z) \,.
	$$
	To compute the Laplacian of $\ln u_f$ we recall that $f$ is a positive weight times the absolute value of an analytic function. On the set $\{ u_f>\kappa\}$ the analytic function does not have zeros, so the logarithm of its absolute value is harmonic there. Thus, the Laplacian of $\ln u_f$ coincides with the Laplacian of the logarithm of the weight. Explicitly,
	$$
	\Delta \ln u_f =
	\begin{cases}
		-\frac\pi2 \Delta |z|^2 = -2\pi & \qquad\text{in Case 1} \,,\\
		-j \Delta\ln (1+|z|^2) = -4j (1+|z|^2)^{-2} & \qquad\text{in Case 2} \,,\\
		\frac\alpha 2 \Delta\ln(1-|z|^2) = -2\alpha (1-|z|^2)^{-2} & \qquad\text{in Case 3} \,. 
	\end{cases}
	$$
	Note that the right side is equal to a constant multiple of $\omega$ and therefore
	$$
	\int_{\{u_f=\kappa\}} |\nabla u_f| |dz| = 
	\begin{cases}
		2\pi \kappa \mu(\kappa) & \qquad\text{in Case 1}\,\\
		4\pi j \kappa \mu(\kappa) & \qquad\text{in Case 2}\,,\\
		2\pi \alpha \kappa\mu(\kappa) & \qquad\text{in Case 3} \,.
	\end{cases}
	$$
	To summarize, we have shown that
	$$
	\ell(\{u_f=\kappa\})^2 \leq 
	\begin{cases}
		-2\pi\kappa\mu'(\kappa)\mu(\kappa) & \qquad\text{in Case 1}\,\\
		-4\pi j \kappa \mu'(\kappa) \mu(\kappa) & \qquad\text{in Case 2}\,,\\
		- 2\pi \alpha \kappa \mu'(\kappa) \mu(\kappa) & \qquad\text{in Case 3} \,.
	\end{cases}
	$$
	We now use the isoperimetric inequality to bound the left side from below; for references in the spherical and hyperbolic case see, for instance, \cite[(4.23)]{Os}, as well as \cite{Be}, \cite[Third part, Chapter IV]{Le}, \cite{Ra,Sm1,Sm2}. We have
	$$
	\ell(\partial A)^2 \geq 
	\begin{cases}
		4\pi m(A) & \qquad\text{in Case 1}\,,\\
		4\pi m(A) (1-m(A)) & \qquad\text{in Case 2}\,,\\
		4\pi m(A) (1+ m(A)) & \qquad\text{in Case 3} \,.
	\end{cases}	
	$$
	Using these inequalities with $A=\{u_f>\kappa\}$, dividing by $\mu(\kappa)$ (which is nonzero for $\kappa<M$) and combining the resulting inequality with the above upper bound on $\ell(\{u_f=\kappa\})^2$ we obtain
	\begin{equation}
		\label{eq:muprime}
		\begin{cases}
		2 \leq -\kappa \mu'(\kappa) & \qquad\text{in Case 1}\,,\\
		1-\mu(\kappa) \leq - j\kappa \mu'(\kappa) & \qquad\text{in Case 2}\,,\\
		1+ \mu(\kappa) \leq - \frac\alpha 2\kappa \mu'(\kappa) & \qquad\text{in Case 3} \,.
		\end{cases}
	\end{equation}
	These inequalities are equivalent to the monotonicity assertions in the lemma.
	
	It remains to verify that this function is constant if $f\in\mathcal M$. By the equimeasurability discussed before the statement of Theorem \ref{analytic} it suffices to prove this for $f=F_0\in\mathcal M$. For all $\kappa\leq1$ we have
	$$
	m(\{ u_{F_0}>\kappa\}) = 
	\begin{cases}
		\int_{\C} \1(e^{-\frac\pi2|z|^2}>\kappa)\,dA(z) = -2\ln\kappa & \qquad\text{in Case 1}\,,\\
		\pi^{-1} \int_\C \1((1+|z|^2)^{-J}>\kappa)\,\frac{dA(z)}{(1+|z|^2)^2} = 1- \kappa^{\frac{1}{J}} & \qquad\text{in Case 2}\,,\\
		\pi^{-1} \int_\D \1((1-|z|^2)^{\frac\alpha2}>\kappa)\,\frac{dA(z)}{(1-|z|^2)^2} = \kappa^{-\frac2\alpha}-1 & \qquad\text{in Case 3}\,.
	\end{cases}
	$$
	It follows that for $f=F_0$ the function in Lemma \ref{mono} is, indeed, constant.
\end{proof}

The last ingredient in the proof of Theorem \ref{analytic} is an inequality due to Chebyshev \cite{Ch}; see also \cite[Theorems 43 and 236]{HaLiPo}. For a proof of the following lemma, with a slightly weaker assumption than monotonicity of one of the functions see \cite[Lemma~4.1]{Ku}.

\begin{lemma}\label{kulikov}
	Let $t_0>0$ and let $w,h$ be nondecreasing functions on $[0,t_0]$. Then
	$$
	\int_0^{t_0} h(t)w(t)\,dt \geq t_0^{-1} \int_0^{t_0} h(t)\,dt\ \int_0^{t_0} w(t)\,dt \,. 
	$$
\end{lemma}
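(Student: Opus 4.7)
The plan is to use the classical two-variable symmetrization trick behind Chebyshev's sum inequality. The key observation is that since $h$ and $w$ are both nondecreasing on $[0,t_0]$, the product
$$
\bigl(h(s)-h(t)\bigr)\bigl(w(s)-w(t)\bigr)
$$
is nonnegative for every pair $(s,t)\in[0,t_0]^2$: the two factors share the same sign, depending on whether $s\geq t$ or $s\leq t$.

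First I would integrate this nonnegative quantity over the square $[0,t_0]^2$ with respect to $ds\,dt$. Expanding the product yields four terms; by symmetry in $(s,t)$ the cross terms $h(s)w(t)$ and $h(t)w(s)$ contribute equal amounts, and the diagonal terms $h(s)w(s)$ and $h(t)w(t)$ each pick up a factor of $t_0$ from the integration in the other variable. The resulting identity is
$$
0 \leq \int_0^{t_0}\!\!\int_0^{t_0} \bigl(h(s)-h(t)\bigr)\bigl(w(s)-w(t)\bigr)\,ds\,dt = 2 t_0 \int_0^{t_0} h(t)w(t)\,dt - 2\int_0^{t_0} h(t)\,dt\,\int_0^{t_0} w(t)\,dt.
$$
Dividing by $2t_0$ gives the claimed inequality.

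There is no real obstacle here: monotonicity of both factors plus Fubini are all that is needed, and the argument does not require continuity or any additional regularity of $h$ or $w$ beyond measurability and integrability, which are implicit in the statement. (If one only wished to assume that one of the two functions is monotone, as in \cite[Lemma~4.1]{Ku}, one would instead apply a rearrangement/layer-cake comparison, but that refinement is not needed for the present formulation.)
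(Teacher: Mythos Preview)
Your proof is correct; this is the classical symmetrization argument for Chebyshev's integral inequality. The paper does not actually supply its own proof of this lemma---it merely attributes the result to Chebyshev and cites \cite[Theorems~43 and~236]{HaLiPo} and \cite[Lemma~4.1]{Ku}---so there is nothing to compare against, and your self-contained argument is entirely appropriate here.
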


We are finally in position to prove the main result of this section.

\begin{proof}[Proof of Theorem \ref{analytic}]
	We begin with some preliminary remarks concerning convex functions $\Phi$ on $[0,1]$. We first argue that without loss of generality we may assume that $\Phi$ is continuous on $[0,1]$. By convexity, it is continuous on $(0,1)$, so we only need to discuss the endpoints. It is elementary that $\Phi(0^+):=\lim_{s\to 0^+}\Phi(s)$ and $\Phi(1^-) := \lim_{s\to 1^-}\Phi(s)$ exist and are finite. (Note that these limits are $\leq\Phi(0)$ and $\leq\Phi(1)$, respectively, so in particular they are not $+\infty$.) By analyticity, $m(\{u_f=0\}) = m(\{ f=0\})=0$, so on this set we may replace $\Phi(0)$ by $\Phi(0^+)$ without changing the value of the integral. Similarly, by Lemma \ref{linfty} and its proof, $\{ u_f =1\}$ consists at most of one point, so on this set we may replace $\Phi(1)$ by $\Phi(1^-)$ without changing the value of the integral. Thus, we may assume that $\Phi$ is continuous on $[0,1]$.
	
	Next, we argue that we may assume that $\Phi(0)=0$. In Case 2, $m$ is a finite measure, so this can be accomplished by replacing $\Phi$ by $\Phi-\Phi(0)$, which has a trivial effect on the supremum. In Cases 1 and 3, we take $f\in\mathcal M$ and see from the explict form that $u_f(z)\to 0$ as $|z|\to\infty$ in Case 1 and $|z|\to 1$ in Case 3. (In fact, this holds for any $f\in\mathcal H$, as discussed in the proof of Lemma \ref{linfty}, but this is not needed here.) It follows that, if $\Phi(0)\neq 0$, then the supremum is equal to $+\infty$ and this value is achieved by all $f\in\mathcal M$, so the assertion of the theorem is true in this case. Thus, in what follows we may assume that $\Phi(0)=0$.
	
	After these preliminaries we begin with the main part of the argument. Let $f\in\mathcal H$ with $\vvvert f \vvvert =1$. We define $u_f$ by \eqref{eq:uf} and set
	$$
	s_0 := \|u_f\|_{L^\infty(\Omega)}^2 \,.
	$$
	Then the quantity we are interested in can be written as
	$$
	\int_\Omega \Phi(u_f(z)^2) \,dm(z) = \int_0^{s_0} m(\{ u_f^2>s\}) \Phi'(s)\,ds \,.
	$$
	Here $\Phi'$ denotes either the left or the right-sided derivative of $\Phi$, which are known to exist everywhere and to coincide outside of a countable set \cite[Theorem 1.26]{Si3}. We also used the facts that $\Phi$ is absolutely continuous \cite[Theorem 1.28]{Si3} and that $\Phi(0)=0$.
	
	We now write the quantity on the right side as
	\begin{align*}
		\int_0^{s_0} \left( -\ln s - g(s^\frac12) \right) \Phi'(s)\,ds & \qquad \text{in Case 1}\,,\\
		\int_0^{s_0} \left( 1 - s^\frac1{2J} g(s^\frac12) \right) \Phi'(s)\,ds & \qquad \text{in Case 2}\,,\\
		\int_0^{s_0} \left( -1 - s^\frac1\alpha g(s^\frac12)\right) \Phi'(s)\,ds & \qquad \text{in Case 3}\,,
	\end{align*}
	where, according to Lemma \ref{mono}, $\kappa\mapsto g(\kappa)$ is nondecreasing on $(0,s_0^\frac12)$. In particular, when $\Phi$ is the identity, we obtain, in view of the normalization of $f$,
	\begin{equation}
		\label{eq:norm}
		1 = \vvvert f \vvvert^2 =
		\begin{cases}
			\int_0^{s_0} \left( -\ln s - g(s^\frac12) \right) ds & \qquad \text{in Case 1}\,,\\
			(2J+1) \int_0^{s_0} \left( 1 - s^\frac1{2J} g(s^\frac12) \right) ds & \qquad \text{in Case 2}\,,\\
			(\alpha-1) \int_0^{s_0} \left( -1 - s^{-\frac1\alpha} g(s^\frac12)\right) ds & \qquad \text{in Case 3}\,.
		\end{cases}
	\end{equation}
	Let us set
	\begin{align*}
		t_0 :=
		\begin{cases}
			s_0 & \qquad\text{in Case 1}\,, \\
			s_0^\frac{2J+1}{2J} & \qquad\text{in Case 2}\,,\\
			s_0^\frac{\alpha-1}{\alpha} & \qquad\text{in Case 3} \,,
		\end{cases}
	\end{align*}
	and, for $0\leq t\leq t_0$,
	\begin{equation*}
		h(t):= 
		\begin{cases}
			g(t^\frac12) & \qquad\text{in Case 1} \,,\\
			g(t^\frac{j}{2J+1}) & \qquad\text{in Case 2}\,,\\
			g(t^\frac{\alpha}{2(\alpha-1)}) & \qquad\text{in Case 3}\,.
		\end{cases}
	\end{equation*}
	Then the normalization \eqref{eq:norm} can be equivalently written as
	$$
	1 =
	\begin{cases}
		\int_0^{t_0} \left( -\ln t - h(t) \right) dt & \qquad \text{in Case 1}\,,\\
		2J \int_0^{t_0} \left( t^{-\frac1{2J+1}} - h(t) \right) dt & \qquad \text{in Case 2}\,,\\
		\alpha \int_0^{t_0} \left( - t^{\frac1{\alpha-1}} - h(t)\right) dt & \qquad \text{in Case 3}\,,
	\end{cases}
	$$
	while the quantity to be maximized is 
	\begin{equation}
		\label{eq:analyticproof}
		\int_\Omega \Phi(u_f(z)^2) \,dm(z) =
		\begin{cases}
			\int_0^{t_0} \left( -\ln t - h(t) \right) w(t)\,dt & \qquad \text{in Case 1}\,,\\
			2J \int_0^{t_0} \left( t^{-\frac1{2J+1}} - h(t) \right) w(t)\,dt & \qquad \text{in Case 2}\,,\\
			\alpha \int_0^{t_0} \left( - t^{\frac1{\alpha-1}} - h(t)\right) w(t)\,dt & \qquad \text{in Case 3}\,,
		\end{cases}
	\end{equation}
	where, for $0\leq t\leq t_0$,
	$$
	w(t) :=
	\begin{cases}
		\Phi'(t) & \qquad \text{in Case 1}\,,\\
		\Phi'(t^\frac{2J}{2J+1}) & \qquad \text{in Case 2}\,,\\
		\Phi'(t^\frac\alpha{\alpha-1}) & \qquad \text{in Case 3}\,.
	\end{cases}
	$$
	Since $\Phi$ is convex, $\Phi'$ is nondecreasing and therefore $w$ is nondecreasing as well. Also, $h$ is nondecreasing since $g$ is. Thus, Lemma \ref{kulikov} is applicable and, for given $t_0$, an upper bound on the right side of \eqref{eq:analyticproof} is obtained by replacing $h$ by $t_0^{-1} \int_0^{t_0} h(t)\,dt$. According to the normalization, we have
	$$
	t_0^{-1} \int_0^{t_0} h(t)\,dt = C(t_0) \,,
	$$
	where, for $\tau\in(0,1]$,
	$$
	C(\tau) := \begin{cases}
		-\tau^{-1} - \tau^{-1} \int_0^{\tau}\ln t\,dt & \qquad \text{in Case 1}\,,\\
		-(2J)^{-1}\tau^{-1} + \tau^{-1} \int_0^{\tau} t^{-\frac1{2J+1}}\,dt & \qquad \text{in Case 2}\,,\\
		-\alpha^{-1} \tau^{-1} - \tau^{-1} \int_0^{\tau} t^{\frac1{\alpha-1}} \,dt & \qquad \text{in Case 3}\,.
	\end{cases}
	$$
	Thus, we have shown the upper bound
	\begin{equation}
		\label{eq:upperboundproof}
		\int_\Omega \Phi(u_f(z)^2) \,dm(z) \leq A(t_0)
	\end{equation}
	where, for $\tau\in(0,1]$,
	$$
	A(\tau) :=
	\begin{cases}
		\int_0^{\tau} \left( -\ln t - C(\tau) \right) w(t)\,dt & \qquad \text{in Case 1}\,,\\
		2J \int_0^{\tau} \left( t^{-\frac1{2J+1}} - C(\tau) \right) w(t)\,dt & \qquad \text{in Case 2}\,,\\
		\alpha \int_0^{\tau} \left( - t^{\frac1{\alpha-1}} - C(\tau) \right) w(t)\,dt & \qquad \text{in Case 3}\,.
	\end{cases}
	$$
	
	Our goal now is to show that $A$ is nondecreasing in $(0,1]$. Since, by Lemma \ref{linfty}, $t_0\leq 1$, inserting this into \eqref{eq:upperboundproof} gives us the upper bound $A(1)$. Later, we will argue that this is the claimed optimal bound and discuss the cases of equality. 
	
	In order to prove the monotonicity of $A$, we first compute
	$$
	C(\tau) =
	\begin{cases}
		-\tau^{-1} - \ln \tau + 1 & \qquad \text{in Case 1}\,,\\
		-(2J)^{-1} \tau^{-1} + \frac{2J+1}{2J} \tau^{-\frac{1}{2J+1}}& \qquad \text{in Case 2}\,,\\
		-\alpha^{-1} \tau^{-1} -\frac{\alpha-1}{\alpha} \tau^{\frac1{\alpha-1}}  & \qquad \text{in Case 3}\,.
	\end{cases}
	$$
	From these expressions one easily deduces that $C'>0$ in $(0,1)$. Another consequence that we will use soon is that
	\begin{equation}
		\label{eq:ceq}
		0 = 
		\begin{cases}
			-\ln \tau - C(\tau) - C'(\tau) \tau & \qquad \text{in Case 1}\,,\\
			\tau^{-\frac1{2J+1}} - C(\tau) - C'(\tau) \tau & \qquad \text{in Case 2}\,,\\
			- \tau^{\frac1{\alpha-1}} - C(\tau) - C'(\tau) \tau & \qquad \text{in Case 3}\,.
		\end{cases}
	\end{equation}
	We now compute
	$$
	A'(\tau) = 
	\begin{cases}
		\left( -\ln \tau - C(\tau) \right) w(\tau) - C'(\tau) \int_0^{\tau} w(t)\,dt & \qquad \text{in Case 1}\,,\\
		2J \left( \tau^{-\frac1{2J+1}} - C(\tau) \right) w(\tau) - 2J C'(\tau) \int_0^{\tau} w(t)\,dt & \qquad \text{in Case 2}\,,\\
		\alpha \left( - \tau^{\frac1{\alpha-1}} - C(\tau) \right) w(\tau) - \alpha C'(\tau) \int_0^{\tau} w(t)\,dt & \qquad \text{in Case 3}\,.
	\end{cases}
	$$
	Since $w$ is increasing, we have $\int_0^{\tau}w(t)\,dt \leq \tau w(\tau)$. This, together with $C'(\tau)\geq 0$, implies
	$$
	A'(\tau) \geq 
	\begin{cases}
		\left( -\ln \tau - C(\tau) \right) w(\tau) - C'(\tau) \tau w(\tau) & \qquad \text{in Case 1}\,,\\
		2J \left( \tau^{-\frac1{2J+1}} - C(\tau) \right) w(\tau) - 2J C'(\tau) \tau w(\tau) & \qquad \text{in Case 2}\,,\\
		\alpha \left( - \tau^{\frac1{\alpha-1}} - C(\tau) \right) w(\tau) - \alpha C'(\tau) \tau w(\tau) & \qquad \text{in Case 3}\,.
	\end{cases}
	$$
	According to \eqref{eq:ceq}, the right side is equal to zero in all cases. This proves that $A'\geq 0$ in $(0,1]$.
	
	As mentioned before, this allows us to replace $A(t_0)$ by $A(1)$ in \eqref{eq:upperboundproof}. We claim that this bound is optimal. Indeed, if $f\in\mathcal M$, then, by the second part of Lemma \ref{mono}, $g$ is constant. Thus, also $h$ is constant and nothing was lost when applying Lemma \ref{kulikov}. This proves that in this case \eqref{eq:upperboundproof} is an equality and, since $t_0=1$ by Lemma \ref{linfty}, we have shown the claimed optimality.
	
	Finally, assume that $A(1)<\infty$ and that $\Phi$ is not affine linear. Then $\Phi'$ is not constant and neither is $w$. We deduce that there is an $\epsilon>0$ such that the inequality $\int_0^\tau w(t)\,dt \leq \tau w(\tau)$ is strict for all $\tau\in(1-\epsilon,1]$. This, together with the fact that $C'(\tau)>0$ for all $\tau\in(0,1)$ implies that $A'(\tau)>0$ for all $\tau\in(1-\epsilon,1)$. In particular, $A(\tau)<A(1)$ if $\tau\in[0,1)$.
	
	As a consequence, if $f\in\mathcal H$ with $\vvvert f\vvvert=1$ attains the supremum in Theorem \ref{analytic}, then necessarily $t_0=1$. Then, by Lemma \ref{linfty}, $f\in\mathcal M_1$, as claimed. This completes the proof of Theorem \ref{analytic}, except for the explicit value of the supremum.
	
	To compute the latter, we may choose an arbitrary element in $\mathcal M$ and it is convenient to take $f=F_0=1$. Then we obtain, by integrating in radial coordinates,
	$$
	\int_\Omega \Phi( u_{F_0}(z)^2)\,dz =
	\begin{cases}
		2\pi \int_0^\infty \Phi(e^{-\pi r^2})\,r\,dr & \qquad\text{in Case 1} \,,\\
		2\int_0^\infty \Phi((1+r^2)^{-2J})(1+r^2)^{-2} r\,dr & \qquad\text{in Case 2} \,,\\
		2\int_0^1 \Phi((1-r^2)^\alpha)(1-r^2)^{-2} r\,dr & \qquad\text{in Case 3} \,.
	\end{cases}
	$$
	Changing variables $s=e^{-\pi r^2}$, $s=(1+r^2)^{-2J}$ and $s=(1+r^2)^\alpha$ in the three cases we easily arrive at the claimed formulas.	
\end{proof}


\subsection{Extension to density matrices}

In this subsection we generalize the inequality in Theorem \ref{analytic} and, under a slightly stronger assumption on $\Phi$, we characterize the cases of equality. We use an argument similar to \cite[Lemma 2]{Li}.

Given an operator $\rho\geq 0$ with $\Tr\rho=1$ on one of the Hilbert spaces $\mathcal H$, we define a function $u_\rho$ on $\Omega$ as follows. We can write
$$
\rho = \sum_n p_n |f_n\rangle\langle f_n|
\qquad\text{with}\ \sum_n p_n = 1 \,,\ p_n\geq 0 \,,\ \langle f_n,f_m \rangle = \delta_{n,m} \,.
$$
We then set
$$
u_\rho(z) := \left( \sum_n p_n u_{f_n}(z)^2 \right)^\frac12 \,.
$$
It is easily checked that this is well-defined. (Note, in particular, the nonuniqueness of the above decomposition of $\rho$ in the case of a degenerate eigenvalue.) Moreover, for $\rho=|f\rangle\langle f|$, this definition of $u_\rho$ coincides with the earlier one of $u_f$.

\begin{corollary}\label{analyticcor}
	Let $\Phi:[0,1]\to\R$ be convex. Then
	\begin{align*}
		& \sup\left\{ \int_\Omega \Phi( u_\rho(z)^2) \,dm(z) :\ \rho\geq 0\ \text{on}\ \mathcal H \,,\ \Tr\rho =1 \right\} \\
		& =\begin{cases}
			\int_0^1 \Phi(s) s^{-1}\,ds & \qquad\text{in Case 1}\,,\\
			(2J)^{-1} \int_0^1 \Phi(s) s^{\frac{1}{2J}-1}\,ds & \qquad\text{in Case 2}\,,\\
			(2K)^{-1} \int_0^1 \Phi(s) s^{-\frac{1}{2K}-1}\,ds & \qquad\text{in Case 3}\,,
		\end{cases}
	\end{align*}
	and the supremum is attained for $\rho=|F\rangle\langle F|$ with $F\in \mathcal M$. If, moreover, $\Phi$ is strictly convex and if the supremum is finite, then it is attained \emph{only} for such $\rho$.
\end{corollary}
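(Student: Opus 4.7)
The plan is to reduce Corollary \ref{analyticcor} to Theorem \ref{analytic} by Jensen's inequality, in the spirit of \cite[Lemma 2]{Li}. Writing $\rho = \sum_n p_n |f_n\rangle\langle f_n|$ with $\{f_n\}$ orthonormal, $p_n \geq 0$, $\sum_n p_n = 1$, the definition $u_\rho(z)^2 = \sum_n p_n u_{f_n}(z)^2$ expresses $u_\rho(z)^2$ as a pointwise convex combination. Applying Jensen to the convex function $\Phi$, integrating against $dm$, and invoking Theorem \ref{analytic} on each unit vector $f_n$ yields
$$
\int_\Omega \Phi(u_\rho(z)^2)\, dm(z) \leq \sum_n p_n \int_\Omega \Phi(u_{f_n}(z)^2)\, dm(z) \leq R,
$$
where $R$ denotes the common value on the right-hand side of Theorem \ref{analytic} and of the corollary. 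Rank-one projections $\rho=|F\rangle\langle F|$ with $F\in\mathcal M$ are admissible and attain $R$, giving both the value and the existence of maximizers.

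For uniqueness under strict convexity, suppose $\rho = \sum_n p_n |f_n\rangle\langle f_n|$ attains the supremum (assumed finite). Since strict convexity implies $\Phi$ is not affine linear, Theorem \ref{analytic} forces each $f_n$ with $p_n>0$ to lie in $\mathcal M$; write $f_n = e^{i\theta_n} F_{w_n}$. Moreover, equality in Jensen's inequality for a \emph{strictly} convex $\Phi$ requires, for a.e.\ $z$, that all $u_{f_n}(z)^2$ with $p_n>0$ coincide; by continuity of the $u_{F_{w_n}}$, this equality then holds for every $z\in\Omega$.

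The key step is then the rigidity claim that the map $w \mapsto u_{F_w}$ is injective on the respective parameter set. This follows directly from the explicit formulas: in Case 1, $u_{F_w}(z) = e^{-\pi|z-w|^2/2}$ has its unique maximum at $z=w$; in Case 2, the value $u_{F_w}(0) = (1+|w|^2)^{-J}$ determines $|w|$ for $w\in\C$, the identity $|1+\bar w z| = |1+\bar{w}'z|$ for all $z$ then forces $w=w'$, and $F_\infty$ is distinguished by $u_{F_\infty}(0) = 0$; Case 3 is analogous, using $u_{F_w}(0) = (1-|w|^2)^{\alpha/2}$ together with the identity $|1-\bar w z| = |1-\bar{w}'z|$. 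Hence all $w_n$ with $p_n>0$ coincide with a single value $w_\star$, making the orthonormal vectors $f_n$ scalar multiples of the single unit vector $F_{w_\star}$; orthonormality permits only one such index, so $\rho = |F_{w_\star}\rangle\langle F_{w_\star}|$, as claimed. The only potential obstacle is this rigidity step, which as just described is immediate from the definitions.
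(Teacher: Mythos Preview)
Your proof is correct and follows essentially the same approach as the paper: reduce to Theorem \ref{analytic} via Jensen's inequality, then use strict convexity to force all $u_{f_n}$ with $p_n>0$ to coincide pointwise, and conclude by orthonormality that only one index survives. The only minor difference is in the rigidity step---the paper argues directly from $|f_n|=|f_1|$ that $f_n=e^{i\theta_n}f_1$ by applying the maximum modulus principle to the analytic function $f_n/f_1$, whereas you instead invoke the explicit parametrization of $\mathcal M$ to show $w\mapsto u_{F_w}$ is injective; both routes are valid and equally short.
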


\begin{proof}
	We use the above expansion of $\rho$. By convexity of $\Phi$, for any $z\in\Omega$,
	$$
	\Phi( u_\rho(z)^2) = \Phi( \sum_n p_n u_{f_n}(z)^2 ) \leq \sum_n p_n \Phi( u_{f_n}(z)^2 ) \,.
	$$
	Thus, with $S$ denoting the supremum in Theorem \ref{analytic},
	$$
	\int_\Omega \Phi( u_\rho(z)^2) \,dm(z) \leq \sum_n p_n \int_\Omega \Phi( u_{f_n}(z)^2 ) \,dm(z) \leq \sum_n p_n S = S \,.
	$$
	Since, by Theorem \ref{analytic}, $S$ is attained for $\rho=|F\rangle\langle F|$ with $F\in \mathcal M$, we obtain the first assertion in the corollary.
	
	Now assume that $S<\infty$ and that equality is achieved for some $\rho$. If $\Phi$ is not linear, then, by Theorem \ref{analytic}, $f_n\in\mathcal M$ for each $n$. (Throughout we restrict ourselves to values of $n$ for which $p_n>0$.) Moreover,
	$$
	\Phi( \sum_n p_n u_{f_n}(z)^2 ) = \sum_n p_n \Phi( u_{f_n}(z)^2 ) 
	\qquad\text{for a.e.}\ z\in\Omega \,.
	$$
	Assuming now that $\Phi$ is strictly convex, we deduce that $u_{f_n}(z)^2 = u_{f_1}(z)^2$ for a.e.\ $z\in\Omega$ and every $n$. Thus, by continuity, $|f_n(z)|=|f_1(z)|$ for all $z\in\Omega$ and all $n$. By analyticity, there are $\theta_n\in\R/2\pi\Z$ such that $f_n(z) = e^{i\theta_n} f_1(z)$ for all $z\in\Omega$. (Indeed, by the maximum modulus principle $f_n/f_1$ is equal to a constant in $\Omega$ without the zeros of $f_1$ and then by continuity in all of $\Omega$.) Since $\langle f_n,f_1\rangle=\delta_{n,1}$, we conclude that there is only a single index $n$, namely, $n=1$.	
\end{proof}


\subsection{Another inequality of Kulikov}

For later purposes, in this subsection we record another inequality from \cite{Ku} which corresponds, in some sense, to the limiting case $\alpha=1$ in Theorem \ref{analytic}.

The underlying Hilbert space is the Hardy space $H^2(\D)$ consisting of all analytic functions $f$ in $\D$ such that
$$
\| f \|_{H^2(\D)} := \left( \sup_{0<r<1} (2\pi)^{-1} \int_{-\pi}^\pi |f(re^{i\phi})|^2\,d\phi \right)^{1/2}<\infty \,.
$$
To emphasize the analogy with Theorem \ref{analytic} we denote the space by $\mathcal H$ and the norm by $\vvvert \cdot\vvvert$. We also use the same notation $\Omega$ and $dm(z)$ as in Case 3. The function $u_f$ is defined by \eqref{eq:uf} with $\alpha=1$. The functions $F_w$ are defined as in Case 3 with $\alpha=1$ and one easily checks that they are normalized. The set $\mathcal M$ is defined as before.

\begin{proposition}\label{analyticlimit}
	Let $\Phi:[0,1]\to\R$ be nondecreasing. Then
	$$
	\sup\left\{ \int_\Omega \Phi( u_f(z)^2) \,dm(z) :\ f\in\mathcal H \,,\ \vvvert f\vvvert =1 \right\} = \int_0^1 \Phi(s) s^{-2}\,ds
	$$
	and the supremum is attained in $\mathcal M$. If $\Phi$ is strictly increasing near 1 and if the supremum is finite, then it is attained \emph{only} in $\mathcal M$.
\end{proposition}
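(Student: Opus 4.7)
The plan is to adapt the method of the proof of Theorem~\ref{analytic} in Case~3 to the limiting value $\alpha=1$. Because $\Phi$ is only assumed to be nondecreasing (not convex), Chebyshev's inequality (Lemma~\ref{kulikov}) cannot be invoked in the same way. Instead, the inequality reduces, via the layer-cake formula, to the sharp pointwise bound
\[
m(\{u_f^2>t\}) \le \tfrac{1}{t}-1 \qquad \text{for all } t\in(0,1],
\]
valid for every $f \in H^2(\D)$ with $\vvvert f\vvvert = 1$; integrating against $d\Phi(t)$ and applying Fubini then yields $\int_\Omega \Phi(u_f^2)\,dm \le \int_0^1\Phi(s)s^{-2}\,ds$ (after the standard preliminary reduction to $\Phi(0)=0$, as in the proof of Theorem~\ref{analytic}).

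Two ingredients parallel those in the proof of Theorem~\ref{analytic}. First, the analog of Lemma~\ref{linfty}: the reproducing kernel identity $|f(w)| = |\langle f,k_w\rangle| \le \vvvert f\vvvert(1-|w|^2)^{-1/2}$, combined with the Cauchy--Schwarz equality case, yields $\|u_f\|_\infty \le \vvvert f\vvvert$ with equality if and only if $f\in\C\cdot\mathcal{M}$. Second, the analog of Lemma~\ref{mono} extends verbatim with $\alpha = 1$: using $\Delta\ln u_f = -2(1-|z|^2)^{-2}$ on $\{u_f > \kappa\}$ together with the hyperbolic isoperimetric inequality $\ell(\partial A)^2 \ge 4\pi m(A)(1+m(A))$, one obtains that $\kappa^2(1+\mu_f(\kappa))$ is nonincreasing on $(0,\|u_f\|_\infty)$, and a direct computation ($\mu_{F_0}(\kappa) = 1/\kappa^2 - 1$) shows it equals the constant $1$ when $f \in \mathcal{M}$.

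The main obstacle is to improve the nonincreasing monotonicity to the pointwise bound $\kappa^2(1+\mu_f(\kappa)) \le 1$. Since the function is nonincreasing, it suffices to establish the boundary asymptotic
\[
\lim_{\kappa\to 0^+}\kappa^2\mu_f(\kappa) \le \vvvert f\vvvert^2 = 1,
\]
i.e.\ that $u_f$ belongs to weak-$L^2(\D, dm)$ with norm at most $\vvvert f\vvvert$. Heuristically, by Fatou's theorem $u_f(re^{i\theta})^2 \approx (1-r^2)|f^*(e^{i\theta})|^2$ near $\partial\D$, so for small $\kappa$ the set $\{u_f>\kappa\}$ is concentrated in the boundary annulus, and integration against $dm \approx (4\pi\rho^2)^{-1}\,d\rho\,d\theta$ (with $\rho = 1-|z|$) gives $\kappa^2\mu_f(\kappa) \to \tfrac{1}{2\pi}\int|f^*|^2 \,d\theta = \vvvert f\vvvert^2$. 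A rigorous argument can proceed by splitting $\D$ into a compact interior region (whose contribution to $\kappa^2\mu_f(\kappa)$ is $O(\kappa^2)$) and a boundary annulus, where Fatou's theorem combined with dominated convergence yields the claim.

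For the equality case, suppose $\Phi$ is strictly increasing near $1$ and the supremum is finite and attained at some $f$. Then $d\Phi$ has positive mass on some interval $(1-\epsilon,1)$; if $\|u_f\|_\infty^2 = s_0 < 1$ then $m(\{u_f^2 > t\}) = 0 < 1/t - 1$ for $t \in (s_0,1)$, so integrating against $d\Phi$ produces a strict inequality, a contradiction. Hence $\|u_f\|_\infty = \vvvert f\vvvert = 1$, and the equality case of the first step forces $f\in\mathcal{M}$.
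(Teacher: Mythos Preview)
Your equality-case argument matches the paper's exactly. For the inequality itself, however, the paper does not reprove anything: it cites \cite[Theorem~1.1]{Ku} for the supremum statement and \cite[Theorem~3.1]{Ku} for the sharp distribution bound $\mu(\kappa)\le(\kappa^{-2}-1)_+$, and then runs your final paragraph verbatim (the remainder $\int_{s_0}^1(s^{-1}-1)\,d\Phi(s)$ must vanish, forcing $s_0=1$). Your route is more self-contained: you observe that Lemma~\ref{mono} extends to $\alpha=1$, so $\kappa\mapsto\kappa^2(1+\mu(\kappa))$ is nonincreasing on $(0,\|u_f\|_\infty)$, and then reduce the distribution bound to the endpoint claim $\limsup_{\kappa\to0^+}\kappa^2\mu(\kappa)\le1$. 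This is a genuine departure from the $\alpha>1$ argument in Theorem~\ref{analytic}, where the normalization enters via $\int u_f^2\,dm=1/(\alpha-1)$; here that integral is infinite and the Hardy norm must instead be recovered from boundary behaviour.

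The gap lies in this endpoint step. Fatou's theorem gives pointwise a.e.\ radial limits but no dominating function, so ``Fatou plus dominated convergence'' is not yet a proof; your compact-interior/boundary-annulus split does not by itself supply one either. What is needed is a $\kappa$-independent integrable majorant, and the natural choice is the radial maximal function $f^\sharp(e^{i\theta})=\sup_{0<r<1}|f(re^{i\theta})|$, which lies in $L^2(\partial\D)$ for $f\in H^2(\D)$. After the substitution $u=\kappa^2/(1-r^2)$ one has
\[
\kappa^2\mu(\kappa)=\frac{1}{2\pi}\int_0^{2\pi}\!\int_{\kappa^2}^\infty \1\bigl(|f(\sqrt{1-\kappa^2/u}\,e^{i\theta})|^2>u\bigr)\,du\,d\theta,
\]
and the integrand is bounded by $\1\bigl((f^\sharp(e^{i\theta}))^2>u\bigr)$, whose double integral equals $\|f^\sharp\|_{L^2}^2<\infty$. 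Dominated convergence then yields the limit $\tfrac{1}{2\pi}\int_0^{2\pi}|f^*|^2\,d\theta=\vvvert f\vvvert^2=1$, which completes your argument.
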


By $\Phi$ being `strictly increasing near 1' we mean that $\Phi(s)<\Phi(1^-)$ for all $s<1$, where $\Phi(1^-):=\lim_{s\to1^-}\Phi(s)$. We will see in the proof that $u_f\leq 1$, so the quantity in the supremum is well defined.

\begin{proof}
	The first part is a special case of \cite[Theorem 1.1]{Ku}. The second part can be obtained by an inspection of the proof of the first part, but, since this is not explicitly stated in \cite{Ku}, we provide some details. As in the proof of Theorem \ref{analytic}, we may assume that $\Phi(0)=\lim_{s\to 0^+}\Phi(s)=0$. Then
	$$
	\int_\Omega \Phi( u_f(z)^2) \,dm(z) = \int_0^{s_0} \mu(s^\frac12)\,d\Phi(s) \,,
	$$
	where $s_0:=\|u_f\|_{L^\infty(\Omega)}^2$. By an analogue of Lemma \ref{linfty}, we have $s_0\leq 1$ \cite[(1.2)]{Ku} with equality if and only if $f\in\mathcal M$. The argument for the latter assertion is essentially the same as in Lemma \ref{linfty}, using the fact that $u_f(z)\to 0$ as $|z|\to 1$, stated in \cite[paragraph after (1.2)]{Ku}, and the cases of a Schwarz inequality for a reproducing kernel (see also the proof of Proposition \ref{su11limit} below).
	
	As shown in \cite[Theorem 3.1]{Ku}, we have $\mu(\kappa)\leq (\kappa^{-2}-1)_+$ for all $\kappa>0$ and this bound is an equality if $f\in\mathcal M$. Thus,
	$$
	\int_0^{s_0} \mu(s^\frac12)\,d\Phi(s) \leq \int_0^1 (s^{-1}-1)\,d\Phi(s) - \int_{s_0}^1 (s^{-1}-1)\,d\Phi(s) \,,
	$$
	where the first term on the right side corresponds to the value of the supremum. Thus, if this term is finite and $f$ attains the supremum, then the second term on the right sides has to vanish. If $\Phi$ is strictly increasing near 1, then the measure $d\Phi(s)$ does not vanish on any interval $(1-\epsilon,1)$ with $\epsilon>0$ and therefore, necessarily, $s_0=1$. By the above, this means $f\in\mathcal M$, as claimed.
	
	To compute the value of the supremum we can proceed exactly as in Case 3 of Theorem \ref{analytic}, setting $\alpha=1$ in that calculation. This proves the proposition.	
\end{proof}


\section{Reverse H\"older inequalities for analytic functions}\label{sec:reverse}

The material in this section is an extension of that in the previous section. It is not relevant for the proof of the results in Section \ref{sec:intro}

In Theorem \ref{analytic} we were working under a constraint on a Hilbertian norm. It turns out that this is an unnecessary restriction. We will prove a generalization of Theorem~\ref{analytic} with a constraint on a more general norm or quasinorms. This will allow us to settle a conjecture by Bodmann \cite[Conjecture 3.5]{Bo}.

We continue to use the notation of Section \ref{sec:analytic}. For $0<p<\infty$, we define
$$
\vvvert f \vvvert_p :=
\begin{cases}
	\left( \frac p2 \int_\C |f(z)|^p e^{-p \frac{\pi}{2}|z|^2}\,dA(z) \right)^{1/p} & \qquad\text{in Case 1} \,,\\
	\left( \frac{pJ+1}{\pi} \int_\C |f(z)|^p (1+|z|^2)^{-pJ-2}\,dA(z) \right)^{1/p} & \qquad\text{in Case 2} \,,\\
	\left( \frac{\alpha p-2}{2\pi} \int_\D |f(z)|^p (1-|z|^2)^{p\frac\alpha 2-2}\,dA(z) \right)^{1/p} & \qquad\text{in Case 3} \,.
\end{cases}
$$
This is a norm for $p\geq 1$ and a quasinorm for $p<1$. The prefactors are chosen such that $\vvvert 1\vvvert_p =1$. We still assume that $J\in\frac12\N$ in Case 2 and now $\alpha>\frac2p$ in Case 3. The function $u_f$ is defined as before. We denote by $\mathcal X^p$ the space of all analytic functions $f$ on $\Omega$ such that $\vvvert f\vvvert_p<\infty$. In Case 2 we require, in addition, that $f$ is a polynomial of degree $\leq 2J$. We note that every $F\in\mathcal M$ satisfies $\vvvert F \vvvert_p=1$. Indeed, we have already noted that this holds for $F=F_0=1$ and for general $F\in\mathcal M$ it follows from the equimeasurability of $u_{F_w}$ for different $w$, discussed before Theorem \ref{analytic}.

\begin{theorem}\label{analyticp}
	Let $0<p<\infty$ and let $\Phi:[0,1]\to\R$ be convex. Then
	\begin{align*}
		& \sup\left\{ \int_\Omega \Phi( u_f(z)^p) \,dm(z) :\ f\in\mathcal X^p \,,\ \vvvert f\vvvert_p =1 \right\} \\
		& =
		\begin{cases}
			\frac2p \int_0^1 \Phi(s) s^{-1}\,ds & \qquad\text{in Case 1}\,,\\
			(pJ)^{-1} \int_0^1 \Phi(s) s^{\frac{1}{pJ}-1}\,ds & \qquad\text{in Case 2}\,,\\
			\frac2{\alpha p} \int_0^1 \Phi(s) s^{-\frac{2}{\alpha p}-1}\,ds & \qquad\text{in Case 3}\,,
		\end{cases}
	\end{align*}
	and the supremum is attained in $\mathcal M$. If $\Phi$ is not affine linear and if the supremum is finite, then it is attained \emph{only} in $\mathcal M$.
\end{theorem}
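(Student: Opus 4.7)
The plan is to repeat the proof of Theorem~\ref{analytic} almost verbatim with $u_f(z)^2$ replaced everywhere by $u_f(z)^p$. The key observation is that Lemma~\ref{mono} is entirely insensitive to the choice of norm, since its statement and proof involve only the weight in $u_f$ (with parameters $J$, $\alpha$) and the reference measure $dm$, both of which are unchanged. What needs adapting is the pointwise bound (the $p$-analogue of Lemma~\ref{linfty}) and some constants in the final bookkeeping.

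First I would establish
\begin{equation*}
\|u_f\|_{L^\infty(\Omega)} \leq \vvvert f\vvvert_p \qquad\text{for all } f\in\mathcal X^p,
\end{equation*}
with equality (for $f\neq 0$) iff $\vvvert f\vvvert_p^{-1}f\in\mathcal M$. For any analytic $f$ and any $p>0$, $|f|^p$ is subharmonic, so the sub-mean-value inequality $|f(0)|^p \leq (2\pi)^{-1}\int_0^{2\pi}|f(re^{i\theta})|^p\,d\theta$ holds for all admissible $r$. Integrating against the radial probability density read off from $\vvvert\cdot\vvvert_p$ in polar coordinates --- namely $p\pi r\,e^{-p\pi r^2/2}$ in Case~1, $2(pJ+1)r(1+r^2)^{-pJ-2}$ in Case~2, and $(\alpha p-2)r(1-r^2)^{p\alpha/2-2}$ on $(0,1)$ in Case~3 --- yields $|f(0)|^p \leq \vvvert f\vvvert_p^p$. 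Equality forces $|f|^p$ to be harmonic on each disk in the support of the density, hence, by the identity $\Delta|f|^p = p^2|f|^{p-2}|f'|^2$, forces $f$ to be constant, so $f=e^{i\theta}F_0$. The analogous statement at an arbitrary base point $w$ follows from the fact that the Heisenberg, SU(2), or SU(1,1) action is a $\vvvert\cdot\vvvert_p$-isometry for every $p>0$ (it preserves $u_f$ up to translation within $\Omega$ and preserves $dm$) and sends $F_0$ to $F_w$.

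The rest is bookkeeping. Setting $s_0 := \|u_f\|_\infty^p \leq 1$, the layer-cake formula gives
\begin{equation*}
\int_\Omega \Phi(u_f^p)\,dm = \int_0^{s_0}\mu(s^{1/p})\Phi'(s)\,ds
\end{equation*}
with $\mu(\kappa)=m(\{u_f>\kappa\})$. Lemma~\ref{mono} rewrites $\mu(s^{1/p})$ as an explicit elementary function of $s$ plus a nondecreasing piece, while the normalization $\vvvert f\vvvert_p=1$ translates to $\int_\Omega u_f^p\,dm$ equaling $2/p$, $1/(pJ+1)$, or $2/(\alpha p-2)$ in Cases~1, 2, 3 respectively. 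After the same change of variables used in the proof of Theorem~\ref{analytic}, the problem reduces to Lemma~\ref{kulikov} and to the monotonicity on $(0,1]$ of the resulting upper bound $A(\tau)$; the analysis is identical, with $2J$ replaced by $pJ$ in Case~2 and $\alpha$ replaced by $\alpha p/2$ in Case~3. The explicit value of the supremum comes from evaluation on $F_0\equiv 1$, and uniqueness of optimizers for non-affine $\Phi$ follows from strict monotonicity of $A$ near $\tau=1$ combined with the equality case of the pointwise bound.

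The main obstacle is the equality analysis of the pointwise bound for $p\neq 2$: the $p=2$ argument uses reproducing-kernel Cauchy--Schwarz, whereas here we must rely on the subharmonic route above and, in Cases~1 and~3, separately verify the decay $u_f\to 0$ at infinity (resp.\ on $\partial\D$) for $f\in\mathcal X^p$ --- a standard fact for the classical Fock and Bergman $L^p$-spaces --- in order to ensure that $\|u_f\|_\infty$ is achieved in the interior whenever equality holds.
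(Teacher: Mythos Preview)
Your proposal is correct and follows essentially the same route as the paper: establish the $p$-version of the pointwise bound $\|u_f\|_\infty\le\vvvert f\vvvert_p$ with equality characterization (the paper's Lemma~\ref{linftyp}), then rerun the proof of Theorem~\ref{analytic} with $u_f^2$ replaced by $u_f^p$ and the constants adjusted accordingly. The one small divergence is in how you prove the equality case of the pointwise bound: you use subharmonicity of $|f|^p$ directly and the identity $\Delta|f|^p=p^2|f|^{p-2}|f'|^2$ to force $f'\equiv 0$, whereas the paper uses subharmonicity of $\ln|f|$ and then Jensen's inequality for the strictly convex function $\exp$ to conclude $|f|$ is constant. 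Both arguments are valid and short; the paper's has the mild advantage that the equality step is a one-liner (strict convexity of $\exp$), while yours avoids introducing the logarithm at the cost of a Laplacian computation.
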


For $p=2$ this theorem reduces to Theorem \ref{analytic}. In Case 3 it reduces to \cite[Theorem 1.2 and Remark 4.3]{Ku}, except that our equality statement allows for more general $\Phi$. In Cases 1 and 2 the theorem seems to be new.

Taking $\Phi(s) = s^\frac{q}{p}$ with $q>p$ we obtain the following reverse H\"older inequalities.

\begin{corollary}\label{reverseh}
	Let $0<p<q<\infty$. Then, for any $f\in\mathcal X^p$,
	$$
	\vvvert f \vvvert_q \leq \vvvert f \vvvert_p
	$$
	with equality if and only if $f=0$ or $\vvvert f\vvvert_p^{-1} f\in\mathcal M$.
\end{corollary}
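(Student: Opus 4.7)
The strategy is to apply Theorem \ref{analyticp} with the choice $\Phi(s) = s^{q/p}$, as suggested by the sentence preceding the corollary. Since $q > p$ forces $q/p > 1$, this $\Phi$ is strictly convex on $[0,1]$ and in particular is not affine linear, so both the inequality and the optimizer-characterization parts of Theorem \ref{analyticp} will apply.

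First I would reduce to $\vvvert f \vvvert_p = 1$ by homogeneity of both quasinorms (the case $f = 0$ being trivial), so that the claim becomes $\vvvert f \vvvert_q \leq 1$. Under this normalization $\Phi(u_f(z)^p) = u_f(z)^q$, so the left-hand side of Theorem \ref{analyticp} is just the unweighted integral $\int_\Omega u_f(z)^q \, dm(z)$. A short computation in each of the three cases expresses this integral as $\vvvert f \vvvert_q^q$ times an explicit constant (namely $2/q$ in Case 1, $(qJ+1)^{-1}$ in Case 2, and $2/(\alpha q - 2)$ in Case 3), and an equally short computation identifies the explicit right-hand side of Theorem \ref{analyticp} with that very same constant. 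The constants cancel and one is left with $\vvvert f \vvvert_q^q \leq 1$. The fact that the constants must match is guaranteed a priori by the observation that $F_0 \equiv 1 \in \mathcal M$ satisfies $\vvvert F_0 \vvvert_r = 1$ for every admissible $r$, so no miracle is needed in the bookkeeping. In Case 3 one must also note that $\alpha > 2/p$ combined with $q > p$ forces $\alpha > 2/q$, so $\vvvert \cdot \vvvert_q$ is a well-defined quasinorm.

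For the equality clause I would first verify that the supremum in Theorem \ref{analyticp} is finite: with $\Phi(s) = s^{q/p}$ the integrand in each explicit right-hand side becomes, up to harmless factors, $s^{q/p + c - 1}$ with $c \in \{1, 1/(pJ), -2/(\alpha p)\}$, and the exponent is strictly greater than $-1$ in every case (in Case 3 one uses $q/p - 2/(\alpha p) = (\alpha q - 2)/(\alpha p) > 0$). The ``only if'' clause of Theorem \ref{analyticp} therefore gives that $\vvvert f\vvvert_q = \vvvert f\vvvert_p \neq 0$ implies $\vvvert f\vvvert_p^{-1} f \in \mathcal M$. The converse is immediate from the paragraph preceding Theorem \ref{analyticp}: every $F \in \mathcal M$ is equimeasurable with $F_0$, hence has all its admissible quasinorms equal to $1$, so any $f$ with $\vvvert f\vvvert_p^{-1} f \in \mathcal M$ saturates the bound. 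The main work has all been invested in Theorem \ref{analyticp}, so the only real obstacle is the consistent bookkeeping of constants; the $F_0 \equiv 1$ check makes this routine.
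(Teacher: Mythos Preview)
Your proposal is correct and is exactly the argument the paper intends: the paper derives the corollary in one line by ``taking $\Phi(s)=s^{q/p}$'' in Theorem \ref{analyticp}, and you have simply carried out the constant bookkeeping explicitly (with the minor slip that in Case~1 the value of $c$ should be $0$ rather than $1$, which is harmless since $q/p-1>-1$ anyway).
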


This corollary in Case 1 is due to Carlen \cite[Theorem 2]{Ca}. In fact, Carlen proves a more general inequality including an additional parameter. Carlen's method of proof depends on the logarithmic Sobolev inequality and an identity for analytic functions. It is different from ours. Corollary \ref{reverseh} in Case 2 has been conjectured by Bodmann \cite[Conjecture 3.5]{Bo}, who proved it in the special case where $q = p + J^{-1}n$ where $n\in\N$ and $p>J^{-1}$. Bodmann's proof relies on a sharp Sobolev inequality and an analogue of Carlen's identity. Corollary \ref{reverseh} in Case 3 is due to Kulikov \cite[Corollary 1.3]{Ku}. The special case $q=p+2 \alpha^{-1}$ with $p\geq 2$, $\alpha p>4$ was earlier proved by Bandyopadhyay in \cite[Corollary 3.3]{Ba} using the method of Carlen and Bodmann. (Note that in \cite{Ba} it is assumed that $\alpha\in\N\setminus\{1\}$ -- in her notation $\alpha=2k$ --, but this seems to be irrelevant for \cite[Section 3]{Ba}.)

We turn now to the proof of Theorem \ref{analyticp}. The main new ingredient is the following generalization of Lemma \ref{linfty}. In Case 3 this is well known \cite[(1.1)]{Ku} and probably also in Case 1, but in Case 2 it might be new.

\begin{lemma}\label{linftyp}
	Let $0<p<\infty$ and let $f\in\mathcal X^p$. Then
	$$
	\| u_f \|_{L^\infty(\Omega)} \leq \vvvert f \vvvert_p \,.
	$$
	with equality if and only if either $f=0$ or $\vvvert f\vvvert_p^{-1} f\in\mathcal M$.
\end{lemma}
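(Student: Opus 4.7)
The plan is to establish the pointwise bound $u_f(z_0) \le \vvvert f\vvvert_p$ by combining the submean-value inequality for the subharmonic function $|f|^p$ with a geometric change of variables that moves $z_0$ to $0$ and preserves the $p$-norm. I would first treat $z_0 = 0$, where the inequality reduces to $|f(0)|^p \le \vvvert f\vvvert_p^p$. This follows by multiplying the circular submean-value $|f(0)|^p \le \frac{1}{2\pi}\int_0^{2\pi}|f(re^{i\theta})|^p\,d\theta$ by the radial density of the measure in the definition of $\vvvert\cdot\vvvert_p^p$ and integrating in $r$. Explicitly, the three radial integrals
$\int_0^\infty r e^{-p\pi r^2/2}\,dr = \frac{1}{p\pi}$, $\int_0^\infty r(1+r^2)^{-pJ-2}\,dr = \frac{1}{2(pJ+1)}$, $\int_0^1 r(1-r^2)^{p\alpha/2-2}\,dr = \frac{1}{p\alpha-2}$ (the last converging thanks to $p\alpha>2$) are precisely the reciprocals of the normalising prefactors $p/2$, $(pJ+1)/\pi$, $(\alpha p-2)/(2\pi)$ in $\vvvert\cdot\vvvert_p$, so the bound $|f(0)|^p \le \vvvert f\vvvert_p^p$ drops out.

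For general $z_0$ I would conjugate by a group-theoretic operator $T_{z_0}$ on $\mathcal X^p$ --- the Weyl shift in Case 1, the SU(2) element sending $0$ to $z_0$ in Case 2, and the corresponding SU(1,1) M\"obius-type operator in Case 3. As in the equimeasurability discussion preceding Theorem \ref{analytic}, $T_{z_0}$ satisfies $u_{T_{z_0}f}(0) = u_f(z_0)$, and because $\vvvert g\vvvert_p^p$ is in each case a constant multiple of $\int_\Omega u_g^p\,d(\mathrm{meas})$ with the underlying flat, round or hyperbolic measure invariant under the group action, $T_{z_0}$ is an isometry of $\vvvert\cdot\vvvert_p$. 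Applying the origin case to $T_{z_0}f$ then gives $u_f(z_0)\le\vvvert f\vvvert_p$.

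For the equality case I would first secure existence of a maximiser of $u_f$. In Case 2 this is automatic, since $u_f$ extends continuously to the Riemann sphere. In Cases 1 and 3 I would rerun step one with the submean-value inequality integrated only over a disk of fixed radius $\delta$ about the origin, yielding a local bound $u_f(z_0)^p \le C_\delta\int_{B_\delta(z_0)}|f|^p W\,dA$ with $W$ the $\vvvert\cdot\vvvert_p$-weight; absolute continuity of the integral of $|f|^p W\in L^1$ then forces $u_f(z_0)\to 0$ as $|z_0|\to\infty$ (Case 1) or $|z_0|\to 1$ (Case 3), so $u_f$ attains its supremum at some $z_0^*$. Conjugating by $T_{z_0^*}$ I may assume the maximiser is at $0$; equality in the integrated inequality then forces equality in the circular submean-value for every $r$, which by differentiating in $r$ (or by Green's identity) forces the subharmonic function $|f|^p$ to be harmonic on $\Omega$, and hence --- by the standard identity $\Delta |f|^p = p^2 |f|^{p-2}|f'|^2$ where $f\neq 0$, together with the open-mapping principle to rule out zeros --- forces $f$ to be constant on $\Omega$. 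Since $F_w$ is (up to phase) the image of $F_0=\mathbf{1}$ under $T_w$ in each case, unravelling the conjugation gives $f = e^{i\theta}F_{z_0^*}\in\mathcal M$.

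The main obstacle I expect is the clean execution of step three in Cases 1 and 3: securing existence of a maximiser via the localised estimate plus absolute continuity, and, especially in Case 3, dealing with $T_{z_0}$ carefully since the natural candidate involves $(\phi'(z))^{\alpha/2}$ which need not be single-valued for non-integer $\alpha$ --- it is cleanest to work with the invariant quantities $u_f$ and $\vvvert\cdot\vvvert_p$ directly and bypass the operator. The rest is bookkeeping of the three radial integrals and the standard argument that $|f|^p$ harmonic forces $f$ constant.
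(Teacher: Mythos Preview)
Your proposal is correct and follows the same overall architecture as the paper: establish the bound at the origin via a submean-value property plus radial integration, transport to a general point using the group action and equimeasurability, and for the equality case use existence of a maximiser of $u_f$ together with the rigidity at the origin. Two technical choices differ. First, the paper works with the subharmonic function $\ln|f|$ rather than $|f|^p$: after integrating the circular mean of $\ln|f|$ against the radial probability density it applies Jensen's inequality for the exponential, and equality then forces $|f|$ constant immediately from strict convexity of $\exp$, bypassing your computation of $\Delta|f|^p$. Second, for the boundary decay of $u_f$ in Cases~1 and~3 the paper does not use a localised estimate: instead it notes that polynomials are dense in $\vvvert\cdot\vvvert_p$ and that $u_f\to 0$ at the boundary is obvious for polynomials, so the already-proved inequality $\|u_f\|_\infty\le\vvvert f\vvvert_p$ upgrades this to all $f$. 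Your $L^1$-tail argument is equally valid and more self-contained, though be careful that in Case~3 the transported ball $T_{z_0}^{-1}(B_\delta(0))$ is a hyperbolic ball around $z_0$, not a Euclidean $B_\delta(z_0)$; with that reading your absolute-continuity step goes through.
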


\begin{proof}
	We begin by showing that
	\begin{equation}
		\label{eq:linftyp0}
		u_f(0) \leq \vvvert f \vvvert_p \,.
	\end{equation}
	with equality if and only if $\vvvert f\vvvert_p^{-1} f=e^{i\theta} F_0$ for some $\theta\in\R/2\pi\Z$ (provided $f\neq 0$). Since $\ln|f|$ is subharmonic in $\Omega$, we have, for any $r\in(0,R)$, where $R:=\infty$ in Cases~1 and 2 and $R:=1$ in Case 3,
	$$
	\ln |f(0)| \leq (2\pi)^{-1} \int_{-\pi}^\pi \ln|f(re^{i\phi})\,d\phi \,.
	$$
	We multiply by $r e^{-p\frac\pi2 r^2}$, $r(1+r^2)^{-pJ-2}$ and $r(1-r^2)^{p\frac\alpha2-2}$ in the different cases and integrate with respect to $r\in(0,R)$. In this way, we obtain
	$$
	\ln |f(0)| \leq \int_\Omega \ln |f(z)| w(z)\,dA(z) \,,
	$$
	where
	$$
	w(z) :=
	\begin{cases}
		\frac p2 e^{-p\frac\pi2|z|^2} & \qquad\text{in Case 1}\,,\\
		\frac{pJ+1}{\pi} (1+|z|^2)^{-pJ-2} & \qquad\text{in Case 2}\,,\\
		\frac{\alpha p-2}{2\pi} (1-|z|^2)^{p\frac{\alpha}{2}-2} & \qquad\text{in Case 3} \,. 
	\end{cases}
	$$
	The measure $w(z)\,dA(z)$ is a probability measure. Multiplying by $p$ we can write this as
	$$
	|f(0)|^p \leq \exp\left( \int_\Omega \ln (|f(z)|^p) w(z)\,dA(z) \right) \leq \int_\Omega |f(z)|^p w(z)\,dA(z) = \vvvert f \vvvert_p^p \,,
	$$
	where the second inequality comes from Jensen's inequality. Since the exponential function is strictly convex, Jensen's inequality is strict unless $\ln(|f|^p)$ is almost everywhere constant. Since $f$ is continuous, this happens if and only $|f|$ is constant and, by the maximum modulus principle if and only $f$ is constant. This proves the claim.
	
	We now claim that for any $z\in\Omega$,
	\begin{equation}
		\label{eq:linftypz}
		u_f(z) \leq \vvvert f\vvvert_p
	\end{equation}
	if and only if $\vvvert f\vvvert_p^{-1} f=e^{i\theta} F_z$ for some $\theta\in\R/2\pi\Z$ (provided $f\neq0$). In Case 2, the same inequality remains valid for $z=\infty$, recalling that $u_f$ extends continuously to this point. Indeed, inequality \eqref{eq:linftypz} and its equality statement follow from the corresponding assertions concerning \eqref{eq:linftyp0}, by applying an element of the Heisenberg group, SU(2) or SU(1,1) to move the point $z$ to the point $0$ and by noting that $\vvvert \cdot \vvvert_p$ is invariant under this group action. The latter fact follows from the equimeasurability property discussed before Theorem \ref{analytic}.
	
	Inequality \eqref{eq:linftypz} implies the inequality in the lemma. Now assume that $f\neq 0$ achieves equality in this inequality. We claim that $u_f(z)\to 0$ as $|z|\to\infty$ or $|z|\to 1$ in Cases 1 and 3. This, together with the continuity of $u_f$ in $\Omega$ in Cases 1 and 3 and in $\C\cup\{\infty\}$ in Case 2, implies that there is a $z$ such that $u_f(z)=\|u_f\|_{L^\infty(\Omega)}$. The equality statement in \eqref{eq:linftypz} then implies the equality statement in the lemma.
	
	Thus, it suffices to prove the asymptotic vanishing of $u_f$ in Cases 1 and 3. In both cases, this is clear when $f$ is a polynomial and follows in the general case from the fact that polynomials are dense with respect to $\vvvert\cdot\vvvert_p$ and the inequality in the lemma. This completes the proof.	
\end{proof}

\begin{proof}[Proof of Theorem \ref{analyticp}]
	Given Lemma \ref{linftyp}, which replaces Lemma \ref{linfty}, the proof is a minor variation of that of Theorem \ref{analytic}. We only sketch the major steps. The task is to maximize
	$$
	\int_0^{s_0} m(\{ u_f^p>s\}) \Phi'(s)\,ds = \int_0^{s_0} \mu(s^\frac1p) \Phi'(s)\,ds
	$$
	under the constraint
	$$
	\int_0^{s_0} m(\{ u_f^p>s\}) \,ds = \int_0^{s_0} \mu(s^\frac1p) \,ds =
	\begin{cases}
		\frac2p \,, \\ \frac{1}{pJ+1} \,, \\ \frac{2}{\alpha p-2} \,,
	\end{cases} 
	$$
	with $s_0:= \|u_f\|_{L^\infty(\Omega)}^p$ and $\mu$ as in Lemma \ref{mono}. The latter lemma allows us to write $\mu(s^\frac1p)$ as the sum of a fixed piece and one that involves the nondecreasing function $g(s^\frac1p)$. We pass from the variable $s$ to a variable $t$ so that the resulting nondecreasing function $h$ is $L^1$-normalized with respect to the unweighted measure $dt$. Then we can use Chebyshev's bound (Lemma \ref{kulikov}) to replace $h$ by its average. This leads to a certain bound $A(t_0)$ and a computation, similarly as for $p=2$, shows that $A$ is nondecreasing. Moreover, if $\Phi$ is not affine linear, then $A$ is strictly increasing. This concludes the sketch of the proof of Theorem \ref{analyticp}.	
\end{proof}


\section{Proof of the main results}\label{sec:proofs}

In this section we prove the main results stated in the introduction. In each case we will work with a concrete representation of the group action that involves analytic functions. The inequalities will then be deduced from Theorem \ref{analytic}.


\subsection{Proof of Theorem \ref{heisen}}

By scaling, it suffices to prove the theorem for a single value of $\hbar$ and it is convenient to choose $\hbar=(2\pi)^{-1}$. Then, given $\psi\in L^2(\R)$, we can write
$$
\langle \psi_{p,q},\psi \rangle = e^{-\frac{\pi}{2}(q^2+p^2)} e^{i\pi qp} f(q-ip)
$$
with
$$
f(z) := 2^\frac14 \int_\R e^{2\pi z x - \tfrac\pi2z^2 - \pi x^2} \psi(x)\,dx \,.
$$
It is well known and easy to see that $f$ is entire and that
$$
\| f \|_{\mathcal F^2}^2 = \iint_{\R\times\R} |\langle \psi_{p,q},\psi \rangle|^2\,dp\,dq = \|\psi\|_{L^2(\R)}^2 \,,
$$
where the last identity is the completeness relation of the coherent states. In particular, $f\in\mathcal F^2(\C)$. Moreover,
$$
u_f(q-ip) = | \langle \psi_{p,q},\psi \rangle | \,,
$$
so Theorem \ref{heisen} follows immediately from Theorem \ref{analytic} in Case 1. Similarly, Corollary \ref{heisencor} follows from Corollary \ref{analyticcor}.


\subsection{Proof of Theorem \ref{su2}}

Let $J\in\frac12\N$. We consider the representation of SU(2) on functions $f$ on $\C$ given by
$$
\pi_U(f)(z) := (\beta z + \overline\alpha)^{2J}\ f(\tfrac{\alpha z - \overline\beta}{\beta z+\overline\alpha})
\qquad\text{for all}\ z\in\C \,, 
$$
where
$$
U = 
\begin{pmatrix}
	\alpha & \beta \\ -\overline\beta & \overline\alpha
\end{pmatrix}
\in \mathrm{SU(2)} \,,
\qquad\text{that is},\qquad \alpha,\beta\in\C \ \text{with}\
|\alpha|^2 + |\beta|^2 = 1 \,.
$$
This representation restricted to $\mathcal P_{2J}$ is irreducible and unitary for the norm defined above. In this representation,
$$
S_1 = \frac12\left( (-z^2+1) \frac{d}{dz} +2Jz \right),
\quad
S_2 = \frac1{2i}\left( (-z^2-1) \frac{d}{dz} +2Jz \right),
\quad
S_3 = z\frac{d}{dz}- J \,.
$$

We may choose the space $\mathcal H$ in Theorem \ref{su2} as $\mathcal P_{2J}$. By an explicit computation one sees that the functions $F_w$ are eigenvectors of the operator $\mathcal S(w)\cdot S$ corresponding to the eigenvalue $-J$, where we used the stereographic projection $\mathcal S:\C\mapsto \Sph^2$, given by
$$
\mathcal S_1(w)+i\mathcal S_2(w) := \frac{2w}{1+|w|^2} \,,
\qquad
\mathcal S_3(w) := \frac{1-|w|^2}{1+|w|^2} \,.
$$
Consequently, the phases of the $\psi_\omega$, $\omega\in\Sph^2$, can be chosen such that these functions coincide with the functions $F_w$, $w\in\C\cup\{\infty\}$. Thus, using the explicit form of the $F_w$,
$$
\langle\psi_{\mathcal S(w)},\psi\rangle = (1+|w|^2)^{-J} f(w)
$$
with
$$
f(w) := \frac{2J+1}{\pi} \int_\C (1+w\overline z)^{2J} \psi(z) (1+|z|^2)^{-2J-2}\,dA(z) \,.
$$
Since $\psi$ is a polynomial of degree $\leq 2J$, the reproducing property of the kernel implies that $f(w)=\psi(w)$ for all $w\in\C$. Moreover,
$$
u_f(w) = |\langle\psi_{\mathcal S(w)},\psi\rangle|
$$
and so, by a change of variables,
$$
\pi^{-1} \int_\C \Phi(u_f(w)^2)\,\frac{dA(w)}{(1+|w|^2)^2} = (4\pi)^{-1} \int_{\Sph^2} \Phi(|\langle\psi_\omega,\psi\rangle|^2)\,d\omega \,.
$$
Thus, Theorem \ref{su2} follows immediately from Theorem \ref{analytic} in Case 2. Similarly, Corollary~\ref{su2cor} follows from Corollary \ref{analyticcor}.


\subsection{Proof of Theorem \ref{su11}}

Let $K\in\frac12\N\setminus\{\frac12\}$. We consider the representation of SU(1,1) on functions $f$ on $\D$ given by
$$
\pi_U(f)(z) := (\beta z + \overline\alpha)^{-2K}\ f(\tfrac{\alpha z + \overline\beta}{\beta z+\overline\alpha})
\qquad\text{for all}\ z\in\C \,, 
$$
where
$$
U = 
\begin{pmatrix}
	\alpha & \beta \\ \overline\beta & \overline\alpha
\end{pmatrix}
\in \mathrm{SU(1,1)} \,,
\qquad\text{that is},\qquad \alpha,\beta\in\C \ \text{with}\
|\alpha|^2 - |\beta|^2 = 1 \,.
$$
This representation restricted to $A_{2K}^2(\D)$ is irreducible and unitary for the norm defined above. In this representation,
$$
K_0 = z\frac{d}{dz}+K \,,
\quad
K_1 = \frac1{2i}\left( (z^2-1) \frac{d}{dz} +2Kz \right),
\quad
K_2 = -\frac1{2}\left( (z^2+1) \frac{d}{dz} +2Jz \right).
$$

We may choose the space $\mathcal H$ in Theorem \ref{su11} as $A_{2K}^2(\D)$. By an explicit computation one sees that the functions $F_w$ are eigenvectors of the operator $n_0K_0-n_1K_1-n_2K_2$ corresponding to the eigenvalue $K$. Here $(n_0,n_1,n_2)$ is related to $w$ as in the discussion before the statement of Theorem \ref{su11}. Consequently, we can choose the phases in such a way that $\psi_w=F_w$ for all $w\in\D$. Thus, using the explicit form of the $F_w$,
$$
\langle\psi_w,\psi\rangle = (1+|w|^2)^K f(w)
$$
with
$$
f(w) := \frac{2K-1}{\pi} \int_\D (1+w\overline z)^{-2K} \psi(z) (1+|z|^2)^{2K-2}\,dA(z) \,.
$$
Since $\psi\in A^2_{2K}(\D)$, the reproducing property of the kernel implies that $f(w)=\psi(w)$ for all $w\in\D$. Moreover,
$$
u_f(w) = |\langle\psi_w,\psi\rangle| \,,
$$
so Theorem \ref{su11} follows immediately from Theorem \ref{analytic} in Case 3. Similarly, Corollary \ref{su11cor} follows from Corollary \ref{analyticcor}.


\subsection{The limit of the discrete series}\label{sec:limit}

Two other irreducible unitary representations of SU(1,1) are not in the discrete series, but  are closely related to it, the so-called limits of discrete series \cite[Chapter II]{Kn}. They are typically not considered in the context of coherent states, since they are not square-integrable, but the questions discussed in this paper make perfectly sense for them and can be completely answered.

We restrict our attention to one of the limits of the discrete series, since the results for the other one can be deduced by appropriate complex conjugation. The construction of the coherent states is verbatim the same as for the discrete series, except that the value of $K$ now is $\frac12$.

\begin{proposition}\label{su11limit}
	Consider the irreducible limit of the discrete series representation of $\mathrm{SU(1,1)}$ on $\mathcal H$. Let $\Phi:[0,1]\to\R$ be nondecreasing. Then
	$$
	\sup\left\{ \int_{\D} \Phi(|\langle \psi_z, \psi \rangle|^2) \,\frac{dA(z)}{(1-|z|^2)^2} :\ \psi\in\mathcal H \,,\ \|\psi\|_\mathcal H= 1 \right\} = \pi \int_0^1 \Phi(s) s^{-2}\,ds
	$$
	and the supremum is attained for $\psi=e^{i\theta}\psi_{z_0}$ with some $z_0\in\D$, $\theta\in\R/2\pi\Z$. If $\Phi$ is strictly increasing near 1 and if the supremum is finite, then it is attained \emph{only} for such $\psi$.
\end{proposition}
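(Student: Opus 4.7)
The plan is to repeat, in the limiting case $K=\tfrac12$, the reduction used in the proof of Theorem \ref{su11}, and then apply Proposition \ref{analyticlimit} rather than Theorem \ref{analytic}. The parameter $\alpha=2K=1$ corresponds precisely to the Hardy space setting of Proposition \ref{analyticlimit}: the weighted Bergman norm used for $K>\tfrac12$ degenerates (its prefactor $\alpha-1$ vanishes and it is no longer finite on constants), and the invariant Hilbert space for the representation becomes $H^2(\D)$ in place of $A^2_{2K}(\D)$.

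Concretely, I would realize the limit-of-discrete-series representation on $H^2(\D)$ via
\[
\pi_U(f)(z) = (\beta z + \overline\alpha)^{-1}\, f\!\left(\tfrac{\alpha z + \overline\beta}{\beta z+\overline\alpha}\right),
\qquad U=\begin{pmatrix}\alpha & \beta \\ \overline\beta & \overline\alpha\end{pmatrix}\in\mathrm{SU}(1,1).
\]
Exactly as in the proof of Theorem \ref{su11}, but with $\alpha=1$, a direct computation shows that the eigenvector of $n_0K_0-n_1K_1-n_2K_2$ for the minimal eigenvalue $\tfrac12$ is, up to a phase, the function
\[
F_w(z)=\frac{(1-|w|^2)^{1/2}}{1-\overline w z},
\]
which is the coherent state family appearing in Proposition \ref{analyticlimit}. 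Thus the phases of the $\psi_w$ can be chosen so that $\psi_w = F_w$.

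The next step is the reproducing-kernel identity. The reproducing kernel of $H^2(\D)$ is $(1-\overline w z)^{-1}$, so
\[
\langle \psi_w,\psi\rangle_{H^2} = (1-|w|^2)^{1/2}\, f(w), \qquad f:=\psi\in H^2(\D),
\]
and hence $u_f(w) := |f(w)|(1-|w|^2)^{1/2} = |\langle \psi_w,\psi\rangle|$, which is the function $u_f$ with $\alpha=1$ in Proposition \ref{analyticlimit}. Since $dm(z)=\pi^{-1}(1-|z|^2)^{-2}\,dA(z)$, the integral in the statement equals $\pi\int_\Omega \Phi(u_f(z)^2)\,dm(z)$, and the explicit value of the supremum, together with attainment on $\mathcal M$ and the uniqueness statement when $\Phi$ is strictly increasing near $1$, transfers directly from Proposition \ref{analyticlimit} (the factor $\pi$ in the stated supremum coming from the normalization of $dm$).

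The only step that requires real care is identifying the invariant Hilbert space for this representation as $H^2(\D)$ with the given normalization, so that the reproducing-kernel computation yields the identity $\langle \psi_w,\psi\rangle = (1-|w|^2)^{1/2} f(w)$ verbatim. Once this is established, the argument is simply the $\alpha\downarrow 1$ version of the proof of Theorem \ref{su11}, with Proposition \ref{analyticlimit} substituting for Theorem \ref{analytic} in Case 3; in particular the weaker hypothesis on $\Phi$ (nondecreasing, strictly increasing near $1$) is inherited directly from Proposition \ref{analyticlimit}.
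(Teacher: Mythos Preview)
Your proposal is correct and follows essentially the same approach as the paper: realize the $K=\tfrac12$ representation on $H^2(\D)$, identify $\psi_w$ with the Szeg\H{o}-type functions $F_w$, use the reproducing kernel to get $|\langle\psi_w,\psi\rangle|=u_\psi(w)$, and then invoke Proposition~\ref{analyticlimit}. The paper makes the reproducing-kernel step explicit via boundary values on $\partial\D$, but otherwise the argument is identical.
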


Note that the value of the integral with $\psi=e^{i\theta}\psi_{z_0}$ does not depend on $z_0\in\D$, $\theta\in\R/2\pi\Z$. It may or may not be finite, depending on $\Phi$. For finiteness it is necessary that $\lim_{s\to 0^+}s^{-1}\Phi(s) = 0$. In particular, the function $\Phi(s)=s$ leads to an infinite supremum, which reflects the non-squareintegrability of the representation.

\begin{proof}[Proof of Proposition \ref{su11limit}]
	We consider the same representation of SU(1,1) on functions on $\D$ as in the proof of Theorem \ref{su11} but with $K=\frac12$. This representation is irreducible when restricted to the Hardy space $H^2(\D)$ and unitary for the norm defined above; see \cite[Section II.6]{Kn}. We choose the representation space $\mathcal H=H^2(\D)$. The functions $F_w$ were defined before Proposition \ref{analyticlimit} and one verifies that, by an appropriate choice of phases, $\psi_w=F_w$. It is well known that functions in the Hardy space have radial boundary values in $L^2(\partial\D)$ and that in their norm it suffices to consider this boundary value. Thus, using the explict form of the $F_w$,
	$$
	\langle \psi_w,\psi \rangle = (2\pi)^{-1} \int_{-\pi}^\pi \overline{F_w(e^{i\phi})}\psi(e^{i\phi})\,d\phi = (1+|w|^2)^\frac12 f(w)
	$$
	with
	$$
	f(w):= (2\pi)^{-1} \int_{-\pi}^\pi (1-w e^{-i\phi})^{-1} \psi(e^{i\phi})\,d\phi \,.
	$$
	By the reproducing property of the kernel (seen, for instance, by expanding both functions in the integrand into a Fourier series), we see that $f(w)=\psi(w)$ for all $w\in\D$. Moreover,
	$$
	u_f(w) = |\langle \psi_w,\psi \rangle| \,,
	$$
	so Proposition \ref{su11limit} follows from Proposition \ref{analyticlimit}.	
\end{proof}

There is also an analogue of Corollary \ref{su11cor} extending Proposition \ref{su11limit} (with convex $\Phi$) to density matrices, but we omit it for the sake of brevity.


\subsection{Proof of Theorem \ref{axb}}

Given $\psi\in L^2(\R_+)$, we can write
$$
\langle\psi_{a,b},\psi\rangle = a^\beta f(ia-b)
$$
with
$$
f(z) := 2^\beta \Gamma(2\beta)^{-\frac12} \int_0^\infty x^{\beta-\frac12} e^{izx} \psi(x)\,dx \,.
$$
It is easy to see and known that $f$ is analytic in $\C_+=\{ z\in\C:\ \im z>0\}$ and that
$$
\frac{\beta-\frac12}{2\pi} \int_{\C_+} |f(z)|^2 (\im z)^{2\beta-2} \,dA(z) = \frac{\beta-\frac12}{2\pi} \iint_{\R_+\times\R} | \langle\psi_{a,b},\psi\rangle |^2\, \frac{da\,db}{a^2} = \|\psi\|^2_{L^2(\R_+)} \,,
$$
where the last identity is the completeness relation of the coherent states \cite[(2.10)]{DaKlPa}. Consider the conformal map $\Sigma:\C_+\to\D$,
$$
\Sigma(z) = \frac{z-i}{-iz+1} \quad\text{for}\ z\in\C_+\,,
\qquad
\Sigma^{-1}(\zeta) = \frac{\zeta+i}{i\zeta+1} \quad \text{for}\ \zeta\in\D \,.
$$
Setting
$$
g(\zeta) := (i\zeta+1)^{-2\beta} f(\tfrac{\zeta+i}{i\zeta+1})
\qquad\text{for all}\ \zeta\in\D \,,
$$
we find that $g$ is analytic in $\D$ and, using $dA(\zeta) = |\Sigma'(z)|^2\,dA(z)$ for $\zeta=\Sigma(z)$, 
$$
\frac{\beta-\frac12}{2\pi} \int_{\C_+} |f(z)|^2 (\im z)^{2\beta-2} \,dA(z)
= \frac{2\beta-1}{\pi} \int_\D |g(\zeta)|^2 (1-|\zeta|^2)^{2\beta-2}\,dA(\zeta) = \| g \|_{A_{2\beta}(\D)}^2 \,.
$$
Moreover, after a simple computation,
$$
u_g(\Sigma(ia-b)) = |\langle\psi_{a,b},\psi\rangle|
$$
and therefore
$$
\iint_{\R_+\times\R} \Phi(|\langle\psi_{a,b},\psi\rangle|^2 )\,\frac{da\,db}{a^2} = 4\pi \int_\D \Phi(u_g(\zeta)^2) \,dm(\zeta) \,.
$$
Also, the coherent states $\psi_{a,b}$, $(a,b)\in\R_+\times\R$, are in one-to-one correspondence with the functions $F_w$, $w\in\D$. Indeed, a straightforward computation shows that the $f$ corresponding to $\psi=\psi_{1,0}$ is $f(z)=(2i/(z+i))^{2\beta}$, which corresponds to $g(\zeta)=1=F_0(\zeta)$. The result in the general case follows from the facts that every $(a,b)$ can be moved  to $(1,0)$ by an $aX+b$-action, every point $w\in\D$ can be moved to $0$ by the action of a subgroup of SU(1,1) isomorphic to $aX+b$ (see, e.g., \cite[equation after (1.3)]{LiSo3}) and that $\Sigma$ relates these actions to each other.

Thus, Theorem \ref{axb} follows immediately from Theorem \ref{analytic} in Case 3 with $\alpha=2\beta$ and, similarly Corollary \ref{axbcor} follows from Corollary \ref{analyticcor}.

\begin{remark}\label{axblimit}
	The functions $\psi_{a,b}$ are also well defined for $\beta\in(0,\frac12]$. In this case, the coherent state transform cannot be normalized to be an isometry to a subset of $L^2(\R_+\times\R,a^{-2}da\,db)$, but the optimization problem in Theorem \ref{axb} still makes sense. We claim that, for $\beta=\frac12$, Theorem \ref{axb} remains valid, replacing the assumptions `convex' and `not linear' on $\Phi$ by `nondecreasing' and `strictly increasing near 1', respectively. Indeed, in this case the function $f$ in the previous proof belongs to the Hardy space $H^2(\C_+)$ and, by Plancherel, its norm in that space is equal to $\|\psi\|_{L^2(\R_+)}$. Mapping $\C_+$ to $\D$ via $\Sigma$, we can deduce the assertion from Proposition \ref{analyticlimit}. We do not know whether Theorem \ref{axb} extends to $\beta\in(0,\frac12)$.
\end{remark}


\subsection{Limitations of the method}\label{sec:limitation}

In this paper we have discussed the cases of the Heisenberg group, SU(2), SU(1,1) and the affine group. It is a natural question, potentially of relevance for representation theory, to which extent the results can be generalized to arbitrary Lie groups.

While the method of the present paper is able to treat various cases in a unified way, it will probably not be able to deal with the general case, as we argue now. One of the key ingredients in the argument is Lemma \ref{mono}, whose proof uses the fact that the superlevel sets of the overlap of two coherent states are isoperimetric set. This may fail in general.

Following Lieb and Solovej \cite{LiSo2}, we consider the case of symmetric representations of SU(N). We fix $N\geq 3$. The relevant representations are labeled by $M\in\N$ and we choose the representation space $\mathcal H$ to be the symmetric subspace of the tensor product $\otimes^M \C^N$. Coherent states are defined through elements of the form $\otimes^M z$. Note that if two $z$'s differ by a phase, then the vectors in $\mathcal H$ also differ by a phase and correspond to the same state. Thus, we will label the coherent states by points $z$ in the complex projective space
$$
\C P^{N-1} = \left\{ z\in\C^N :\ |z|=1 \right\}/\sim
$$
where $z\sim w$ if $z=e^{i\theta} w$ for some $\theta\in\R/2\pi\Z$. We denote integration with respect to the natural SU(N)-invariant probability measure on $\C P^{N-1}$ by $dz$.

Lieb and Solovej have solved the corresponding problem and shown that, for any convex $\Phi:[0,1]\to\R$,
$$
\sup\left\{ \int_{\C P^{N-1}} \Phi(|\langle \otimes^M z, \psi \rangle|^2)\,dz :\ \psi\in\mathcal H \,,\ \|\psi\|_{\mathcal H} = 1 \right\}
$$
is attained for coherent states.

If we tried to reprove this through the method in the present paper, we would consider the measure of the superlevel sets of the function $z\mapsto|\langle \otimes^M z, \psi \rangle|$ and try to prove some monotonicity properties of it. This monotonicity property should be saturated if $\psi$ is of the form $\otimes^M z_0$. In this special case, the level sets are of the form
\begin{equation*}
	\left\{ z\in\C P^{N-1}:\ | z^* z_0|^M>\kappa \right\}
\end{equation*}
These are geodesic balls (see, e.g., \cite[Example 2.110]{GaHuLa}) and, if we want to use the method based on an isoperimetric inequality, they should be optimizers for the isoperimetric inequality. (More precisely, this should hold for all $\kappa$ for which their measure is $\leq\frac12$; for $\kappa$ such that their measure is $\geq\frac12$ their complements should be optimizers.) This, however, is not the case for all $\kappa$, at least not for $N=4$, as pointed out in \cite[Appendix]{BadCEs}; see also \cite[Remark 4.2]{Me}. For the solution of the isoperimetric problem in $\C P^{N-1}$ see also \cite[Theorem 4.1]{Me}. The isoperimetric sets are expected to transition from geodesic balls for small volumes to tubes around some $\C P^{M-1}\subset\C P^{N-1}$ for intermediate volumes.

\section{Faber--Krahn-type inequalities for the coherent state transform}\label{sec:fk}

The main result of the recent paper \cite{NiTi} by Nicola and Tilli states that, for any measurable set $E\subset\R^2$ of finite measure,
\begin{equation}
	\label{eq:niti}
	\iint_E |\langle \psi_{p,q},\psi\rangle|^2 \,dp\,dq \leq 2\pi\hbar \left( 1- e^{-(2\pi\hbar)^{-1}|E|} \right)
\end{equation}
with equality if and only if $\psi = e^{i\theta} \psi_{p_0,q_0}$ for some $p_0,q_0\in\R$, $\theta\in\R/2\pi\Z$ and $E$ is equal to a ball centered at $(p_0,q_0)$ (up to sets of measure zero). (We restrict ourselves here to the one-dimensional case of their result. Since the proof of Theorem \ref{heisen} extends to higher dimensions, the discussion in this subsection does so as well.)

We claim that the inequality \eqref{eq:niti} follows by abstract arguments from Theorem \ref{heisen}. Of course, this is not too surprising, since Kulikov's arguments, which we have adopted to yield a proof of Theorem \ref{heisen}, are inspired by those in \cite{NiTi}. Nevertheless, this observation will allow us to derive an analogue of the Nicola--Tilli results in the SU(2), SU(1,1) and $aX+b$ cases.

\begin{proof}[Proof of \eqref{eq:niti} given Theorem \ref{heisen}]
	Fixing $p_0,q_0\in\R$ and $\psi\in L^2(\R)$ with $\|\psi\|_{L^2(\R)}=1$, we can write the first assertion of Theorem \ref{heisen} as the statement that
	$$
	\iint_{\R\times\R} \Phi(|\langle\psi_{p,q},\psi\rangle|^2)\,dp\,dq \leq
	\iint_{\R\times\R} \Phi(|\langle\psi_{p,q},\psi_{p_0,q_0}\rangle|^2)\,dp\,dq
	$$
	for any convex function $\Phi$ on $[0,1]$. By Hardy--Littlewood majorization theory (see, e.g., \cite[Theorems 108, 249, 250]{HaLiPo}, \cite[Corollary 2.1]{AlTrLi}, \cite[Theorem 15.27]{Si3} and also \cite[Chapter 2, Propsition 3.3]{BeSh}), this is equivalent to the fact that
	$$
	\iint_E |\langle\psi_{p,q},\psi\rangle|^2 \,dp\,dq \leq \sup_{|F|=|E|}
	\iint_{F} |\langle\psi_{p,q},\psi_{p_0,q_0}\rangle|^2 \,dp\,dq
	$$
	for any measurable set $E\subset\R^2$ of finite measure. By an explicit computation, 
	$$
	|\langle\psi_{p,q},\psi_{p_0,q_0}\rangle| = e^{-\frac1{4\hbar}((q-q_0)^2+(p-p_0)^2)} \,.
	$$
	This is symmetric decreasing around $(p_0,q_0)$ and therefore the supremum above is attained if (and only if) $F$ is a ball centered at $(p_0,q_0)$ (up to sets of measure zero). In this case, the right side can be computed to be $1-e^{-(2\pi\hbar)^{-1}|E|}$, yielding \eqref{eq:niti}.
\end{proof}

By going carefully through the majorization argument it should be possible to deduce from the equality statement in Theorem \ref{heisen} the equality statment by Nicola and Tilli, but we omit this here.

Obviously, the above argument can be generalized to the SU(2), SU(1,1) and $aX+b$ cases. For the sake of brevity, we leave out a statement about the cases of equality.

\begin{theorem}\label{fksu2}
	Let $J\in\frac12\N$ and consider an irreducible $(2J+1)$-dimensional representation of $\mathrm{SU(2)}$ on $\mathcal H$. Then, for any $\psi \in \mathcal H$ with $\| \psi \|_{\mathcal H} = 1$ and any measurable $E\subset\Sph^2$,
	$$
	\int_{E} |\langle \psi_\omega, \psi \rangle|^2 \,d\omega \leq \frac{4\pi}{2J+1} \left( 
	1 - \left( 1-\tfrac{|E|}{4\pi} \right)^{2J+1} \right).
	$$
	Equality is attained if $\psi=e^{i\theta} \psi_{\omega_0}$ with some $\omega_0\in\Sph^2$, $\theta\in\R/2\pi\Z$ and $E$ is a spherical cap centered at $\omega_0$.
\end{theorem}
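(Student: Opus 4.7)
The plan is to follow the same abstract argument used above to derive the Nicola--Tilli inequality \eqref{eq:niti} from Theorem \ref{heisen}, replacing $\R^2$ by $\Sph^2$ and the Heisenberg overlap by the SU(2) overlap. First, I would fix $\omega_0\in\Sph^2$ and $\psi\in\mathcal H$ with $\|\psi\|_\mathcal H=1$, and rewrite the sharpness statement in Theorem \ref{su2} as the assertion that
$$
\int_{\Sph^2} \Phi(|\langle\psi_\omega,\psi\rangle|^2)\,d\omega \leq \int_{\Sph^2} \Phi(|\langle\psi_\omega,\psi_{\omega_0}\rangle|^2)\,d\omega
$$
for every convex $\Phi:[0,1]\to\R$ (both sides being finite since $\Phi$ is bounded).

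Next, by Hardy--Littlewood majorization theory (as cited in the proof of \eqref{eq:niti}), this family of inequalities is equivalent to the statement that the decreasing rearrangement of $\omega\mapsto |\langle\psi_\omega,\psi\rangle|^2$ is majorized by that of $\omega\mapsto |\langle\psi_\omega,\psi_{\omega_0}\rangle|^2$. In particular,
$$
\int_E |\langle\psi_\omega,\psi\rangle|^2\,d\omega \leq \sup_{|F|=|E|} \int_F |\langle\psi_\omega,\psi_{\omega_0}\rangle|^2\,d\omega
$$
for every measurable $E\subset\Sph^2$.

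Then I would compute the overlap explicitly. Working in the $\mathcal P_{2J}$ realization used in the proof of Theorem \ref{su2}, the reproducing property gives $\langle F_{w'},F_w\rangle = (1+\bar w w')^{2J}/((1+|w|^2)^J(1+|w'|^2)^J)$, and using the elementary identity $|1+\bar w_0 w|^2/((1+|w|^2)(1+|w_0|^2)) = (1+\omega\cdot\omega_0)/2$ for $\omega=\mathcal S(w)$, $\omega_0=\mathcal S(w_0)$, one obtains
$$
|\langle\psi_\omega,\psi_{\omega_0}\rangle|^2 = \left( \tfrac{1+\omega\cdot\omega_0}{2} \right)^{2J}.
$$
Since this is strictly decreasing in the geodesic distance from $\omega_0$, the supremum above is attained precisely when $F$ is the spherical cap of area $|E|$ centered at $\omega_0$. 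Parametrizing by $u=\omega\cdot\omega_0\in[t,1]$ with $|E|=2\pi(1-t)$, and using $d\omega=2\pi\,du$, the integral evaluates via the substitution $v=(1+u)/2$ to
$$
2\pi\int_t^1 \left(\tfrac{1+u}{2}\right)^{2J}\!du = \tfrac{4\pi}{2J+1}\left(1-\left(\tfrac{1+t}{2}\right)^{2J+1}\right) = \tfrac{4\pi}{2J+1}\left(1-\left(1-\tfrac{|E|}{4\pi}\right)^{2J+1}\right),
$$
which matches the right-hand side claimed in Theorem \ref{fksu2}.

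There is no serious obstacle here: the argument is essentially mechanical once Theorem \ref{su2} is in hand. The mildest point to be careful about is invoking Hardy--Littlewood majorization on the sphere rather than in Euclidean space (which is standard, but worth citing as in the proof of \eqref{eq:niti}), and verifying the explicit formula for the overlap from the Fock-type realization of the representation. If one wished to establish the characterization of equality cases for both $\psi$ and $E$, one would need to unfold the majorization argument together with the uniqueness statement from Theorem \ref{su2}; since the statement of Theorem \ref{fksu2} only records sufficiency, this extra work can be avoided.
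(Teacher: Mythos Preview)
Your proposal is correct and follows essentially the same approach as the paper: both use Theorem \ref{su2} together with Hardy--Littlewood majorization to reduce to a supremum over sets $F$ of given measure, identify the optimal $F$ as a spherical cap via the explicit form of the coherent state overlap, and then evaluate the resulting integral. The only cosmetic difference is that you compute the cap integral directly from the formula $|\langle\psi_\omega,\psi_{\omega_0}\rangle|^2=((1+\omega\cdot\omega_0)/2)^{2J}$, whereas the paper does the equivalent computation via the layer cake formula and the known distribution function from Lemma~\ref{mono}.
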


\begin{theorem}\label{fksu11}
	Let $K\in\frac12\N\setminus\{\tfrac12\}$ and consider the irreducible discrete series representation of $\mathrm{SU(1,1)}$ on $\mathcal H$ corresponding to $K$. Then, for any $\psi\in\mathcal H$ with $\|\psi\|_\mathcal H= 1$ and any measurable $E\subset\D$,
	$$
	\int_{E} |\langle \psi_z, \psi \rangle|^2 \,\frac{dA(z)}{(1-|z|^2)^2} \leq \frac{\pi}{2K-1} \left( 1- (1+m(E))^{-2K+1} \right),
	$$
	where $dm(z) = \pi^{-1} (1-|z|^2)^{-2}\,dA(z)$. Equality is attained if $\psi=e^{i\theta}\psi_{z_0}$ with some $z_0\in\D$, $\theta\in\R/2\pi\Z$ and $E$ is a hyperbolic ball centered at $z_0$.
\end{theorem}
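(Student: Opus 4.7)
The plan is to argue in parallel with the derivation of \eqref{eq:niti} from Theorem \ref{heisen}. Fix $\psi\in\mathcal H$ with $\|\psi\|_\mathcal H=1$ and $z_0\in\D$. Theorem \ref{su11}, applied to $\psi$ and compared with its value at $\psi_{z_0}$, gives for every convex $\Phi:[0,1]\to\R$
$$
\int_\D \Phi(|\langle\psi_z,\psi\rangle|^2)\,\frac{dA(z)}{(1-|z|^2)^2} \leq \int_\D \Phi(|\langle\psi_z,\psi_{z_0}\rangle|^2)\,\frac{dA(z)}{(1-|z|^2)^2}.
$$
By Hardy--Littlewood majorization theory (the same references as in the Heisenberg case, applied with respect to the infinite measure $(1-|z|^2)^{-2}\,dA(z)$), this family of convex-function inequalities is equivalent to the distributional statement
$$
\int_E |\langle\psi_z,\psi\rangle|^2\,\frac{dA(z)}{(1-|z|^2)^2} \leq \sup_{m(F)=m(E)} \int_F |\langle\psi_z,\psi_{z_0}\rangle|^2\,\frac{dA(z)}{(1-|z|^2)^2}
$$
for every measurable $E\subset\D$ with $m(E)<\infty$.

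Next I would identify the right-hand side explicitly. Using the identification $\psi_w=F_w$ from the proof of Theorem \ref{su11}, a direct computation yields
$$
|\langle\psi_z,\psi_{z_0}\rangle|^2 = \left( \frac{(1-|z|^2)(1-|z_0|^2)}{|1-\overline{z_0}z|^2} \right)^{\!2K}.
$$
The quantity inside the parentheses is the standard SU(1,1)-invariant kernel; in particular, it is a strictly decreasing function of the hyperbolic distance from $z$ to $z_0$. Since the measure $m$ is also SU(1,1)-invariant, the superlevel sets of $|\langle\psi_z,\psi_{z_0}\rangle|^2$ are precisely the hyperbolic disks centered at $z_0$. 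The supremum above is therefore attained when $F$ is such a disk.

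Finally, I would evaluate the integral over a hyperbolic ball of prescribed measure. Conjugating by a suitable SU(1,1) element, I may assume $z_0=0$, in which case the ball becomes a Euclidean disk $\{|z|<r\}$ for some $r\in(0,1)$. Integration in polar coordinates gives
$$
\int_{|z|<r}(1-|z|^2)^{2K-2}\,dA(z) = \frac{\pi}{2K-1}\bigl(1-(1-r^2)^{2K-1}\bigr), \qquad m(\{|z|<r\})=\frac{r^2}{1-r^2},
$$
and eliminating $r$ via $1-r^2=(1+m(E))^{-1}$ produces exactly the right-hand side $\tfrac{\pi}{2K-1}\bigl(1-(1+m(E))^{-2K+1}\bigr)$ of the stated inequality. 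Equality in the cases mentioned in the theorem is then immediate from the same computation. Since both of the nontrivial analytic ingredients---Theorem \ref{su11} and Hardy--Littlewood majorization---are already in hand, and the SU(1,1)-invariance of the overlap and the measure makes the rearrangement step routine, I do not anticipate any serious obstacle; the most delicate point is merely to apply majorization in the correct non-finite-measure form, which is standard.
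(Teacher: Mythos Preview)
Your proposal is correct and follows essentially the same route as the paper: invoke Theorem \ref{su11}, pass via Hardy--Littlewood majorization to the bathtub-type inequality, identify the optimal $F$ as a hyperbolic ball, and then evaluate. The only cosmetic difference is in the final step: the paper computes the value on the optimal ball via the layer-cake formula together with the explicit expression $m(\{u_{F_0}>\kappa\})=\kappa^{-2/\alpha}-1$ from Lemma \ref{mono}, whereas you move $z_0$ to $0$ by SU(1,1)-invariance and integrate $(1-|z|^2)^{2K-2}$ directly in polar coordinates---both lead to the same closed form.
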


\begin{theorem}\label{fkaxb}
	Let $\beta>\frac12$. Then, for any $\psi\in L^2(\R_+)$ with $\|\psi\|_{L^2(\R_+)} = 1$ and any measurable $E\subset\R_+\times\R$,
	$$
	\iint_{E} |\langle \psi_{a,b}, \psi \rangle|^2 \,\frac{da\,db}{a^2} \leq \frac{4\pi}{2\beta-1} \left( 1- (1+ (4\pi)^{-1} \mu(E))^{-2\beta+1} \right),
	$$
	where $d\mu(a,b) = a^{-2}da\,db$. Equality is attained if $\psi=e^{i\theta}\psi_{a_0,b_0}$ with some $a_0\in\R_+$, $b_0\in\R$, $\theta\in\R/2\pi\Z$ and $E$ is a hyperbolic ball centered at $(a_0,b_0)$.
\end{theorem}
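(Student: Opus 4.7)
The plan is to run the same two-step abstract argument that derives \eqref{eq:niti} from Theorem \ref{heisen}, now using Theorem \ref{axb} as the input. Fix $(a_0,b_0)\in\R_+\times\R$ and a unit vector $\psi\in L^2(\R_+)$, and write $d\mu(a,b)=a^{-2}da\,db$. Applying Theorem \ref{axb} to an arbitrary convex $\Phi\colon[0,1]\to\R$ and noting that the supremum there is realized at $\psi_{a_0,b_0}$ yields
\begin{equation*}
\iint_{\R_+\times\R}\Phi(|\langle\psi_{a,b},\psi\rangle|^2)\,d\mu \leq \iint_{\R_+\times\R}\Phi(|\langle\psi_{a,b},\psi_{a_0,b_0}\rangle|^2)\,d\mu.
\end{equation*}
The Hardy--Littlewood majorization argument cited in the proof of \eqref{eq:niti} applies verbatim with $d\mu$ in place of Lebesgue measure and converts this family of scalar inequalities into the statement that, for every measurable $E\subset\R_+\times\R$,
\begin{equation*}
\iint_E |\langle\psi_{a,b},\psi\rangle|^2 \,d\mu \leq \sup_{\mu(F)=\mu(E)} \iint_F |\langle\psi_{a,b},\psi_{a_0,b_0}\rangle|^2 \,d\mu.
\end{equation*}

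Next I would compute the overlap explicitly and identify the extremal sets. A direct Gamma integral using the formula for $\psi_{a,b}$ gives
\begin{equation*}
|\langle\psi_{a,b},\psi_{a_0,b_0}\rangle|^2 = \left(\frac{4 a a_0}{(a+a_0)^2+(b-b_0)^2}\right)^{2\beta}.
\end{equation*}
Viewing $(a,b)$ as a point in the upper half-plane model of the hyperbolic plane, so that $d\mu$ is the invariant area and the hyperbolic distance $\rho$ between $(a,b)$ and $(a_0,b_0)$ satisfies $\sinh^2(\rho/2) = ((a-a_0)^2+(b-b_0)^2)/(4 a a_0)$, the overlap reduces to $\cosh^{-4\beta}(\rho/2)$. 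This is strictly decreasing in the hyperbolic distance from $(a_0,b_0)$, which identifies the supremum above as being attained at the hyperbolic ball $B_r$ centered at $(a_0,b_0)$ with $\mu(B_r)=\mu(E)$. Using the polar decomposition $d\mu=\sinh(\rho)\,d\rho\,d\theta$, a direct integration gives $\mu(B_r)=4\pi\sinh^2(r/2)$, so $1+(4\pi)^{-1}\mu(B_r)=\cosh^2(r/2)$, while the substitution $u=\cosh(\rho/2)$, convergent for $\beta>\tfrac12$, yields
\begin{equation*}
\iint_{B_r}\cosh^{-4\beta}(\rho/2)\,d\mu = \frac{4\pi}{2\beta-1}\left(1-\cosh^{-(4\beta-2)}(r/2)\right),
\end{equation*}
which is exactly the right side of the claimed inequality. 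The equality statement when $\psi=e^{i\theta}\psi_{a_0,b_0}$ and $E=B_r$ then follows since all inequalities above are identities in this case.

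I do not anticipate any serious obstacle once Theorem \ref{axb} is available: the majorization step is the standard one already used in the proof of \eqref{eq:niti}, with the only subtlety being the infinite total measure (handled exactly as there, using that $|\langle\psi_{a,b},\psi\rangle|^2\leq 1$ by the analogue of Lemma \ref{linftyp} transported through the identifications in Section \ref{sec:proofs}), and the rest is an elementary hyperbolic-polar calculation that closes because the $aX+b$ coherent state overlaps are functions of hyperbolic distance alone.
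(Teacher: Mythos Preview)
Your proposal is correct and follows the same overall strategy as the paper: deduce the set inequality from Theorem \ref{axb} via Hardy--Littlewood majorization, then identify the optimal $F$ as a superlevel set of the coherent-state overlap, and finally evaluate the resulting integral. The only difference is in the bookkeeping of the last two steps. The paper, in its joint proof of Theorems \ref{fksu2}, \ref{fksu11}, \ref{fkaxb}, transfers the $aX+b$ case to the disk via the conformal map $\Sigma$ from the proof of Theorem \ref{axb}, expresses $|\langle\psi_{a,b},\psi_{a_0,b_0}\rangle|$ as $u_{F_w}$, and then computes the supremum through the layer cake formula together with the explicit distribution functions $m(\{u_F>\kappa\})$ from the proof of Lemma \ref{mono}. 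You instead stay on the upper half-plane, compute the overlap by a direct Gamma integral as $\cosh^{-4\beta}(\rho/2)$, and finish with a hyperbolic polar-coordinate integration. Your route is a bit more self-contained for this single case (no need to invoke $\Sigma$ or the $u_F$ machinery), while the paper's route has the advantage of treating all three theorems in one unified computation.
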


In Theorems \ref{fksu11} and \ref{fkaxb} by a `hyperbolic ball' we mean a geodesic ball with respect to the hyperbolic metric on $\D$ and $\C_+$ (identified with $\R_+\times\R$), respectively. 

Theorem \ref{fkaxb}, including a characterization of equality cases, has recently been proved in \cite{RaTi} by a direct adaptation of the method in \cite{NiTi}. Our proof, based on Theorem \ref{axb}, is different.

\begin{proof}[Proof of Theorems \ref{fksu2}, \ref{fksu11} and \ref{fkaxb}]
	As above, one can show that the left sides in the theorems are bounded by the supremum of the integral of $|\langle \psi_\alpha,\psi_{\alpha_0}\rangle|^2$ over sets $F$ of the same measure as $E$. Here $\alpha$ means $\omega\in\Sph^2$, $z\in\D$ and $(a,b)\in\R_+\times\R$ in the three cases, respectively, and $\alpha_0$ is a fixed such index. By the bathtub principle, the supremum over $F$ is attained at a set of the form $\{ |\langle \psi_\alpha,\psi_{\alpha_0}\rangle|>\kappa_0\}\cup G$, where $G$ is a measurable subset of $\{ |\langle \psi_\alpha,\psi_{\alpha_0}\rangle|=\kappa_0 \}$.
	
	To complete the proof we will need some explicit knowledge about the function $|\langle \psi_\alpha,\psi_{\alpha_0}\rangle|$. We choose the representation space $\mathcal H$ in Theorems \ref{fksu2} and \ref{fksu11} in the same way as in the proofs of Theorems \ref{su2} and \ref{su11}, namely as $\mathcal P_{2J}$ and $A^2_{2K}(\D)$, respectively. Then, as shown there, $|\langle \psi_\alpha,\psi_{\alpha_0}\rangle| = u_{F_w}(z)$ where $\alpha=\mathcal S(z)$ and $\alpha=z$ in the first two cases and, similarly, $\alpha_0=\mathcal S(w)$ and $\alpha_0=w$. In the third case, if $\alpha=(a,b)$, then $z=\Sigma(ia-b)$ and similarly for $\alpha_0$ and $w$. In particular, $w=0$ if we choose $\alpha_0$ to be $\omega_0=(0,0,1)$, $z=0$ and $(a_0,b_0)=(1,0)$ in the different cases. The explicit definition of $u_f$ then shows that $\{ |\langle \psi_\alpha,\psi_{\alpha_0}\rangle|>\kappa_0\}$ is a spherical cap in the first case or a hyperbolic ball in the last two cases and that in all cases $\{ |\langle \psi_\alpha,\psi_{\alpha_0}\rangle|=\kappa_0\}$ has measure zero. Thus the set $G$ above can be ignored and we have identified the optimal set in the case of a special choice of $\alpha_0$.
	
	This, in fact, yields the shape for an arbitrary choice of $\alpha_0$. Indeed, as discussed before Theorem \ref{analytic}, the functions $\alpha\mapsto \langle \psi_\alpha,\psi_{\alpha_0}\rangle$ are equimeasurable for different $\alpha_0$'s and one such function can be obtained from another by the action of SU(2) or SU(1,1). Since this maps spherical caps to spherical caps, or hyperbolic balls to hyperbolic balls, we obtain that the supremum is attained for any $\alpha_0$ at such a set.
	
	It remains to compute the supremum. It is convenient to do this in terms of the functions $u_f$. The second case and the third case can be treated together with the convention that $\alpha=2K$ in the second and $\alpha=2\beta$ in the third case. We have with an arbitrary $F\in\mathcal M$
	$$
	\int_{\{ u_F>\kappa_0\}} u_F(z)^2\,dm(z) =
	\begin{cases}
		(4\pi)^{-1} \int_{\{ |\langle \psi_\omega,\psi_{\omega_0}\rangle| >\kappa_0\}} |\langle \psi_\omega,\psi_{\omega_0}\rangle|^2 d\omega \,, \\
		\pi^{-1} \int_{\{ |\langle \psi_z,\psi_{z_0}\rangle| >\kappa_0\}} |\langle \psi_z,\psi_{z_0}\rangle|^2 \frac{dA(z)}{(1-|z|^2)^2} \,.
	\end{cases}
	$$
	Meanwhile, by the layer cake formula,
	\begin{align}
		\label{eq:layercakefk}
		\int_{\{ u_F>\kappa_0\}} u_F(z)^2\,dm(z) & = 2 \int_0^{\kappa_0} m(\{ u_F>\kappa_0\})\kappa\,d\kappa + 2 \int_0^{\kappa_0} m(\{ u_F>\kappa\})\kappa\,d\kappa \notag \\
		& = \int_\Omega u_F(z)^2\,dm(z) - 2 \int_0^{\kappa_0}\!\! \left( m(\{ u_F>\kappa\}) - m(\{ u_F>\kappa_0\}) \right)\kappa\,d\kappa \,.
	\end{align}
	The first term on the right side is equal to
	$$
	\int_\Omega u_F(z)^2\,dm(z) = c\, \vvvert F \vvvert^2 = c
	$$
	with $c=(2J+1)^{-1}$ and $c=(\alpha-1)^{-1}$ in the different cases. For the second term on the right side of \eqref{eq:layercakefk} we use the explicit expressions for $m(\{ u_F>\kappa\})$ from the proof of Lemma \ref{mono} and get, after a computation,
	\begin{equation}
		\label{eq:kappa1}
		2 \int_0^{\kappa_0}\!\! \left( m(\{ u_F>\kappa\}) - m(\{ u_F>\kappa_0\}) \right)\kappa\,d\kappa =
		\begin{cases}
			c\,\kappa_0^\frac{2J+1}{J} \,,\\
			c\,\kappa_0^\frac{2(\alpha-1)}{\alpha} \,.
		\end{cases}
	\end{equation}
	This gives the expression of the supremum in terms of $\kappa_0$. The parameter $\kappa_0$ satisfies
	\begin{equation}
		\label{eq:kappa0}
		m(\{ u_F>\kappa_0\}) = 
		\begin{cases} 
			(4\pi)^{-1} |E| & \qquad\text{in the case of Theorem \ref{fksu2}} \,, \\
			m(E) & \qquad\text{in the case of Theorem \ref{fksu11}} \,, \\
			(4\pi)^{-1}\mu(E) & \qquad\text{in the case of Theorem \ref{fkaxb}} \,.
		\end{cases}
	\end{equation}
	(In the last case, we used the fact that $(4\pi)^{-1}\mu(E)= m(\Sigma(\tilde E))$, where $\Sigma$ is the conformal map from $\C_+$ to $\D$ from the proof of Theorem \ref{axb} and $\tilde E\subset\C_+$ is obtained from $E$ by identifying $(a,b)\in\R_+\times\R$ with $ia-b\in\C_+$.) Using \eqref{eq:kappa0} and the expressions from the proof of Lemma \ref{mono}, we can express $\kappa_0$ in terms of the measure of $E$. Inserting this into \eqref{eq:kappa1} gives an expression for the second term on the right side of \eqref{eq:layercakefk}. This leads to the claimed explicit form of the upper bound.
\end{proof}

\bibliographystyle{amsalpha}

\end{document}